\def\runningfoot{\def\@runningfoot{}}
\def\firstfoot{\def\@firstfoot{}}
\renewcommand{\footnotesize}{\small}
\definecolor{black}{rgb}{0,0,0}
\definecolor{grey}{rgb}{0.8,0.8,0.8}
\definecolor{red}{rgb}{1,0,0}
\definecolor{green}{rgb}{0,1,0}
\definecolor{darkgreen}{rgb}{0,0.5,0}
\definecolor{darkpurple}{rgb}{0.5,0,0.5}
\definecolor{darkdarkpurple}{rgb}{0.3,0,0.3}
\definecolor{blue}{rgb}{0,0,1}
\definecolor{shadegreen}{rgb}{0.95,1,0.95}
\definecolor{shadeblue}{rgb}{0.95,0.95,1}
\definecolor{shadered}{rgb}{1,0.85,0.85}
\definecolor{shadegrey}{rgb}{0.85,0.85,0.85}
\definecolor{oddRowGrey}{rgb}{0.80,0.80,0.80}
\definecolor{evenRowGrey}{rgb}{0.85,0.85,0.85}
\newcommand{\red}[1]{{\color{red} #1}}
\newcommand{\cut}[1]{{}}
\newcommand{\mypar}[1]{\smallskip\noindent\textbf{{#1}.}}
\DeclareMathAlphabet{\mathbbold}{U}{bbold}{m}{n}
\newtheorem{Theorem}{Theorem}
\newtheorem{Definition}{Definition}
\newtheorem{Example}{Example}
\newcommand{\NPhard}{NP-hard}
\newcommand{\NPcomplete}{NP-complete}
\newcommand{\theRADB}{the relational algebra interpreter}
\newcommand{\aRADB}{a relational algebra interpreter}
\newcommand{\OurSys}{\textsc{RATest}}
\newcommand{\spjudBasic}{\textsc{Basic}}
\newcommand{\spjudOpt}{\textsc{$Opt_{\sigma}$}}
\newcommand{\aggBasic}{\textsc{Agg-Basic}}
\newcommand{\aggParam}{\textsc{Agg-Param}}
\newcommand{\aggHeu}{\textsc{Agg-Opt}}
\newcommand{\eg}{{e.g.}}
\newcommand{\ie}{{i.e.}}
\newcommand{\constraints}{{\Gamma}}
\newcommand{\witset}{{\mathcal W}}
\newcommand{\SCP}{{\tt SCP}}
\newcommand{\SWP}{{\tt SWP}}
\newcommand{\Prv}{{\tt Prv}}
\newcommand{\selection}{\sigma}
\newcommand{\projection}{\pi}
\newcommand{\join}{\Join}
\newcommand{\union}{\cup}
\newcommand{\aggregation}[2]{\gamma_{{\tt #1, #2}}}  
\renewcommand\footnotetextcopyrightpermission[1]{} 
\begin{document}

\title{Explaining Wrong Queries Using Small Examples}
\author{Zhengjie Miao, Sudeepa Roy, and Jun Yang}
    \affiliation{%
		\institution{Duke University}
    }
\email{{zjmiao,sudeepa,junyang}@cs.duke.edu}
\renewcommand{\shortauthors}{Z. Miao, S. Roy, and J. Yang}

\definecolor{lstpurple}{rgb}{0.5,0,0.5}
\definecolor{lstred}{rgb}{1,0,0}
\definecolor{lstreddark}{rgb}{0.7,0,0}
\definecolor{lstredl}{rgb}{0.64,0.08,0.08}
\definecolor{lstmildblue}{rgb}{0.66,0.72,0.78}
\definecolor{lstblue}{rgb}{0,0,1}
\definecolor{lstmildgreen}{rgb}{0.42,0.53,0.39}
\definecolor{lstgreen}{rgb}{0,0.5,0}
\definecolor{lstorangedark}{rgb}{0.6,0.3,0}	
\definecolor{lstorange}{rgb}{0.75,0.52,0.005}
\definecolor{lstorangelight}{rgb}{0.89,0.81,0.67}
\definecolor{lstbeige}{rgb}{0.90,0.86,0.45}

\DeclareFontShape{OT1}{cmtt}{bx}{n}{<5><6><7><8><9><10><10.95><12><14.4><17.28><20.74><24.88>cmttb10}{}

\lstdefinelanguage{smtlib2}{
  alsoletter=-,
  morekeywords={declare-const,define-fun,assert,minimize,maximize,check-sat,get-objectives,and,or,not,distinct},
  extendedchars=false,
  keywordstyle=\bfseries\color{lstpurple},
  deletekeywords={Int,Bool},
  keywords=[2]{Int,Bool},
  keywordstyle=[2]\color{lstblue},
}

\lstdefinestyle{psql}
{
tabsize=2,
basicstyle=\small\upshape\ttfamily,
language=SQL,
morekeywords={PROVENANCE,BASERELATION,INFLUENCE,COPY,ON,TRANSPROV,TRANSSQL,TRANSXML,CONTRIBUTION,COMPLETE,TRANSITIVE,NONTRANSITIVE,EXPLAIN,SQLTEXT,GRAPH,IS,ANNOT,THIS,XSLT,MAPPROV,cxpath,OF,TRANSACTION,SERIALIZABLE,COMMITTED,INSERT,INTO,WITH,SCN,UPDATED},
extendedchars=false,
keywordstyle=\bfseries,
mathescape=true,
escapechar=@,
sensitive=true
}

\lstdefinestyle{psqlcolor}
{
tabsize=2,
basicstyle=\small\upshape\ttfamily,
language=SQL,
morekeywords={PROVENANCE,BASERELATION,INFLUENCE,COPY,ON,TRANSPROV,TRANSSQL,TRANSXML,CONTRIBUTION,COMPLETE,TRANSITIVE,NONTRANSITIVE,EXPLAIN,SQLTEXT,GRAPH,IS,ANNOT,THIS,XSLT,MAPPROV,cxpath,OF,TRANSACTION,SERIALIZABLE,COMMITTED,INSERT,INTO,WITH,SCN,UPDATED},
extendedchars=false,
keywordstyle=\bfseries\color{lstpurple},
deletekeywords={count,min,max,avg,sum},
keywords=[2]{count,min,max,avg,sum},
keywordstyle=[2]\color{lstblue},
stringstyle=\color{lstreddark},
commentstyle=\color{lstgreen},
mathescape=true,
escapechar=@,
sensitive=true
}

\lstdefinestyle{datalog}
{
basicstyle=\footnotesize\upshape\ttfamily,
language=prolog
}

\lstdefinestyle{pseudocode}
{
  tabsize=3,
  basicstyle=\small,
  language=c,
  morekeywords={if,else,foreach,case,return,in,or},
  extendedchars=true,
  mathescape=true,
  literate={:=}{{$\gets$}}1 {<=}{{$\leq$}}1 {!=}{{$\neq$}}1 {append}{{$\listconcat$}}1 {calP}{{$\cal P$}}{2},
  keywordstyle=\color{lstpurple},
  escapechar=&,
  numbers=left,
  numberstyle=\color{lstgreen}\small\bfseries, 
  stepnumber=1, 
  numbersep=5pt,
}

\lstdefinestyle{xmlstyle}
{
  tabsize=3,
  basicstyle=\small,
  language=xml,
  extendedchars=true,
  mathescape=true,
  escapechar=£,
  tagstyle=\color{keywordpurple},
  usekeywordsintag=true,
  morekeywords={alias,name,id},
  keywordstyle=\color{lstred}
}

\lstdefinestyle{smtlib2}
{
tabsize=2,
basicstyle=\scriptsize\upshape\ttfamily,
numbers=left,
stepnumber=1,
breaklines=true,
stringstyle=\color{lstreddark},
commentstyle=\color{lstgreen},
mathescape=true,
escapechar=@,
sensitive=true
}


\lstset{style=psqlcolor}


\begin{abstract}

  For testing the correctness of SQL queries, e.g., evaluating student
  submissions in a database course, a standard practice is to execute
  the query in question on some test database instance and compare its
  result with that of the correct query.  Given two queries $Q_1$ and
  $Q_2$, we say that a database instance $D$ is a counterexample (for
  $Q_1$ and $Q_2$) if $Q_1(D)$ differs from $Q_2(D)$; such a
  counterexample can serve as an explanation of why $Q_1$ and $Q_2$
  are not equivalent.  While the test database instance may serve as a
  counterexample, it may be too large or complex to read and
  understand where the inequivalence comes from.  Therefore, in this
  paper, given a known counterexample $D$ for $Q_1$ and $Q_2$, we aim
  to find the smallest counterexample $D' \subseteq D$ where
  $Q_1(D') \neq Q_2(D')$.  The problem in general is NP-hard.  We give
  a suite of algorithms for finding the smallest counterexample for
  different classes of queries, some more tractable than others.  We
  also present an efficient provenance-based algorithm for SPJUD
  queries that uses a constraint solver, and extend it to more complex
  queries with aggregation, group-by, and nested queries.  We perform
  extensive experiments indicating the effectiveness and scalability
  of our solution on student queries from an undergraduate database
  course and on queries from the TPC-H benchmark.  We also report a
  user study from the course where we deployed our tool to help
  students with an assignment on relational algebra.
\end{abstract}

\maketitle
\section{Introduction}
\label{sec:introduction}



Correctness of database queries is often validated by evaluating the
queries with respect to a \emph{reference query} and a \emph{reference
  database instance} for testing. A primary application is in teaching
students how to write SQL queries in database courses in academic
institutions and evaluating their solutions. Typically, there is a
test database instance $D$, and a correct query $Q_1$. The correctness
of the query $Q_2$ submitted by a student is validated by checking
whether $Q_1(D) = Q_2(D)$. Assuming that $Q_2$ is at least
syntactically correct and its output schema is compatible with that of
$Q_2$ (which can be easily verified), if $Q_2$ does not solve the
intended problem, then there will be at least one tuple in $Q_1(D)$
and not in $Q_2(D)$, or in $Q_2(D)$ but not in $Q_1(D)$.
Another application scenario is when people rewrite complex SQL
queries to obtain better performance. One approach for checking the
correctness of complex rewritten queries is regression testing:
execute the rewritten query $Q_2$ on test instances $D$ to make sure
that $Q_2$ returns the same results as the original query
$Q_1$. Finding an answer tuple differentiating two queries and
providing an \emph{explanation} for its existence helps students or
developers understand the error and fix their queries.

In both applications above, if the test database $D$ is large---either
because it is a large real data set or it is synthesized to be large
enough to test scalability or ensure coverage of numerous corner
cases---it would take much effort to understand where the
inequivalence of two queries came from. Suppose a database course in a
university uses the DBLP database \cite{dblpdata} in an assignment on
SQL or relational algebra (RA). The DBLP database has more than 5
million entries, and giving this entire database (or the outputs) to
students as a counterexample to their query is not much effective. In
practice, the mistakes in most of the queries can be explained with
only a small number of tuples, which is much more useful as a
counterexample for debugging.

Of course, one could generate a completely different counterexample
$D'$ altogether, but using the test database instance $D$ to help
generate a counterexample has some distinct advantages.  First, it
helps to preserve the same context for users by using the same data
values and relationships.  Second, knowing that the original instance
$D$ is already a counterexample can help create the counterexample
$D'$ more efficiently. This motivates the problem we study in this
paper: \emph{given a reference database $D$, a reference query $Q_1$,
  and a test query $Q_2$ such that $Q_1(D) \neq Q_2(D)$, find a
  counterexample as a subinstance $D' \subseteq D$ such that
  $Q_1(D') \neq Q_2(D')$ and the size of $D'$ is minimized.} We
illustrate the setting with an example.


\begin{Example}
\label{example:intro1}
Consider the following two relation schema storing information about students and course registrations in a university: 
${\tt Registration(name, course, dept, grade)}$ and ${\tt Student(name, major)}$.
In a database course, suppose the instructor asked the students to write a SQL query to find students who registered for exactly one Computer Science (CS) course. The test instances $S, R$ of these two tables are given in Figure~\ref{fig:running-instance}. The following query $Q_1$ solves this problem correctly:
\\
\begin{figure}[t]
\subfigure[Table {\tt Student $S$}\label{fig:tab-student}]{
{\scriptsize
	\begin{tabular}{|c|c||c|}
	\hline
	{\tt name} & {\tt major} &  \\ \hline
	Mary & CS & $t_1$ \\
	John & ECON & $t_2$ \\ 
	Jesse & CS & $t_3$ \\
	\hline
    \end{tabular}
    }
}
\subfigure[Table {\tt Registration $R$}\label{fig:registration}]{
{\scriptsize
	\begin{tabular}{|c|c|c|c||c|}
	\hline
{\tt name} & {\tt course} & {\tt dept} & {\tt grade} &  \\ \hline
	Mary & 216 & CS & 100 & $t_4$  \\
	Mary & 230 & CS & 75 & $t_5$ \\
	Mary & 208D & ECON & 95 & $t_6$ \\
	John & 316 & CS & 90 & $t_7$ \\
	John & 208D & ECON & 88 & $t_8$ \\
	Jesse & 216 & CS & 95 & $t_9$ \\
	Jesse & 316 & CS & 90 & $t_{10}$ \\
	Jesse & 330 & CS & 85 & $t_{11}$ \\
	\hline
    \end{tabular}
    }
}
\caption{\label{fig:running-instance} Toy instances of tables in Example~\ref{example:intro1}. Identifiers are shown for all tuples.}
\vspace{-5mm}
\end{figure}
{\scriptsize
\begin{lstlisting}
$Q_1:$ SELECT s.name,s.major
FROM Student s, Registration r
WHERE s.name = r.name AND r.dept = 'CS'
EXCEPT
SELECT s.name,s.major
FROM Student s, Registration r1, 
	Registration r2
WHERE s.name = r1.name AND s.name = 
	r2.name AND r1.course <> r2.course AND
	r1.dept = 'CS' AND r2.dept = 'CS'
\end{lstlisting}
}
However, one student wrote $Q_2$, which actually finds students who registered for one or more CS  courses.
{\scriptsize
\begin{lstlisting}
$Q_2:$ SELECT s.name,s.major
FROM Student s, Registration r
WHERE s.name = r.name AND r.dept = 'CS'
\end{lstlisting}
}
\vspace{-1mm}
\begin{figure}[t]
\subfigure[Result of $Q_1$\label{fig:ans-Q1}]{
{\scriptsize
		\begin{tabular}{|c|c||c|}
		\hline
		{\tt name} & {\tt major} & \\ \hline
		John & ECON & $r_1$\\
		\hline
		\end{tabular}	
}
}
\subfigure[Result of $Q_2$\label{fig:ans-Q2}]{
{\scriptsize
		\begin{tabular}{|c|c||c|}
		\hline
		{\tt name} & {\tt major} & \\ \hline
		Mary & CS & $r_2$\\
		John & ECON & $r_3$\\
		Jesse & CS & $r_4$\\
		\hline
		\end{tabular}
		}
		}
		\caption{\label{fig:running-answers}Results of $Q_1, Q_2$ in Example~\ref{example:intro1}}
\vspace{-2mm}
\end{figure}
The results of queries $Q_1$ and $Q_2$ are given in Figure~\ref{fig:running-answers}.
The tuples $r_2 $ = {\tt(Mary, CS)} and $r_3$ = {\tt (Jesse, CS)} are in the output of $Q_2$ but not in the output of $Q_1$. 
To convince the student that his query is wrong, the instructor can provide the instances $S, R$ as a counter example comprising 11 tuples. 
However, a smaller and better counterexample can simply contain three tuples (\eg, $t_1, t_4, t_5$) to illustrate the inequivalence of $Q_1, Q_2$. The benefit will be much larger if we consider a real enrollment database from a university, whereas the size of the counterexample would remain the same.
\end{Example}


\par
Prior work in the database community mainly focused on the theoretical study of decidability \cite{nutt1998deciding,cohen2005equivalences} or generating a comprehensive set of test databases to ``kill'' as many erroneous queries as possible \cite{chandra2015data}, 
but does not pay much attention to explaining why two queries are inequivalent. 
There are recent systems that aim to generate counterexamples for SQL
queries. Cosette developed by Chu et al.\cite{chu2017cosette} used
formal methods that encodes SQL queries into logic formulas to
generate a counterexample that proves two SQL queries are
inequivalent. It generates counterexamples iteratively, so it must
return the smallest one. XData by Chandra et al.\cite{chandra2015data}
generates test data using mutation
techniques. 
However, counterexamples generated by such systems can lead to
arbitrary values, which may not be meaningful to the user.  Our
approach instead ensures that the user sees familiar values and
relationships already present in the test database instances.

\textbf{Our contributions.~} We make the following contributions in this paper. 
\begin{itemize}
\vspace{-1mm}
\item We formally define the smallest counterexample problem, and connect it to \emph{data provenance} with the definition of the smallest witness problem (Section~\ref{sec:prelim}).
\item We give complexity results (NP-hardness proofs and poly-time algorithms) in terms of both data and combined complexity for different subclasses of SPJUDA queries
(Section~\ref{sec:spjud}).
\item We give practical algorithms for SPJUD queries using SAT and SMT solvers, and discuss a suite of optimizations to improve the efficiency (Section~\ref{sec:general-algo}).
\item For aggregate queries, we illustrate the new challenges, and propose new approaches to address these challenges, which includes applying \emph{provenance for aggregate queries} \cite{amsterdamer2011provenance}, adapting the problem definition by parameterizing the queries, and rewriting the aggregate queries to reduce the number of tuples involved in the constraints to the SMT solver (Section~\ref{sec:algo_agg}).
\item We describe our implementation of the end-to-end
  \OurSys{}~system, which has been deployed in an undergraduate course
  (Section~\ref{sec:impl}).
\item We give extensive experimental results in Section~\ref{sec:experiments} to show how our approach can scale to large datasets (100K tuples for queries from the course and scale-1 for TPC-H queries). Also, we demonstrate that our optimizations reduce the size of the counterexample.
\item We provide a large, thorough user study from the undergraduate
  database course, where we let students use \OurSys{} to debug their
  RA queries in a homework. Quantitative analysis of usage statistics
  and homework scores shows that use of \OurSys{} improved student
  performance; anonymous survey of the students also indicates that
  they found \OurSys{} helpful to their learning
  (Section~\ref{sec:user-study}).
\end{itemize}

\cut{
Moreover, in practice, developers usually test their SQL queries by executing on large real-world data or large randomized generated data --- in this case, generating a counterexample from scratch may not be the best way for helping people understand the query inequivalence since it does not utilize the information from test database instances. Therefore, if people already know that two queries are inequivalent by testing against an database instance, the counterexample-generating methods have two major drawbacks in providing explanations. On one hand, knowing that the two queries are inequivalent with a test database instance ought to make it possible to find the smallest counterexamples faster: Cosette takes seconds on average to find counterexamples for inequivalent queries. On the other hand, the generated database instances may not be consistent with the original ones and thus may be the easiest way for people to understand: users are more familiar with their test database instances, also the values in the test instances may be meaningful but the generated counterexample is likely to be meaningless.

To address these challenges, first we formalized the smallest counterexample problem, discussed the hardness of the problem and proposed solutions for different query classes, and present a constraint solver based method. Based on data provenance, we can obtain logical formulas indicating whether a tuple is in the query result. For instance, given the logical formula $t_1 \ || \ (t_2\  \&\& \ t_3)$, a constraint solver\cite{de2008z3} can return a model $\{t_1: True, \ t_2: True, \ t_3: False\}$. And then we can build a database instance according to the model. 

One application scenario is to help people learn SQL queries: to grade SQL assignments in a database course, instructors usually run students' queries on a test input, and then compare their output with the output by a standard solution. However, students can only know ``correct'' or ``incorrect'' unless further comments are given. Returning counter examples will enable them to learn to write SQL interactively.
}

\vspace{-2mm}
\section{Preliminaries}
\label{sec:prelim}
We consider the class of Select (S)-Project (P)-Join (J)-Union(U)-Difference(D)-Aggregate(A) queries expressed as relational algebra (RA) expressions extended with aggregates. However, we will use RA form and SQL form of queries interchangeably. A subset of these operators using abbreviations will denote the corresponding subclass of such queries; \eg, PJ queries will denote queries involving only projection and join operations. 
\par
For a database instance $D$ (involving one or more relational tables) and a query $Q$, $Q(D)$ will denote the output of $Q$ on $D$. Let $\constraints$ denote a set of integrity constraints on the schema of the database instance $D$. We consider the following standard integrity constraints: keys, foreign keys, not null, and functional dependencies. If $D$ satisfies $\constraints$, we write $D \models \constraints$. We use $|D|$ to denote the total number of tuples in $D$.
\par
We will use unique identifiers to refer to the tuples in the database and query answers. In our example tables, they are written in the right-most column (see Figures~\ref{fig:running-instance} and \ref{fig:running-answers}), \eg, in Figure~\ref{fig:running-instance}, $t_1$ refers to the tuple ${\tt Student}(Mary, CS)$.

\subsection{Smallest Counterexample Problem}\label{sec:counterexample}



Consider two queries $Q_1$ and $Q_2$ such that $Q_1(D) \neq Q_2(D)$ on a database instance $D$ such that $D \models \constraints$ for a given set of integrity constraints $\constraints$. 
In other words, $D$ explains why $Q_1$ and $Q_2$ are inequivalent.
Based on $D$, we want to find a small counterexample $D' \subseteq D$ that also explains the inequivalence of $Q_1$ and $Q_2$.

\begin{Definition}[Counterexample and The Smallest Counterexample Problem]\label{def:SCP}
Given a database instance $D$, a set of integrity constraints $\constraints$ s.t. $D \models \constraints$, two queries $Q_1$, $Q_2$ where $Q_1(D) \neq Q_2(D)$, a counterexample is a subinstance $D' \subseteq D$ s.t. $D' \models \constraints$ and $Q_1(D') \neq Q_2(D')$. In particular, $D$ is a trivial counterexample.
\par 
The goal of the \emph{smallest counterexample problem  \\($\SCP(D, Q_1, Q_2)$)} is to find a counterexample $D' \subseteq D$ such that 
the total number of tuples in $D'$ is minimized  (\ie, for all counterexamples $D'' \subseteq D$, $|D''| \geq |D'|$).
\end{Definition}
In the above definition, we assume that the results of the two queries are \emph{union-compatible} (\ie, $Q_1(D), Q_2(D)$ have the same schema), which is easy to check syntactically (otherwise the difference in their schema serves as the reason of their inequivalence). 

Note that 
keys, functional dependencies, and \emph{not null} constraints  are closed under subinstances, \ie, for such constraints $\constraints$, if $D \models \constraints$, then $\forall D' \subseteq D$, $D' \models \constraints$.
Therefore, for such constraints, no additional consideration is needed. This is not true for referential constraints or foreign keys, which we explicitly consider in our algorithms. From now on, where it is clear from the context, we will implicitly assume that the $D' \subseteq D$ discussed as counterexamples satisfy the given constraints $\constraints$.

\begin{Example} 
\label{example:smallest-1}
In Example~\ref{example:intro1} and Figure~\ref{fig:running-instance}, the given test instances $S$ and $R$ of input relations {\tt Student} and {\tt Registration} already form a counterexample for $Q_1$ and $Q_2$. However, some subinstances of $S, R$ are also counterexamples. Among these subinstances, $S'$ $=\{t_1\}$, $R'$ $= \{t_4,t_5\}$; or $S''$ $=\{t_3\}$, $R''$ $= \{t_9,t_{10}\}$ are two smallest counterexamples (there are two other smallest counter examples varying the two courses of Jesse), i.e. there are no counterexamples with less than 3 tuples.
\end{Example}


Our goal is to explain the query inequivalence to users by showing the smallest counterexample over which the two queries return different results. Even in our running example with a toy database instance, this reduced the number of tuples from 11 to only 3, whereas the benefit is likely to be much more for test database instances in practice as observed in our experiments. The brute-force method to find the smallest counterexample is to enumerate all subinstances of $D$, and search for the smallest subinstance $D'$ where $Q_1(D')$ and $Q_2(D')$ are different. However, enumerating all possible subinstances is inefficient and it does not utilize the information that $D$ is already a counterexample. Therefore, to solve this problem more efficiently, we relate this problem to the concepts of \emph{witnesses} and  \emph{data provenance} as discussed in the next two subsections.

\subsection{Smallest Witness Problem}\label{sec:witness}


Buneman et al. \cite{buneman2001and} proposed the concept of \emph{witnesses} to capture \emph{why-provenance} of a query answer. Intuitively, a witness is a collection of input tuples that provides a \emph{proof} for a given output tuple. Formally, given a database instance $D$, a query $Q$, and a tuple $t$ $\in Q(D)$, a witness for $t$ w.r.t. $Q$ and $D$ is a subinstance $D' \subset D$ where $t \in Q(D')$. For instance, in Example~\ref{example:intro1}, $\{t_1, t_4\}$, $\{t_1, t_5\}$, and $\{t_1, t_4, t_5\}$ are three witnesses of the output tuple $r_2$ w.r.t. $Q_2$ and $D$. We use $\witset(Q, D, t)$ to denote the set of all witnesses for $t \in Q(D)$ w.r.t. $Q$ and $D$.

In the smallest counterexample problem $\SCP(D, Q_1, Q_2)$, since $Q_1(D) \neq Q_2(D)$, there must exist a tuple $t$ such that $t \in Q_1(D) \setminus Q_2(D)$, or, $t \in Q_2(D) \setminus Q_1(D)$. Since $Q_1, Q_2$ are assumed to be union-compatible,  we can construct two queries $Q^{'}_1 = Q_1 - Q_2$ and $Q^{'}_2 = Q_2 - Q_1$. Therefore, for any counterexample $D' \subseteq D$ for $Q_1$ and $Q_2$, $\exists t$ such that $t \in Q^{'}_1(D')$, or, $t \in Q^{'}_2(D')$. Given such an answer tuple $t$ differentiating $Q_1, Q_2$, we say that $D'$ \emph{witnesses} the tuple $t$ in the result of $Q^{'}_1$ or $Q^{'}_2$. 

A witness may contain many tuples and is sensitive to the query structure. Buneman et al. \cite{buneman2001and} defined minimal witness as a minimal element of $\witset(Q, D ,t)$, \ie, for a minimal witness $w \in \witset(Q, D, t)$, there exist no other witnesses $w' \in \witset(Q, D, t)$ such that $w' \subset w$. In Example~\ref{example:intro1}, $\{t_1, t_4\}$ and $\{t_1, t_5\}$ are minimal witnesses of the output tuple $r_2$ w.r.t $Q_2$ and $D$, but $\{t_1, t_4, t_5\}$ is not. In particular, a witness with the smallest cardinality must be a minimal witness.






\begin{Definition}[Smallest Witness Problem]\label{def:SWP}
Given a database instance D, two union-compatible queries $Q_1$ and $Q_2$ s.t. $Q_1(D) \neq Q_2(D)$, and a tuple $t$ s.t. $t \in Q_1(D) \setminus Q_2(D)$ or $t \in Q_2(D) \setminus Q_1(D)$, the goal of the smallest witness problem ($\SWP(D, Q_1, Q_2, t)$) is to find a witness $w \in \witset(Q_1 - Q_2, D, t) \cup \witset(Q_2 - Q_1, D, t)$ such that the total number of tuples in $w$ is minimized.
\end{Definition}

We can reduce the smallest counterexample problem \\$\SCP(D, Q_1, Q_2)$ into the smallest witness problem $\SWP(D, Q_1,$ \\$Q_2, t)$ by enumerating all possible output tuples in the difference of $Q_1(D)$ and $Q_2(D)$,  solving $\SWP(D, Q_1, Q_2, t)$, and finding the globally minimum witness across all such $t$-s.  
$$
\SCP(D, Q_1, Q_2) = \min_{t \in (Q_1(D) \setminus Q_2(D)) \cup (Q_2(D) \setminus Q_1(D))} \SWP(D, Q_1, Q_2, t)\\
$$

From now on, without loss of generality, we will assume that in the smallest counterexample problem $\SCP(D, Q_1, Q_2)$, there exists a tuple $t \in Q_1(D)$ but $t \notin Q_2(D)$. In the rest of the paper, we will mainly focus on the smallest witness problem $\SWP(D, Q_1, Q_2, t)$ for such a tuple, primarily due to the fact that it provides more efficient solutions and allows optimizations compared to \SCP.  We further discuss the connection between \SCP\ and \SWP\ in Section~\ref{sec:general-algo} and in \cut{Appendix~\ref{sec:scp-vs-swp}.}Section ~\ref{sec:experiments}.




\subsection{Boolean How-Provenance}\label{sec:how-prov}


Buneman et al.\cite{buneman2001and} formally introduced the why-provenance model that captures witnesses for a tuple $t$ in the result of a query $Q$ on a database instance $D$. However, 
it lacks an efficient method to compute the smallest witness from why-provenance. In order to compute the smallest witness efficiently for general SPJUD queries, we use the concept of \emph{how-provenance} or \emph{lineage} \cite{green2007provenance,amsterdamer2011limitations}. How-provenance encodes how a given output tuple is derived from the given input tuples using a Boolean expression, and its first use can be traced back to 
Imilienski and Lipski \cite{Imielinski:1984:IIR:1634.1886} who used it to describe incomplete databases or c-tables.
The computation of how-provenance of an output tuple $t \in Q(D)$, denoted by $\Prv_{Q(D)}(t)$ or $\Prv(t)$ when clear from the context, is well known and intuitive: tuples in the given input relations are annotated with unique identifiers (as shown in the right-most columns in Figure~\ref{fig:running-instance}). As the query $Q$ executes, for joint usages of sub-expressions (joins), their annotations are combined with conjunction ($\wedge$ or $\cdot$), and for alternative usages of sub-expressions (projections or unions), the annotations are combined with disjunction ($\vee$ or $+$). For simplicity, we use $+$ for disjunction, and omit symbols for conjunction. For instance, in Example~\ref{example:intro1}, in $Q_2(D)$,
\begin{equation}
\Prv_{Q_2(D)}(r_2) = t_1t_4 + t_1t_5 = t_1(t_4 + t_5) = \phi_1  {\rm (say)} \label{equn:prov-r2}
\end{equation}
For set difference operation, consider $R = R_1 - R_2$, where all tuples in $R_1, R_2$ are annotated with  how-provenance. If a tuple $t$ appears in $R$, it must appear in $R_1$. Suppose $\Prv_{R_1}(t) = \phi$. If $t$ does not appear in $R_2$, $\Prv_R(t) = \phi$. If $t$ does appear in $R_2$ with $\Prv_{R_2}(t) = \psi$, then $\Prv_{R}(t) = \phi \cdot \overline{\psi}$, where $\overline{\psi} = \neg \psi$ denotes the negation of the Boolean expression $\psi$. This implies $t$ appears in the final results of $R$ if $t$ appears in $R_1$ but not in $R_2$. 

\begin{example}\label{eg:queryplan}
In Example~\ref{example:intro1}, consider the following RA expressions for $Q_2$ and $Q_1$, using abbreviations ${\tt S}$ and ${\tt R}$ for ${\tt Students}$ and ${\tt Registration}$, where $\Join$ denotes natural join (abusing the form of RA for simplicity).
\begin{equation}
Q_2  = {\tt \pi_{name, major} \sigma_{dept = 'CS'} (S \Join R)}\label{equn:Q2}
\end{equation}
Suppose $Q_3 = {\tt \pi_{name, major} \sigma_{\eta} (S \Join R ~r1 \Join R ~r2)}$, where $\eta$ denotes the selection condition:  $r1.dept = 'CS' \wedge r2.dept = 'CS' \wedge r1.course != r2.course$. Then 
$Q_1  = Q_2 - Q_3$. 
Consider the result tuple $r_2 = (Mary, CS)$, which is in $(Q_2 - Q_1)(D)$ (Figure~\ref{fig:running-answers}). The provenance of $r_2 = (Mary, CS)$ in $Q_2(D)$ is given in Equation~(\ref{equn:prov-r2}). It does not appear in $Q_1(D)$ since it appears in both $Q_2, Q_3$ in (\ref{equn:Q2}). For $Q_3$,
$\Prv_{Q_3(D)}(r_2) = t_1t_4t_5 = \phi_2 {\rm (say)}$.
Hence, $\Prv_{Q_1(D)}(r_2)$ $=\phi_1 \cdot \overline{\phi_2}$, and 
$\Prv_{(Q_2-Q_1)(D)}(r_2)$
$ = \phi_1 \cdot \overline{[\phi_1 \cdot \overline{\phi_2}]}$
$=  \phi_1 \cdot [\overline{\phi_1} + \phi_2]$
 $ =  \phi_1 \cdot \phi_2$
 $ =  (t_1(t_4 + t_5)) \cdot (t_1t_4t_5)$
$ =  t_1t_4t_5$.
In other words, the tuple $(Mary, CS)$ can distinguish the queries $Q_1, Q_2$ in a small witness $S' = \{t_1\}, R' = \{t_4, t_5\}$, which solves both $\SWP$ and $\SCP$ problems. 
\end{example}
For the above example, the smallest witness or the smallest counterexample could be found by inspection, since $Q_1, Q_2$ are similar. For arbitrary and more complex queries, how-provenance gives a systematic approach to find a small witness as we will discuss in the following two sections.

\cut{
\par
\textbf{Provenance to compute smallest witness.~} \red{may be merged with section 4..} Instead of $(Mary, CS)$, if we considered $r_4 = (Jesse, CS) \in (Q_2 - Q_1)(D)$, then it can be verified that $\phi_1' = t_3(t_9 + t_{10} + t_{11})$, $\phi_2' = t_3(t_9t_{10} + t_{9}t_{11} + t_{10}t_{11})$, and $\Prv_{(Q_2-Q_1)(D)}((Jesse, CS))$ $ = \phi_1' \cdot \phi_2' = \phi_2'$ $= t_3t_9t_{10} + t_3 t_{9}t_{11} + t_3t_{10}t_{11}$. Looking at the \emph{minterms} of this DNF expression, we get three choices of smallest witnesses of $(Jesse, CS)$, where ${\tt Student}$ contains $t_3$, and ${\tt Registration}$ contains two tuples out of $t_9, t_{10}, t_{11}$. This gives an algorithm, albeit not necessarily efficient, to solve the $\SWP(D, Q_1, Q_2, t)$ problem: (i) Compute the how-provenance of $t$, $\phi_t = \Prv_{(Q_1-Q_2)(D)}(t)$; (ii) Expand $\phi_t$ into the equivalent DNF form, (iii) Apply absorption ($a + ab = a$), commutativity ($ab = ba$), and idempotence ($a+a = a$), (iv) Return any minterm of the resulting solution as the smallest witness. While the provenance as an arbitrary Boolean  formula can be computed in polynomial time (and is of poly-size), expansion into DNF may be  exponential  in the size of the database $|D|$ if the queries are not monotone. 
}


\textbf{Aggregates.~} In the next two sections, we discuss algorithms and complexity results for SPJUD queries. As we discuss in Section~\ref{sec:algo_agg}, aggregate queries entail new challenges, where we adapt the definitions of optimization problems accordingly and discuss solutions.

\cut{
for $Q_1$,  even $(Mary, CS)$ is not in the query result, $\Prv((Mary, CS)) = ((t_1 + t_4) (t_1  t_5)) \ \overline{t_1  t_4  t_5}$. We call $Prv(t)$ the provenance expression of t. 
A tuple $t$ is ``witnessed'' in the result of $Q_1-Q_2$ on $D$ if and only if $\Prv_{(Q_1 - Q_2)(D)}(t)$ is true.
}

\cut{
 proposed an approach using the how-provenance model to capture how a result tuple was derived from the input tuples. Using the semiring framework we can encode the how-provenance that describes how an output tuple $t$ has been derived from source or intermediate tuples, denoted as $Prv(t)$, using joins (logical and $\land$), union (logical or $\lor$), set-difference(logical and with negation: $a - b$ $=a \land \neg b$) operations. And to reduce the size of expressions we used $+$ for $\lor$, $\cdot$ for $\land$ and $\overline{b}$ for $\neg b$, and even omit $\cdot$ in most terms. 
 }

\cut{
copy-paste from Sudeepa's vldb'11 paper:

The annotation of tuples with boolean expressions goes
back to the c-tables of Imieli´nski and Lipski [23] who used
them to describe incomplete databases. It was then used
in [18, 44] and ever since for probabilistic databases. This is
a particular case of provenance annotation [21] which is why
in this paper we use the terminology “boolean provenance”.The algorithm [23] for computing boolean provenance is
quite well-known so we only illustrate it here on an example,
namely for the query q in Figure 1b and the instance
I in the same figure. We use the relational algebra expression
Πx(R ✶ ΠyS) which is equivalent to q and we show
the steps of the computation in Figure 2. First, to compute
ΠyS, the boolean provenances of tuples in S with the
same value of y are combined using disjunction (+) (see Figure
2a); disjunction is also used with union. For join, the
boolean provenance of two tuples are combined using conjunction
(see Figure 2b). Thus, the boolean provenance of
the answers of positive queries is a monotone boolean expression.
Now consider R = R1 − R2 where the tuples in R1, R2
are annotated with boolean provenance. For a tuple t to
appear in R, t must appear in R1, let’s say with boolean
provenance φ. If t does not appear in R2, then the boolean
provenance of t in R will be also φ; on the other hand if
t appears in R2, let’s say with boolean provenance ψ, then
the boolean provenance of t in R will be φ ∧ ψ. Figure 2d
shows the boolean provenance for the difference query q
0 −q
where q, q0
are given in Figure 1
}

\section{Complexity for SPJUD Queries}
\label{sec:spjud}

\cut{In this section, we consider the problem of finding the smallest counterexample for two SPJUD queries $Q_1$ and $Q_2$ with a test database instance $D$. We assume that $Q_1$ and $Q_2$ are union-compatible and of the same query class, and there exists a tuple $t \in Q_1(D) - Q_2(D)$.
We solve this problem by finding the smallest witness for a output tuple $t$ in the query result of $Q_1(D)-Q_2(D)$}

Table~\ref{table:spjud-complexity} summarizes the complexity of the smallest witness problem (\SWP) for any subclass of SPJUD queries. In terms of complexity, we consider \emph{data complexity} (fixed query size), \emph{query complexity} (fixed data size), and \emph{combined complexity} (in terms of both data and query size) \cite{vardi1982complexity}. 
Thus polynomial combined complexity indicates polynomial data complexity.  
\cut{We use $n = |D|$ to denote the total number of tuples (the parameter for data complexity), $k$ to denote the number of relations involved in the query (the parameter for query complexity), and assume each relation has constant number of attributes. }

\begin{table}[htb]
{\small
\begin{tabular}{|p{5em}|p{8em}|p{8em}|} \hline
\rowcolor{grey} \shortstack[l]{Query Class \\ of $Q_1, Q_2$} & \shortstack[l]{Data \\Complexity} & \shortstack[l]{Combined\\ Complexity} \\ \hline
SJ & {\bf P} (Thm.~\ref{thm:sj-poly})& {\bf P} (Thm.~\ref{thm:sj-poly}) \\ \hline
SPU & {\bf P} (Thm.~\ref{thm:spu-poly}) & {\bf P} (Thm.~\ref{thm:spu-poly}) \\ \hline
PJ & {\bf P} (Thm.~\ref{thm:SPJU-poly}) & {\bf \NPhard{}} (Thm.~\ref{thm:PJ-hard}) \\ \hline
JU & {\bf P} (Thm.~\ref{thm:SPJU-poly})& {\bf \NPhard{}} (Thm.~\ref{thm:JU-hard})\\ \hline
JU$^*$ & {\bf P}  (Thm.~\ref{thm:JU-restricted-poly}) & {\bf P} (Thm.~\ref{thm:JU-restricted-poly}) \\ \hline
SPJUD$^*$ & {\bf P}  (Thm.~\ref{thm:SPJUD-restricted-poly}) & {\bf \NPhard{}~}if falls into class PJ or JU \\ \hline
PJD & {\bf \NPhard{}} (Thm.~\ref{thm:SPJUD-hard}) & {\bf \NPhard{}} (Thm.~\ref{thm:SPJUD-hard}) \\
\hline
\end{tabular}
}
\caption{Complexity dichotomy of finding smallest witness for a result tuple w.r.t. the difference of two queries $Q_1 - Q_2$. The class JU$^*$ has the restriction that all unions appear after all joins. The class SPJUD$^*$ is defined as: $Q \rightarrow q^+ | Q - Q$, where $q^+$ is a terminal that represents SPJU queries. Proofs are given in Appendix~\ref{sec:spjud-proof}.}
\label{table:spjud-complexity}
\end{table}

For queries involving PJ, in general even the query evaluation problem is NP-hard in query complexity. However, we construct acyclic queries that can be evaluated in poly-time in combined complexity. It's the same for queries involving JU, however, the problem is in poly-time for the subclass JU$^*$, because we can directly look into the join-only parts of a JU$^*$ query. For general SPJU queries, the problem has poly-time data complexity, and thus we can provide a poly-time algorithm for SPJUD$^*$ queries in data complexity.

What is noteworthy is that for the class of queries involving projection, join, and difference, it is already \NPhard{} in data complexity to find the smallest witness for a result tuple; and the result holds even when the queries are of bounded sizes and the database instance only contains two relations. 

While in the complexity results, we assume both $Q_1, Q_2$ belong to the same query class, if $t \in Q_1(D)\setminus Q_2(D)$, for all monotone cases the exact class of $Q_2$ does not matter as long as it is monotone. 

\section{A Constraint-based General Solution for SPJUD Queries}
\label{sec:general-algo}
In the previous section, we showed that for a number of query classes, the smallest witness problem is poly-time solvable in data complexity. However,  the problem is still \NPhard{} in general, even when the queries are of bounded size; further, the poly-time algorithms we discussed are not efficient for practical purposes.
\cut{--- not only traversal in the query trees is required, but we have to test all subinstances during the enumeration process.}
To address these challenges, we introduce a constraint-based approach to the smallest witness problem. We map the problem into the \emph{min-ones satisfiability problem} \cite{kratsch2010parameterized} by tracking the Boolean provenance of output tuples. The min-ones satisfiability problem is an extension of  the classic \emph{Satisfiability (SAT)} problem: \emph{given a Boolean formula $\phi$, it checks whether $\phi$ is satisfiable with at most $k$ variables set to true.}
This problem can be solved by either using a \emph{SAT solver} (\eg, MiniSAT\cite{sorensson2009minisat}, \cut{Lingeling\cite{biere2013lingeling},} and CaDiCaL\cite{CaDiCaL2018}), or an \emph{SMT Solver} (\eg, CVC4\cite{barrett2011cvc4}and Z3\cite{de2008z3}\cut{, Yices\cite{dutertre2014yices}, and SMTInterpol\cite{christ2012smtinterpol}}). \emph{Satisfiability Modulo Theories (SMT)} is a form of \emph{constraint satisfaction problem}. It refers to the problem of determining whether a first-order formula is satisfiable w.r.t. other background first-order formulas, and is a generalization of the SAT problem\cite{barrett2018satisfiability}). SAT and SMT problems are known to be \NPhard{} with respect to the number of clauses, constraints, and undetermined variables. However, there is  a variety of solvers that work very well in practice for different real world applications, and with these solvers we can find a small solution to a \SWP \ instance.
The rest of this section will describe how to encode the how-provenance of an output tuple, and then use a state-of-the-art solver to find the smallest witness for the output tuple. The implementation details will be discussed in Section~\ref{sec:impl}.


\vspace{-2mm}
\subsection{Passing How-Provenance to a Solver}\label{sec:prov-solver}


As discussed in Section~\ref{sec:how-prov}, the how-provenance $Prv(t)$ is true $\Leftrightarrow$ tuple $t$ is in the query result. Since $Prv(t)$ is composed of a combination of Boolean variables annotating tuples in the input relations, a Boolean variable is true $\Leftrightarrow$ the corresponding tuple is present  in the input relation in the witness. Then an instance of the smallest witness problem is mapped to an instance of the min-ones satisfiability problem: find a satisfying model to $Prv(t)$ with least number of variables set to true, and the variables set to true in the satisfying model indicate tuples in the smallest witness. The pseudocode of the algorithm to solve \SCP\ and \SWP\ is given in Algorithm~\ref{alg:naive-sat}.

Example~\ref{example:how-provenance-sat-solver} illustrates how we can get the smallest witness using a SAT solver. Since the solver will return an arbitrary satisfying model, to get the minimum model we need to ask the solver to return a different model every time we rerun it (line 6). We set a maximum number of runs to limit the running time, and the algorithm stops when there is no more satisfying models or it has reached the maximum number of runs. It may not find the minimum model when it stops, but 
it is likely to find one that is small if given enough time.

\begin{Example}[How-provenance and SAT Solver]
\label{example:how-provenance-sat-solver}
Consider Example \ref{example:intro1}.

\cut{
\begin{table}[!htb]
{\small\upshape
\caption*{How-provenance of $Q_1(D)$:}
\begin{center}
	\begin{tabular}{|p{3em}|p{3em}|p{15em}|}
	\hline
	name & major & prov \\ \hline
	Mary & CS & $ (t_1  t_4 + t_1 t_5) \overline{t_1 t_4 t_5} $\\
	John & ECON & $ t_2 t_7 $\\
	Jesse & CS & $(t_3 t_{9} + t_3t_{10} + t_3 t_{11}) \ \overline{t_3 t_9 t_{10} + t_3t_9 t_{11} +}$ $\overline{t_{3} t_{10} t_{11}} $\\
	\hline
	\end{tabular}
\end{center}
}
 \vspace{-3mm}
\end{table}
\begin{table}[!htb]
{\small\upshape
\caption*{How-provenance of $Q_2(D)$:}
\begin{center}
	\begin{tabular}{|p{3em}|p{3em}|p{15em}|}
	\hline
	name & major & prov \\ \hline
	Mary & CS & $ t_1 (t_4 + t_5) $\\
	John & ECON & $ t_2 t_7 $\\
	Jesse & CS & $t_3 (t_9 + t_{10} + t_{11})$\\
	\hline
	\end{tabular}
\end{center}
}
 \vspace{-2mm}
\end{table}}

(Mary, CS) and (Jesse, CS) is in the result of $Q_2$ but not in the result of $Q_1$, and the how-provenance for them w.r.t. $Q_2 - Q_1$ and $D$ can be computed based on these results. E.g., 
\cut{$\Prv_{(Q_2-Q_1)(D)}(Jesse, CS)$ = $\Prv_{Q_2(D)}(Jesse, CS) \land \neg$\\ $\Prv_{Q_1(D)}(Jesse, CS)$ = $(t_3 (t_9 + t_{10} + t_{11}))$ \\ $\overline{(t_3 t_{9} + t_3t_{10} + t_3 t_{11}) \\ \overline{t_3 t_9 t_{10} + t_3t_9 t_{11} + t_{3} t_{10} t_{11}}}$ $= t_3 t_9 t_{10} + t_3 t_9 t_{11}  + t_3 t_{10} t_{11} $. }
$\Prv_{(Q_2-Q_1)(D)}(Jesse, CS)$ = $\Prv_{Q_2(D)}(Jesse, CS) \land $\\ $\neg\Prv_{Q_1(D)}(Jesse, CS)$ = $(t_3 (t_9 + t_{10} + t_{11}))$ $\overline{(t_3 t_{9} + t_3t_{10} + t_3 t_{11})}$   \\$\overline{\overline{t_3 t_9 t_{10} + t_3t_9 t_{11} + t_{3} t_{10} t_{11}}}$ $= t_3 t_9 t_{10} + t_3 t_9 t_{11}  + t_3 t_{10} t_{11} $. 
Then we can get a model $\{t_3: True, t_9: True, t_{10}: True, t_{11}: True\}$ 
to $\Prv_{(Q_2-Q_1(D)}(Jesse, CS)$ by passing it to a SAT solver. We will get any of $\{t_3, t_9, t_{10}\}$, $\{t_3, t_9, t_{11}\}$, and $\{t_3, t_{10}, t_{11}\}$ as the smallest witness after running the solver for multiple times (in the first run, it may return a bigger solution like $\{t_3, t_9, t_{10}, t_{11}\}$).
\end{Example}

\begin{algorithm}[t]\caption{Basic: The SAT-solver-based approach}\label{alg:naive-sat}
{\small
\begin{codebox}
\Procname{$\proc{Smallest-Witness-Basic}(Prv(t), \Delta)$}
\li $\phi \gets Prv(t)$
\li $\eta^* \gets$ null
\li $\delta \gets 0$
\li \While {$\phi$ is satisfiable and $\delta < \Delta$}
\li \Do Use a SAT solver to find a model $\eta$ for $\phi$
\li 	$\phi \gets \phi \land \neg\eta$ 
\li 	\If $\eta^*$ is null or \# true variables in $\eta$ is less than $\eta^*$
		\Then
\li 		$\eta^* \gets \eta$
		\End
\li 	$\delta \gets \delta + 1$ 
	\End
\li Return a set of tuples $D_{\eta^*} \gets \{t' \mid \eta^*(t') \ is \ true \}$
\end{codebox}
\begin{codebox}
\Procname{$\proc{Smallest-Counterexample-Basic}(Q_1, Q_2, D, \Delta)$}
\li $D^* \gets D$
\li $\Delta$ is the maximum number of trials.
\li \For {$t \in (Q_1 - Q_2)(D)$ }
\li \Do	$Prv(t) \gets $ the how-provenance of tuple $t$ w.r.t. \\ $(Q_1 - Q_2)(D)$.
\li 	$D' \gets $ \proc{Smallest-Witness-Basic}$(Prv(t), \Delta)$
\li 	\If $|D'| < |D^*|$
		\Then
\li 		$D^* \gets D'$
		\End
	\End
\li Return $D^*$
\end{codebox}
}
\end{algorithm}

\vspace{-3mm}
\subsection{Optimizing the Basic Approach}\label{sec:opt-basic}
\label{subsec:spis}
The basic algorithm  given in Algorithm~\ref{alg:naive-sat} has two limitations: (a) it cannot find the smallest witness until it searches all possible models that satisfy $Prv(t)$; (b) In order to solve \SCP, it iterates over all tuples in $Q_1(D) \setminus Q_2(D)$ and calculates the provenance for each tuple, which leads to large overheads. Therefore, we propose two optimizations. The first one is to pick only one tuple $t$ from the query results of $Q_1(D) \setminus Q_2(D)$ (\ie, we only solve \SWP), and only compute the provenance of $t$ by adding an additional selection operator to select tuples equal to $t$ on top of the query tree of $Q_1 - Q_2$. 
The other optimization is to treat this problem as an optimization problem instead of finding different models with a SAT or SMT solver. However, integer linear programming solvers can not be applied because transforming how-provenance into linear constraints can be exponential. To solve this problem, we use \emph{optimizing SMT solvers} that are now available with recent advances in the programming languages and verification research community \cite{bjorner2015nuz,li2014symbolic}. Given a formula $\phi$ and an objective function $\mathcal F$, an optimizing SMT Solver finds a satisfying assignment of $\phi$ that maximizes or minimizes the value of $\mathcal F$. 

Algorithm~\ref{alg:spis} describes the solution with these two optimizations. A selection operator on the value of $t$ is added to $Q_1 - Q_2$ (line 2-3). Again, we add $Prv(t)$ as the constraint of the optimizing SMT solver, set the number of true variables as the objective function, and get the optimal model (line 4-6). Our SMT formulation includes only Boolean variables, so we encode the number of true variables by first converting the variables into 0 or 1 and then summing them up. 
\par
The SQL query optimizer is likely to push down the additional selection operator to accelerate the computation of how-provenance. Moreover, since a how-provenance may involve many tuples, solving it with an optimizer will reduce the solving time significantly, since the optimizer will return an answer as soon as it finds a solution, but the naive algorithm requires enumerating all possible models to obtain the model with least number of variables set to true.


\begin{algorithm}[t]\caption{$Opt_{\sigma}$: The optimized algorithm with selection pushdown}\label{alg:spis}
{\small
\begin{codebox}
\Procname{$\proc{Smallest-Counterexample-Optimized}(Q_1, Q_2, D)$}
\li Pick one tuple $t$ in the result of $Q_1(D) \setminus Q_2(D)$, 
\li $A_1 ... A_k \gets$ the attributes of $t$.
\li $Q' \gets \sigma_{A_1 = t.A_1, A_2 = t.A_2, ..., A_k = t.A_k} (Q_1-Q_2)$
\li $\phi \gets $ the how-provenance of tuple $t$ w.r.t. $Q'(D)$.
\li $obj \gets$ the number of {\tt true} values in $\phi$
\li $\eta \gets$ OptSMT\_Solver($\phi$, $obj$)
\li Return a set of tuples $D_{\eta} \gets \{t' \mid \eta(t') \ is \ true \}$
\end{codebox}
}
\end{algorithm}

\vspace{-1mm}
\begin{lstlisting}[language=smtlib2,style=smtlib2,caption={SMT-LIB Input for Example~\ref{example:how-provenance-sat-solver}},label={sample-smtlib2},captionpos=b]
(declare-const t1 Bool)
...
(declare-const t11 Bool)
(define-fun b2i ((x Bool)) Int (ite x 1 0))
(assert (and (or t4 t5) (not (and (or (and t1 t4) (and t1 t5)) (not (and t1 (and t4 t5)))))))
(minimize (+ (b2i t1) (b2i t2) ... (b2i t11)))
\end{lstlisting}
The above listing illustrates how we encode the provenance and constraints into the SMT-LIB standard format \cite{barrett2010smt} as the input to a SMT solver to find the satisfying model for Example~\ref{example:how-provenance-sat-solver}. In the sample SMT-LIB format input above, first we defined Boolean variables for each tuple from line 1 to line 4, then at line 4 we defined function b2i to convert Boolean variables for each tuple into 0 and 1. At line 5 we added the how-provenance as a constraint. Then with function b2i we take the sum of 0-1 variables to get the number of true variables in the model, and set the sum as the objective function (line 6).

\subsection{Handling Database Constraints}
\label{subsec:constraints}

Since we output a subinstance of the input database instance as the witness, database constraints like keys, not null, and functional dependencies are trivially satisfied if the input instance is valid. On the other hand, foreign key constraints can be naturally represented as Boolean formulas like provenance expressions. For instance, in our running example in Figure~\ref{fig:running-instance}, the {\tt name} column in the {\tt Registration} table may refer to the {\tt name} column in the {\tt Student} table. So, if we want to keep any tuple in the {\tt Registration} table, we must also keep the tuple with the same {\tt name} value in the {\tt Student} table. This constraint can be expressed in the $a \Rightarrow b$ form, \eg, $t_1 + \overline{t_4}$, $t_2 + \overline{t_7}$, .., etc., corresponding to the constraint that the tuples in the {\tt Registration} table cannot exist unless the tuple it refers to exists in the {\tt Student} table). Then, for each tuple that appears in the provenance expression added to the SAT or SMT solver, we add its foreign key constraint expression to the solver as a constraint. 

\section{Aggregate Queries}

\label{sec:algo_agg}

So far, we have focused on SPJUD queries. In this section we extend our discussion to aggregate queries. First we will demonstrate the challenges that arise for aggregate queries, and then propose our solutions to overcome them. We make some assumptions on the form of aggregate queries: (1) no aggregate values or NULL values are allowed in the group by attributes; (2) selection predicates involving aggregate values (HAVING) are in the simple form $expr \ relop \ expr$; (3) there is no difference operation above an aggregate operator.

\subsection{Challenges with Aggregate Queries}\label{sec:challenges-agg}


\mypar{Witness is too strict} Remember that for SPJUD queries, we find the smallest counterexample by first picking an output tuple in $(Q_1-Q_2)(D)$ and then finding the smallest witness for this tuple w.r.t. $Q_1-Q_2$ and $D$. However, for aggregate queries, if we still look for witnesses for output tuples, it is likely that we are unable to find any witnesses smaller than the input database instance --- the aggregate value may change if any tuple is removed from the input. 
Following Example~\ref{example:agg1} illustrates this issue.

\begin{Example}[Challenge with Witness for Aggregate Values]
\label{example:agg1}
Suppose we have two aggregate queries $Q_1$ and $Q_2$ aimed at computing the average grade of students in CS courses, using the two tables in Figure~\ref{fig:running-instance}.

\begin{lstlisting}[basicstyle=\scriptsize\upshape\ttfamily]
$Q_1:$SELECT s.name, avg(r.grade) as avg_grade
FROM Student s, Registration r 
WHERE s.name=r.name AND r.dept='CS'
GROUP BY s.name
\end{lstlisting}
\vspace{-3mm}
\begin{lstlisting}[basicstyle=\scriptsize\upshape\ttfamily]
$Q_2:$SELECT s.name, avg(r.grade) as avg_grade
FROM Student s, Registration r 
WHERE s.name=r.name
GROUP BY s.name
\end{lstlisting}
\vspace{-5mm}
\begin{table}[!htb]
\scriptsize{
    \begin{minipage}{.4\linewidth}
	    \centering
	   	\caption*{Result of $Q_1(D)$:}
	    \begin{tabular}{| l | l |}
		\hline
		name & avg\_grade \\ \hline
		Mary & 87.5 \\
		John & 90 \\
		Jesse & 92.5 \\
		\hline
		\end{tabular}
	\end{minipage}
	\begin{minipage}{.4\linewidth}
		\centering
		\caption*{Result of $Q_2(D)$:}
		\begin{tabular}{| l | l |}
		\hline
		name & avg\_grade \\ \hline
		Mary & 90 \\
		John & 89 \\
		Jesse & 92.5 \\
		\hline
		\end{tabular}
	\end{minipage}
}
 \vspace{-2mm}
\end{table}
In this example, $Q_2$ forgets to select on departments. To find a counterexample through finding a witness, we have to keep all records of the student as the witness. E.g., we can only return all Mary's registration records as the witness $W$ to keep (Mary, 90) in $Q_2(W)$ but not in $Q_1(W)$. However, to show that $Q_1$ will return a different result from $Q_2$ over \emph{some} counterexample $C$, $C$ can contain only one tuple $(Mary, 208D, ECON, 88)$, and $Q_1(C)$ is empty while $Q_2(C)$ returns $(Mary, 88)$.
\end{Example}
\mypar{Computation overhead by how-provenance} We cannot directly apply \spjudBasic{} or \spjudOpt{} (Section~\ref{sec:general-algo}) for aggregate queries because: (i) the why-provenance model does not consider aggregate queries; (ii) while how-provenance can be extended to support aggregate queries by storing all possible combinations of grouping tuples \cite{sarma2008exploiting}, it leads to exponential overhead and thus is impractical if there exist large groups. 

\mypar{Selection predicates with aggregate values} When the queries contain selection predicates with aggregate functions COUNT or SUM, it is possible that we have to keep all tuples in one group to make the result tuple satisfy the selection predicates. See Example~\ref{example:selection-agg-2}.

\begin{Example}[Challenge with Selection on Aggregate Values]

\label{example:selection-agg-2}

Continued with Example~\ref{example:agg1}, but both queries are extended to find the average grade of CS courses of students who registered at least 3 CS courses.

\begin{lstlisting}[basicstyle=\scriptsize\upshape\ttfamily]
$Q_1:$SELECT s.name, AVG(r.grade) as avg_grade
FROM Student s, Registration r 
WHERE s.name=r.name AND r.dept='CS'
GROUP BY s.name
HAVING COUNT(r.course)>=3
\end{lstlisting}
\vspace{-3mm}
\begin{lstlisting}[basicstyle=\scriptsize\upshape\ttfamily]
$Q_2:$SELECT s.name, AVG(r.grade) as avg_grade
FROM Student s, Registration r 
WHERE s.name=r.name
GROUP BY s.name
HAVING COUNT(r.course)>=3
\end{lstlisting}
\begin{table}[!htb]
\vspace{-3mm}
\scriptsize{
    \begin{minipage}{.4\linewidth}
	    \centering
	   	\caption*{Result of $Q_1(D)$:}
		\begin{tabular}{| l | l |}
			\hline
			name & avg\_grade \\ \hline
			Jesse & 90 \\
			\hline
		\end{tabular}
	\end{minipage}
    \begin{minipage}{.4\linewidth}
	    \centering
	   	\caption*{Result of $Q_2(D)$:}
		\begin{tabular}{| l | l |}
		\hline
		name & avg\_grade \\ \hline
		Mary & 90 \\
		Jesse & 90 \\
		\hline
		\end{tabular}
	\end{minipage}
	}
 \vspace{-3mm}
\end{table}
Again, $Q_2$ returns $(Mary, 90)$ that should not be in the correct result, because it does not select on departments. And we have to return all three courses Mary registered to make $(Mary, 90)$ still in the result of $Q_2$ but not in the result of $Q_1$. Therefore, when the selection predicate involves the comparison between count or sum with a large constant number, we have to return a large fraction of tuples in the test database instance in order to make the output tuple satisfy the selection predicate. 

\end{Example}

\subsection{Applying Provenance for Aggregates}


To address the first two challenge in Section~\ref{sec:challenges-agg} (the third challenge is discussed in Section~\ref{sec:enhance}), we consider applying provenance for aggregated queries by Amsterdamer et al.\cite{amsterdamer2011provenance}. Their approach annotates the provenance information with the individual values within tuple using commutative monoid (for aggregate) and commutative semirings (for annotation). The tuples in the input relations are regarded as symbolic variables and thus the aggregate values can be encoded as symbolic expressions. The selection predicates that involve aggregate values can be encoded as symbolic logical expressions. Then we can express $Q_1(D') \neq Q_2(D')$ using symbolic inequality expressions: assert that a group only exists in one of the query results, or the group exists in both query results but the aggregate values are different. Table~\ref{table:prov-selection-agg-2} shows the provenance of aggregate queries for Example~\ref{example:selection-agg-2}. For instance, $t_4 \otimes 100 +_{AVG} t_5 \otimes 75$ represents the AVG value of a group containing two tuples $t_4$ and $t_5$ in the original query result, and the value of the attribute in the AVG function of tuple $t_4$ if 100, and the value is 75 for $t_5$. If $t_4$ is removed from the input relations, then $t_4 \otimes 100$ will not contribute to the AVG value. Like the how-provenance, $(t_1 (t_4 + t_5)) \ (t_4 \otimes 1 +_{SUM} t_5 \otimes 1 \geq 3)$ indicates how the result tuple is derived from the input or intermediate tuples: $t_1 (t_4 + t_5)$ means that the group exists iff $t_1$ exists and one of $t_4$ and $t_5$ exists; $t_4 \otimes 1 +_{SUM} t_5 \otimes 1 \geq 3$ represents the selection criterion: the COUNT (a special case of SUM) value should be greater or equal to 3. Based on these provenance expressions, a counterexample for $Q_1$ and $Q_2$ w.r.t. tuple $(Mary, 90)$ can be given by solving the constraint $(prv_4 \oplus prv_1) \lor (val_4 \neq val_1)$, and we can iterate over all output tuples to find the smallest counterexample, instead of finding a global smallest witness of tuples in $Q_1(D) \setminus Q_2(D)$.

\begin{figure*}[htb]
\begin{minipage}{0.88\linewidth} \centering
{\scriptsize\upshape
\begin{tabular}{| l | l | l | l | l |}
	\hline
	$Q_1$ & & \\ \hline
	name & avg\_grade & & provenance & \\ \hline
	Mary & $t_4 \otimes 100 +_{AVG} t_5 \otimes 75$& $val_1$ &$ (t_1 (t_4 + t_5)) \ (t_4 \otimes 1 +_{SUM} t_5 \otimes 1 \geq 3) $& $prv_1$\\
	John & $t_7 \otimes 90 $& $val_2$ &$ (t_2 t_7) \ (t_7 \otimes 1 \geq 3) $& $prv_2$\\
	Jesse & $t_{9} \otimes 95 +_{AVG} t_{10} \otimes 90$ &$val_3$ & $((t_3 \ (t_{9} + t_{10})) \ (t_{9} \otimes 1 +_{SUM} t_{10} \otimes 1 +_{SUM} t_{11} \otimes 1 \geq 3)$& $prv_3$\\
	\hline
	$Q_2$ & & \\ \hline
	name & avg\_grade & & provenance &\\ \hline
	Mary & $t_4 \otimes 100 +_{AVG} t_5 \otimes 75 +_{AVG} t_6 \otimes 95$& $val_4$ &$ (t_1  (t_4 + t_5 + t_6)) \ (t_4 \otimes 1 +_{SUM} t_5 \otimes 1 +_{SUM} t_6 \otimes 1 \geq 3)$& $prv_4$\\
	John & $t_7 \otimes 90 +_{AVG} t_8 \otimes 88 $& $val_5$ &$ (t_2  (t_7 + t_8)) \ (t_7 \otimes 1 +_{SUM} t_8 \otimes 1 \geq 3)$& $prv_5$\\
	Jesse & $t_{9} \otimes 95 +_{AVG} t_{10} \otimes 90$ &$val_6$ & $(t_3  (t_{9} + t_{10} + t_{11})) \ (t_{9} \otimes 1 +_{SUM} t_{10} \otimes 1 +_{SUM} t_{11} \otimes 1 \geq 3) $& $prv_6$\\
	\hline
\end{tabular}
}
\captionof{table}{Provenance for Aggregate Queries in Example~\ref{example:selection-agg-2}}
\label{table:prov-selection-agg-2}
\end{minipage}
\vspace{-2.5mm}
\end{figure*}

\cut{By solving the constraints from provenance for aggregate queries, we can find the smallest counterexample instead of finding the smallest witness. }

\subsection{Optimizations}\label{sec:enhance}
\label{subsec:agg-enh}
The provenance-based approach can be optimized further.

\subsubsection{Parameterizing the Queries}

To address the third challenge, when the queries involve comparisons on aggregate values with constant numbers, we modify the definition of our problem by parameterizing the queries. We replace the constants in selection predicates with symbolic variables when passing the provenance expressions to the solver. Then we are expected to get a smaller counterexample with different constant values in the selection predicates, compared to the one under the original parameter settings.


\begin{Definition}[Smallest Parameterized Counterexample Problem]
Given two parameterized queries $Q_1$ and $Q_2$, and a parameter setting $\lambda$ and a database instance $D$, where $Q_1(\lambda, D) \neq Q_2(\lambda, D)$, the smallest parameterized counterexample problem (SPCP) is to find a parameter setting $\lambda'$ and a subinstance $D'$ of $D$, such that $Q_1(\lambda', D') \neq Q_2(\lambda', D')$, and the total number of tuples in $D'$ is minimized.
\end{Definition}

\begin{Example}[Smallest Parameterized Counterexample]

\label{example:selection-agg-3}
Here we show an example of parameterized queries based on Example~\ref{example:selection-agg-2} by making the number of CS courses in the queries a parameter.
\begin{lstlisting}[basicstyle=\scriptsize\upshape\ttfamily]
$Q_1:$ SELECT s.name, AVG(r.grade) as avg_grade
FROM Student s, Registration r 
WHERE s.name=r.name AND r.dept='CS'
GROUP BY s.name
HAVING COUNT(r.course)>= $@num_{CS}$
\end{lstlisting}
\vspace{-3mm}
\begin{lstlisting}[basicstyle=\scriptsize\upshape\ttfamily]
$Q_2:$ SELECT s.name, AVG(r.grade) as avg_grade
FROM Student s, Registration r 
WHERE s.name=r.name
GROUP BY s.name
HAVING COUNT(r.course)>=$@num_{CS}$
\end{lstlisting}
\vspace{-1mm}
These two queries return:
\vspace{-5mm}
\begin{table}[!htb]
\scriptsize{
{\upshape
    \begin{minipage}{.48\linewidth}
	    \centering
	   	\caption*{$Q_1(num_{CS} = 3, D)$:}
	   	{\scriptsize
		\begin{tabular}{| l | l |}
			\hline
			name & avg\_grade \\ \hline
			Jesse & 90 \\
			\hline
		\end{tabular}
		}
	\end{minipage}
	\begin{minipage}{.48\linewidth}
	    \centering
	   	\caption*{$Q_2(num_{CS} = 3, D)$:}
	   	{\scriptsize
		\begin{tabular}{| l | l |}
		\hline
		name & avg\_grade \\ \hline
		Mary & 90 \\
		Jesse & 90 \\
		\hline
		\end{tabular}
		}
	\end{minipage}
	}
}
 \vspace{-3mm}
\end{table}

By using a different parameter setting, the size of counterexample can be reduced. When $@num_{CS}=3$, the smallest counterexample $C$ is $t_1, t_4, t_5, t_6$. But if $@num_{CS} = 1$, we only need to return $t_1, t_6$.
\end{Example}

Below is an example illustrating how to encode the provenance for aggregate queries to SMT formulas.

\begin{lstlisting}[language=smtlib2,style=smtlib2,caption={SMT-LIB Input for Example~\ref{example:selection-agg-3}},label={sample-smtlib2-selection-agg-3},captionpos=b]
(declare-const t1 Bool)
...
(declare-const t11 Bool)
(declare-const num_CS Int)
(define-fun b2i ((x Bool)) Int (ite x 1 0))
(assert 
	(or 
		(distinct 
			(and 
				(and t1 (or t4 t5)) 
				(>= (+ (b2i t4) (b2i t5)) num_CS))
			(and 
				(and t1 (or t4 t5 t6)) 
				(>= (+ (b2i t4) (b2i t5) (b2i t6)) num_CS))) 
		(not 
			(= 
				(\ (+ (* (b2i t4) 100) (* (b2i t5) 75)) (+ (b2i t4) (b2i t5))) 
				(\ (+ (* (b2i t4) 100) (* (b2i t5) 75) (* (b2i t6) 95)) (+ (b2i t4) (b2i t5) (b2i t6)))
))))
(minimize (+ (b2i t1) (b2i t2) ... (b2i t11)))
\end{lstlisting}
\subsubsection{A Heuristic Approach}

The provenance-based solution may not scale very well when a group contains too many tuples and thus the SMT formulas involve too many variables, even if we choose the group with the least number of tuples. Assume that the aggregate functions and attributes are the same in two queries, to reduce the number of variables in SMT formulas, we decide to look into the different tuples between two groups.
E.g., for simplicity, 
assume that both $Q_1$ and $Q_2$ are in the form of $\aggregation{G_1}{agg_1(A_1), agg_2(A_2), ..., agg_k(A_k)}(Q'_1(D))$ and $\aggregation{G_2}{agg_1(A_1), agg_2(A_2), ..., agg_k(A_k)}(Q'_2(D))$ (the aggregations are done at last), one of the following two cases must be true:
(i) the group in $Q_1(D)$ that generates $t$ does not exist in $Q_2(D)$ (grouping attributes $G_1$ may or may not be equal to $G_2$);  (ii) the group in $Q_1$ that generates $t$ also exists in $Q_2$, but one of the aggregate values are different. In either case, we can directly compare the result of $Q'_1(D)$ and $Q'_2(D)$ and find at least one tuple that exists in only one of them. The following example illustrates how this method works. Note that $Q'_1$ and $Q'_2$ can include nested aggregate queries, as long as there are no aggregate values in their schema --- aggregate values can be involved in selections or joins.

\begin{Example}[Heuristic Approach on Example~\ref{example:agg1}]
\vspace{-2mm}
\begin{lstlisting}[basicstyle=\scriptsize\upshape\ttfamily]
$Q_1':$
SELECT s.name, r.grade
FROM Student s, Registration r 
WHERE s.name=r.name AND r.dept='CS'
\end{lstlisting}
\vspace{-1mm}
\begin{lstlisting}[basicstyle=\scriptsize\upshape\ttfamily]
$Q_2':$SELECT s.name, r.grade
FROM Student s, Registration r 
WHERE s.name=r.name
\end{lstlisting}
\begin{table}[!htb]
\vspace{-3mm}
{\scriptsize\upshape
    \begin{minipage}{.45\linewidth}
	    \centering
	   	\caption*{$Q'_1(D)$:}
		\begin{tabular}{|p{2em}|p{2em}||p{2em}|p{2em}|}
		\hline
		name & grade & name & grade\\ \hline
		Mary & 100 & Jesse & 95 \\
		Mary & 75 & Jesse & 90 \\
		John & 90 & Jesse & 85 \\
		\hline
		\end{tabular}
	\end{minipage}
	\begin{minipage}{.45\linewidth}
	    \centering
	   	\caption*{$Q'_2(D)$:}
		\begin{tabular}{|p{2em}|p{2em}||p{2em}|p{2em}|}
		\hline
		name & grade & name & grade \\ \hline
		Mary & 100 & John & 88 \\
		Mary & 95 & Jesse & 95 \\
		Mary & 75 & Jesse & 90 \\
		John & 90 & Jesse & 85 \\
		\hline
		\end{tabular}
	\end{minipage}
}
\vspace{-2mm}
\end{table}

$Q_2'$ does not select on departments so it returns some additional tuples comparing to $Q_1'$: $(Mary, 95)$ and $(John, 88)$. And now we can apply the method for SPJUD queries in the previous section and then return either \cut{Mary's ECON 208D record or John's ECON 208D record} $\{t_1, t_6\}$ or $\{t_2, t_8\}$ as the counterexample --- they can explain why the aggregate value on Mary or John are different in $Q_1$ and $Q_2$.
\label{example:agg-heuristic}
\end{Example}

\begin{algorithm}[t]\caption{$Agg_{opt}$: The heuristic algorithm for aggregate queries}\label{alg:agg-heu}
\small
\begin{codebox}
\Procname{$\proc{Smallest-Counterexample-Aggregate-Heu}(Q_1, Q_2, D, \Lambda)$}
\li $\Lambda = \{\lambda_1, ... \}$ is the original parameter setting
\li $Q_1 = \selection_{agg_1(A_1)\ op \ \lambda_1}\aggregation{G_1}{agg_1(A_1), agg_2(A_2), ..., agg_k(A_k)}(Q'_1(D))$,
\li $Q_2 = \selection_{agg_1(A_1)\ op \ \lambda_1}\aggregation{G_2}{agg_1(A_1), agg_2(A_2), ..., agg_k(A_k)}(Q'_2(D))$
\li \Repeat
\li Pick one tuple $t$ in the result of $Q'_1(D) \setminus Q'_2(D)$, 
\li $A'_1 ... A'_k \gets$ the attributes of $t$
\li $Q' \gets \sigma_{A'_1 = t.A'_1, A'_2 = t.A'_2, ..., A'_k = t.A'_k} (Q'_1-Q'_2)$
\li $\phi \gets $ the prv. for agg. queries of tuple $t$ w.r.t. $Q'(D)$
\li $obj \gets$ \#true\_values in $\phi$
\li $\eta \gets$ OptSMT\_Solver($\phi$, $obj$)
\li $D_{\eta} \gets \{t' \mid \eta(t') \ is \ true \}$
\li Set $\Lambda'$ according to values in $D_{\eta}$
\li \Until {$Q_1(D_{\eta}, \Lambda') \neq Q_2(D_{\eta}, \Lambda')$}
\li Return $D_{\eta}$
\end{codebox}
\vspace{-1mm}
\end{algorithm}
When the queries involve comparisons on aggregate values at the top of the query tree, e.g., $\selection_{agg_1(A_1)\ op \ const}$ \\$\aggregation{G^1}{agg_1(A_1), agg_2(A_2), ..., agg_k(A_k)}(Q'_1(D))$, we can also apply the heuristic approach by parameterizing the queries and directly looking into $Q'_1$ and $Q'_2$
\cut{: (1). if the aggregate function in the selection predicate is MIN, MAX, or AVG, rewrite to $\selection_{A_1\ op \ const} (Q'_1(D))$ and  $\selection_{A_1\ op \ const} (Q'_2(D))$; (2). if the aggregate function is COUNT or SUM, we drop the selection and rewrite the queries to $Q'_1(D)$ and $Q'_2(D)$}.
After finding the smallest counterexample $C$ for $Q'_1$ and $Q'_2$, the next step is to make sure $Q_1(C) \neq Q_2(C)$ otherwise it fails to distinguish the original queries. On one hand, if aggregate values are involved in the selection predicate at the top of the query tree, we  parameterize the original queries and set a reasonable number such that the results of at least one of $Q'_1(C)$ and $Q'_2(C)$ will pass the selection. For COUNT we set the parameter in the predicate to be 1 or 0 if the operator is `=' or `>', while for SUM we set the parameter to be the maximum value or the minimum value of the attribute in the aggregate function. And it is similar for MAX, MIN, and AVG. On the other hand, if both $Q_1(C)$ and $Q_2(C)$ are not empty, their results may happen to be the same since we do not add any constraints on the aggregate values --- the only guarantee is $Q'_1(C) \neq Q'_2(C)$. Therefore we have to evaluate the queries on the counterexample we find, and if $Q_1(C) = Q_2(C)$, we re-run the SMT Solver on the same formulas but asking the solver to return a different model until we get a satisfying counterexample.

\section{Implementation}
\label{sec:impl}

Provenance has been extensively studied in the database community, not only theoretically \cite{green2007provenance, amsterdamer2011provenance, buneman2001and}, but also there are systems that can capture different forms of provenance \cite{karvounarakis2010querying, glavic2009perm, arab2014generic, green2007update, psallidas2018smoke, senellart2018provsql}. However, to the best of our knowledge, there are no systems available that support how-provenance for general SPJUD and aggregate queries. Since building a comprehensive system to efficiently capture provenance is not the goal of this paper, for simplicity, we implemented our system, called \OurSys{}, in Python 3.6 with \aRADB{} \cite{anonymous2017rai}. This interpreter translates relational algebra queries into SQL common table expression (CTE) queries, and each relational algebra operator is translated into a SQL subquery. \OurSys{} has a web UI built using HTML, CSS, and JavaScript. It runs on Ubuntu 16.04 and uses Microsoft SQLServer 2017 as the underlying DBMS.

First, \OurSys{} takes two queries in Relational Algebra as input. Then \theRADB{} interprets them and generates two SQL queries consisting of multiple subqueries. Next, it rewrites each subquery by adding one additional column of provenance expression. For aggregate queries, all columns of aggregate values are also rewritten to symbolic expressions. These expressions are stored as strings in the SMT-LIB format. For each input relation, we added one additional `\emph{prv}' column of tuple identifiers. The rewriting rules are listed below by the relational algebra operator:


\mypar{Select, Project, Union} For selection/projection/union, we directly select the \emph{prv} column from the input relation. If the selection predicate involves aggregate values (i.e., HAVING), we construct a symbolic logical expression with the operands, and take the conjunction of the \emph{prv} column and the symbolic logical expression. Duplicates from projection/union will be considered in de-duplicate discussed below.

\mypar{Join} We take the {\tt AND} ($\wedge$) of the \emph{prv} column of the two joining tuples.

\mypar{Difference} Remember that for difference operator, there are two cases while evaluating $R - S$: (1) $t \in R, t \in S$. (2) $t \in R, t \notin S$. In the first case, the query is transformed into a join query where the join predicate is that the tuple in $R$ should equal to the tuple in $S$ (excluding the \emph{prv} column); In the second case, we add a `{\tt NOT EXISTS}' subquery to find those tuples in $R$ but not in $S$, and the \emph{prv} column is the same as those in $R$; then we take a union of these two cases.

\begin{lstlisting}[basicstyle=\scriptsize\upshape\ttfamily]
SELECT R.A, R.B FROM R EXCEPT SELECT S.A, S.B FROM S
\end{lstlisting}
\vspace{-2mm}
is rewritten to:
\vspace{-2mm}
\begin{lstlisting}[basicstyle=\scriptsize\upshape\ttfamily]
(SELECT R.A, R.B, 
     '(and ' || R.prv || ' (not ' || S.prv || '))'
FROM R, S WHERE R.A = S.A AND R.B = S.B) UNION
(SELECT R.A, R.B, R.prv
FROM R WHERE NOT EXISTS(
     SELECT S.A, S.B
     FROM S
     WHERE S.A = R.A AND S.B = R.B))
\end{lstlisting}
\vspace{-3mm}


\mypar{De-duplicate} Duplicate tuples may arise from projection or union. We add a group by clause that contains all columns in the select clause except the \emph{prv} column. The \emph{prv} column is computed using `string\_agg' function: 
\vspace{-1mm}
\begin{lstlisting}[basicstyle=\scriptsize\upshape\ttfamily]
'(or ' || string_agg(R.prv, ' ') || ')'
\end{lstlisting}


Once the queries are rewritten, \OurSys{} applies the algorithms in Section~\ref{sec:general-algo} and Section~\ref{sec:algo_agg} according to their query classes, to generate SMT constraints. Then, \OurSys{} passes the constraints to the Z3 SMT Solver (an efficient optimizing SMT Solver by Microsoft Research)\cite{de2008z3,bjorner2015nuz} 4.7.1, and sets ``\emph{minimizing the number of variables set to true}'' as the objective function. Finally, the satisfying model returned by the Solver represents the counterexample, and the counterexample is shown on the web UI with the query results of two input queries over this counterexample.







\section{Experiments}
\label{sec:experiments}

We present experiments to evaluate our algorithms in this section. The
input queries used in our initial experiment for SPJUD queries were
collected from student submissions to a relational algebra assignment
in an undergraduate database course in a US university in Fall 2017;
therefore, the wrong queries were ``real,'' although test database
instances are synthetic. In the experiment for aggregate queries, we
use the TPC-H benchmark\cite{council2008tpc}. We generate tables at
scale 1, and manually translated several TPC-H queries into Relational
Algebra and created some wrong queries ourselves.
The system runs locally on a 64-bit Ubuntu 16.04 LTS with 3.60 GHz Intel Core i7-4790 CPU and 16 GB 1600 MHz DDR3 RAM.

\vspace{-2mm}
\subsection{Real World SPJUD Queries}

In this subsection we evaluate the efficacy of our algorithms in
Section~\ref{sec:general-algo} for SPJUD queries on the course
dataset. The dataset comes from one relational algebra assignment in
Fall 2017, which asked students to write SPJUD queries using
\theRADB{}. It includes 8 questions and 141 students in total, and the
queries are evaluated over the test instances we generated. The test
instance may not be able to differentiate all incorrect queries,
although there are more incorrect queries discovered when the test
instance gets larger (see Table~\ref{table:wrong-query-stats}).  Some
questions involve very complex queries to find tuples satisfying
conditions with universal quantification or uniqueness quantification
requiring multiple uses of difference (see
Section~\ref{sec:user-study} for concrete examples), and solicited
some extremely complex student solutions with scores of operators; we
are not aware of any directly related work that is able to handle this
level of query complexity. \cut{Only in our experiments on the largest
test database instance (of 100,000 tuples), we had to drop two overly
complicated student queries that involved massive cross products.} We had to drop two overly complicated student queries that involved massive cross products.
\begin{table}[htb]
\scriptsize{
  \centering
  \begin{tabular}{|>{\centering\arraybackslash}p{5em}|>{\centering\arraybackslash}p{8em}|>{\centering\arraybackslash}p{8em}|}
  \hline  
  \# of Tuples in DB & \# of incorrect queries & \# of students with incorrect queries \\ \hline
  1,000 & 111 & 76  \\
  4,000 & 167 & 87 \\
  10,000 & 168 & 88 \\
  40,000 & 169 & 88 \\
  100,000 & 170 & 88 \\
  \hline
  \end{tabular}
  \caption{$|D|$ vs. \# of wrong queries discovered}
  \label{table:wrong-query-stats}
}
  \vspace{-2mm}
\end{table}
\vspace{-1mm}

\mypar{\SCP \ vs. \SWP} As discussed in Section~\ref{sec:prelim}, a poly-time solution for \SWP\ also gives a poly-time solution for \SCP\ if we consider data complexity, since we can iterate over all tuples $t$ in $Q_1(D)\setminus Q_2(D)$ to find the global optimal solution. The number of output tuples is polynomial in $|D|$, but can be exponential in query size (\eg, when we join $k$ tables that form a cross product), and therefore it does not necessarily give a poly-time solution in terms of combined complexity. However, the standard practice is to consider data complexity, since the size of the query is expected to be a small constant. For practical purposes, even polynomial combined complexity may not give interactive performance. Here we experimented on the algorithms for SPJUD queries in Section~\ref{sec:general-algo} to compare \SCP\ and \SWP\ in practice: The \spjudBasic{} algorithm using Z3 SMT optimizer instead of a SAT solver and the \spjudOpt{} algorithm. See Table~\ref{table:scp-vs-swp}. Surprisingly, all smallest witnesses returned by \spjudOpt{} are of the same size as the smallest counterexamples returned by \spjudBasic{}, i.e., \spjudBasic{} reaches the global optimum on the first output tuple. This may be a coincidence, but for 168 of 170 wrong queries we discovered, the size of the smallest witnesses of all output tuples are the same. The result indicates that in most cases, we can use \spjudOpt{}~for better performance (6.9x faster) with only a small probability of not reaching the global minimum. Given this result, in the rest of this section, we will only experiment on \SWP.

\begin{table}[htb]
\scriptsize
  \centering
  \begin{tabular}{|c|>{\centering\arraybackslash}p{7em}|>{\centering\arraybackslash}p{7em}|}
  \hline  
  & Mean Runtime (sec.) & Mean Size of Counterexample\\
  \hline
   \SCP --- \spjudBasic & 26.29 & 3.52\\
   \SWP --- \spjudOpt{} & 3.80 & 3.52 \\
  \hline
  \end{tabular}
  \caption{\SCP \ vs. \SWP, \# tuples = 100k}
  \label{table:scp-vs-swp}
  \vspace{-2mm}
\end{table}

\mypar{Size of the data vs. time} We vary the number of tuples in the input relations of 1,000: 4,000: 10,000: 40,000: 100,000. See figure \ref{fig:running-time-dbsize}: \textbf{raw} is for evaluating queries $Q_1-Q_2$, the difference of students' query and the standard query; \textbf{prov-all} is for evaluating rewritten queries $Q_1-Q_2$ that also store provenance; \textbf{prov-sp} is for provenance queries with selection on one tuple; \textbf{solver-naive-128} is for finding the smallest witness with an SAT solver that tries at most 128 different models; \textbf{solver-opt} is for finding the smallest witness of the first result tuple with Z3 SMT optimizer; \textbf{solver-opt-all} is for finding the smallest witness of all result tuples with Z3 SMT optimizer. The running time of rewritten provenance queries with selection pushdown is much faster than the raw queries (29x when $|D| = 100K$) and the provenance query without selection on tuples (42x when $|D| = 100K$). This is what we expect: computing provenance expression will cause huge overheads, but only for one single tuple is definitely affordable.



\mypar{Query complexity vs. time} Figure~\ref{fig:query-complexity-vs-time} shows the running time of each component of \spjudOpt{}\ vs. different metrics of the query complexity (number of operators, number of differences, and height of the query tree). The running time increases roughly as the complexity of queries increases. Note that when height of the query or the number of operators in the query reaches the maximum, the provenance query dominates the running time, however, in most cases, evaluating the raw CTE SQL query is the slowest part.

\begin{figure*}\centering
  \begin{minipage}{0.88\linewidth}
  \begin{minipage}{0.295\linewidth}
	\centering
  \includegraphics[width=\linewidth]{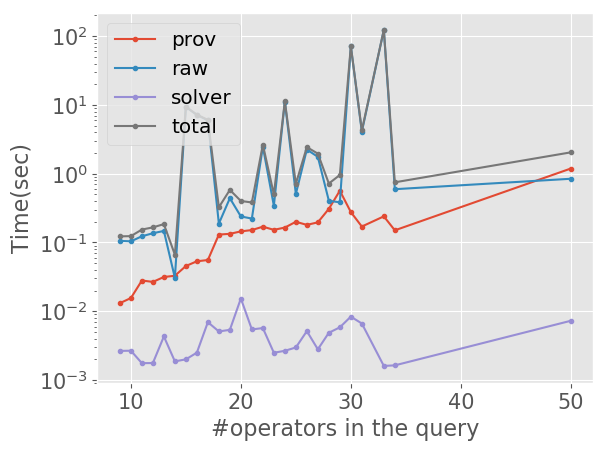}
  \end{minipage}
    \hfill
  \begin{minipage}{0.295\linewidth}
	\centering
  \includegraphics[width=\linewidth]{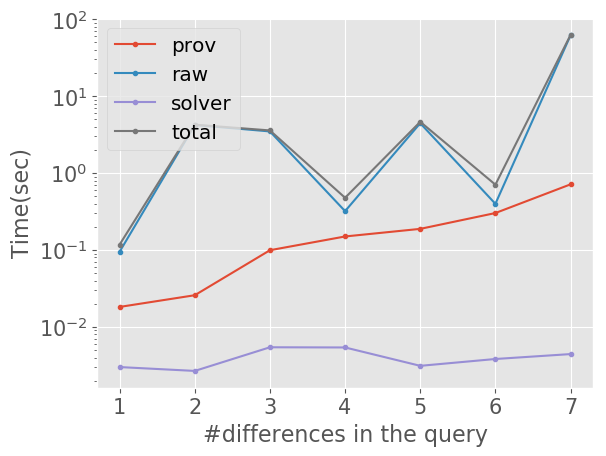}
  \end{minipage}
  \hfill
  \begin{minipage}{0.295\linewidth}
	\centering
  \includegraphics[width=\linewidth]{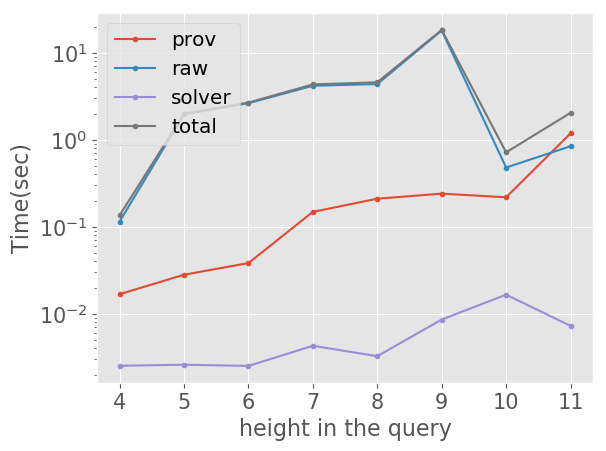}
  \end{minipage}
  \caption{Query complexity vs. time, algorithm=\spjudOpt{}, \#tuples=100k\cut{40k}; \textbf{raw} is for evaluating queries $Q_1-Q_2$; \textbf{prov-sp} is for provenance queries with selection on one tuple; \textbf{solver-opt} is for finding the smallest witness with Z3 SMT optimizer; \textbf{total} is the total running time of \spjudOpt.}
  \label{fig:query-complexity-vs-time}
  \end{minipage}\\[-1mm]
  \begin{minipage}{0.26\linewidth}
  	\centering
  \includegraphics[width=\linewidth]{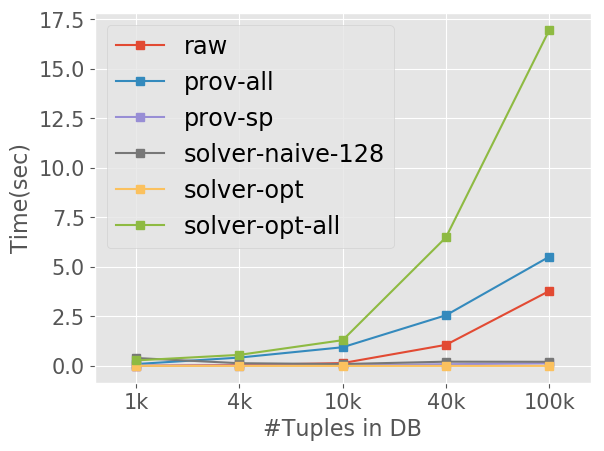}
	\caption{Average Running Time of Each Component.}
	\label{fig:running-time-dbsize}
  \end{minipage}
  \hspace{0.03\linewidth}
  \begin{minipage}{0.63\linewidth}
  \begin{minipage}{0.413\linewidth}
	\centering
  \includegraphics[width=\linewidth]{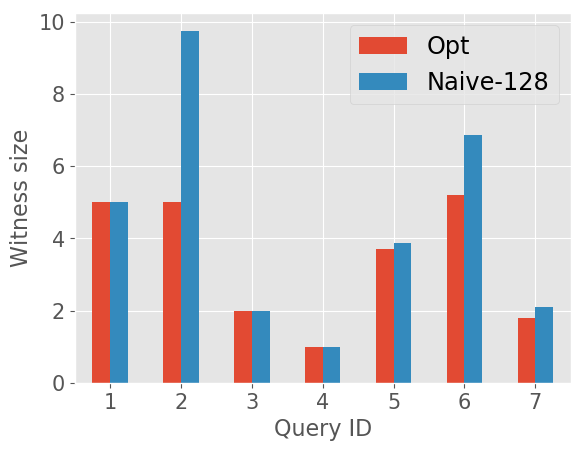}
  \end{minipage}
  \hfil
  \begin{minipage}{0.413\linewidth}
	\centering
  \includegraphics[width=\linewidth]{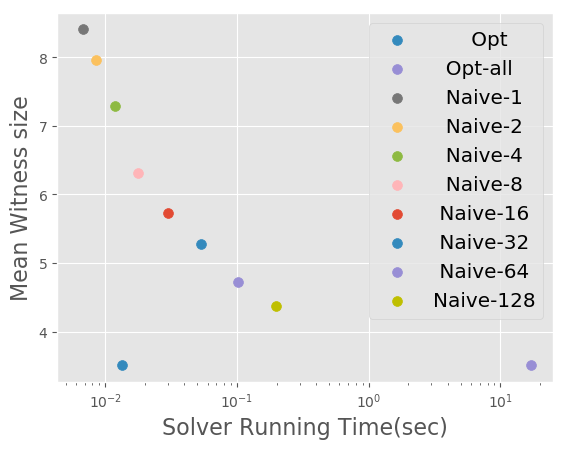}
  \end{minipage}
  \caption{(a) Witness size vs. solver strategies and Queries, \#tuples=100K; (b) Witness size vs. solver strategies and runtime, \#tuples = 100K}
  \label{fig:witness-size-solver}
  \end{minipage}
    \vspace{-2mm}

 \end{figure*}
\cut{
\mypar{Solver strategy vs. time} We experiment different constraint-solving strategies: \textbf{Naive-128} is to use the SMT solver to get satisfying models to the Boolean formula of how-provenance, until there are no more satisfying models or it finishes enumerating M different models (we chose M=128); \textbf{Opt} is to use an SMT optimizer to directly find the model with least number of variables set to true. In Figure~\ref{fig:running-time-dbsize}, we can see that \textbf{Opt} is much faster than \textbf{Basic} for smaller relations. 
}
\cut{
\begin{figure}
\centering
\includegraphics[width=\linewidth]{imgs/new_plots/witness_size_against_solver_time.png}
\caption{Solver Runtime Against Witness Size}
\label{fig:witness-size-time}
\end{figure}
}

\mypar{Solver strategy vs. witness size} Since our goal is to find the
smallest witness, the metric for evaluating the quality of the
witnesses as explanations is the size. We experiment different
constraint-solving strategies: \textbf{Naive-*} is to use the Z3 SMT
solver to get satisfying models to the Boolean formula of
how-provenance, until there are no more satisfying models or it
finishes enumerating M different models (we chose M=128); \textbf{Opt}
is to use Z3 SMT optimizer to directly find the model with least
number of variables set to true. \textbf{Naive-*} is not satisfying
because there is no guarantee on the model size it
finds. Figure~\ref{fig:witness-size-solver} summarizes the
results. Since the SAT solver used by \textbf{Naive-*} can return an
arbitrary model every time, we repeat each experiment with
\textbf{Naive-*} 10 times and report the average minimum witness size
found among the 10 repetitions. \textbf{Opt} always return a smaller
witness compared to \textbf{Naive-*}, while the runtime overhead
compared to even \textbf{Naive-1} is negligible. Of course,
performance of these approaches heavily depends on the solver
implementation; a comprehensive evaluation would be beyond the scope
of this paper.  Here, we simply observe that our implementation of
\textbf{Opt} provides good performance and solution quality in
practice, as it cannot be easily beaten by simply enumerating a number
of models.

\vspace{-2mm}
\subsection{Synthetic Aggregate Queries}
We experimented on the TPC-H benchmark database generated at scale 1 on queries Q4, Q16, Q18, Q21, and a modified Q21-S with an additional selection on aggregate value at the top of the query tree. We choose these queries because they do not involve arithmetic operations on aggregate functions. First we dropped the ORDER BY operator and rewrote these queries using \theRADB{}, then we experimented both the provenance for aggregate queries approach (\emph{\aggBasic{}}) and the heuristic approach (\emph{\aggHeu{}})  discussed in Section~\ref{sec:algo_agg}. We also experimented provenance for aggregate queries approach with parameterization (\emph{\aggParam}) on Q18 (it has a selection predicate with aggregate value). We intentionally made two wrong queries for each query, of which the errors include different selection conditions, incorrect use of difference, and incorrect position of projection. These are common errors in the students queries from the previous experiment.


\begin{figure*}
\scriptsize
  \begin{minipage}{0.98\linewidth}
  \begin{minipage}{0.62\linewidth}
  \centering
  \begin{tabular}{|>{\centering\arraybackslash}p{4em}|>{\centering\arraybackslash}p{5em}|>{\centering\arraybackslash}p{5em}|>{\centering\arraybackslash}p{5em}|>{\centering\arraybackslash}p{5em}|>{\centering\arraybackslash}p{5em}|>{\centering\arraybackslash}p{5em}|}
  \hline  
  \multirow{2}{*}{Query} & \multicolumn{3}{c|}{\aggBasic{}} & \multicolumn{3}{c|}{\aggHeu{}}\\ \cline{2-7}
  & Raw Query Eval. Time & Prov. Query Eval. Time & Solver Runtime & Raw Query Eval. Time & Prov. Query Eval. Time & Solver Runtime \\
  \hline
  Q4 &  3.6036 & 4.0403  & timeout & 2.1382 & 0.0029 & 0.0151\\
  Q16 & 0.8676 &  0.1349 & 0.2471 &  0.7618 & 0.1084 & 0.0022\\
  Q18  & 6.8751 & 0.0086 &  0.0134 & 14.2513 & 0.0130 &  0.0039\\
  Q21& 21.5184 & 2.6205 & 31.1106  & 21.8072 & 0.0577 & 0.0066 \\
  Q21-S& 21.5408 & 2.8034 & 155.6828  & 22.1634 & 0.0524 & 0.0061 \\
  \hline
  \end{tabular}
  \caption{Computation time (sec.), for the TPC-H benchmark, scale=1, timeout = not finishing after 2 hours}
  \label{table:running-time-agg}
\end{minipage}
\hfil
\begin{minipage}{0.28\linewidth}
  \centering
  \begin{tabular}{|c|>{\centering\arraybackslash}p{6em}|>{\centering\arraybackslash}p{6em}|}
  \hline  
  & Solver Runtime (sec.) &Size of Counterexample \\
  \hline
   \aggBasic{} & 0.0134 & 25.3\\
   \aggParam{} & 0.0210  & 7.5 \\
  \hline
  \end{tabular}
  \caption{Effectiveness of algorithm with parameterization, on TPC-H Q18, scale-factor = 1}
  \label{table:agg-parameterization}
\end{minipage}
\end{minipage}
\vspace{-3.5mm}
\end{figure*}

\cut{
\begin{table*}[htb]
\scriptsize{
  \centering
  \begin{tabular}{|>{\centering\arraybackslash}p{4em}|>{\centering\arraybackslash}p{5em}|>{\centering\arraybackslash}p{5em}|>{\centering\arraybackslash}p{5em}|>{\centering\arraybackslash}p{5em}|>{\centering\arraybackslash}p{5em}|>{\centering\arraybackslash}p{5em}|}
  \hline  
  \multirow{2}{*}{Query} & \multicolumn{3}{c|}{\aggBasic{}} & \multicolumn{3}{c|}{\aggHeu{}}\\ \cline{2-7}
  & Raw Query Eval. Time & Prov. Query Eval. Time & Solver Runtime & Raw Query Eval. Time & Prov. Query Eval. Time & Solver Runtime \\
  \hline
  Q4 &  3.6036 & 4.0403  & timeout & 2.1382 & 0.0029 & 0.0151\\
  Q16 & 0.8676 &  0.1349 & 0.2471 &  0.7618 & 0.1084 & 0.0022\\
  Q18  & 6.8751 & 0.0086 &  0.0134 & 14.2513 & 0.0130 &  0.0039\\
  Q21& 21.5184 & 2.6205 & 31.1106  & 21.8072 & 0.0577 & 0.0066 \\
  Q21-S& 21.5408 & 2.8034 & 155.6828  & 22.1634 & 0.0524 & 0.0061 \\
  \hline
  \end{tabular}
  \caption{Computation time (sec.), for the TPC-H benchmark, scale=1, timeout = not finishing after 2 hours}
  \label{table:running-time-agg}
  \vspace{-2mm}
}
\end{table*}}
Figure~\ref{table:running-time-agg} includes the runtime of our algorithms to find the smallest counterexample for each TPC-H query we experiment. We present a breakdown of the execution time of our solutions: raw query evaluation time, provenance query evaluation time, SMT-solver running time. We find that the heuristic algorithm performs well for queries where the aggregation operators are at the top of the query tree. While the performance of the provenance for aggregate query algorithm decreases as the database size increases, and is significantly affected by the number of tuples in the group (The SMT solver does not scale well).

\cut{
\begin{lstlisting}[caption={TPC-H Q18, the parameter in the having predicate was set to 200, and the order by operator was dropped},label={tpch-q18}]
SELECT c_name, c_custkey, o_orderkey, 
  o_orderdate, o_totalprice, SUM(l_quantity)
FROM customer, orders, lineitem
WHERE
    o_orderkey IN (
        SELECT l_orderkey
        FROM lineitem
        GROUP BY l_orderkey 
        HAVING SUM(l_quantity) > ':1'
    )
    AND c_custkey = o_custkey 
    AND o_orderkey = l_orderkey
GROUP BY c_name, c_custkey, o_orderkey, 
o_orderdate, o_totalprice
ORDER BY o_totalprice DESC, o_orderdate;
\end{lstlisting}
}
For Q18, since it involves an aggregate predicate, we experiment the effectiveness of the algorithm with parameterization. Figure~\ref{table:agg-parameterization} shows the solver runtime and the size of the counterexample of the provenance for aggregate query algorithm with/without parameterization. The size of the counterexample is reduced by 70\% while the runtime only increases from 0.0134 seconds to 0.0210 seconds.
\cut{
\begin{table}[htb]
\scriptsize{
  \centering
  \begin{tabular}{|c|>{\centering\arraybackslash}p{6em}|>{\centering\arraybackslash}p{6em}|}
  \hline  
  & Solver Runtime (sec.) &Size of Counterexample \\
  \hline
   \aggBasic{} & 0.0134 & 25.3\\
   \aggParam{} & 0.0210  & 7.5 \\
  \hline
  \end{tabular}
  \caption{Effectiveness of algorithm with parameterization, on TPC-H Q18, scale-factor = 1}
  \label{table:agg-parameterization}
    \vspace{-2mm}}
\end{table}
}





\vspace{-2mm}
\section{User Study}
\label{sec:user-study}

Since one motivation of our work is to provide small examples as
explanations of why queries are incorrect, we built our \OurSys{} as a
web-based teaching tool and deployed in an undergraduate database
class in a US university in Fall 2018 with about 170 students.
For one homework assignment, students needed to write relational
algebra queries to answer 10 questions against a database of six
tables about bars, beers, drinkers, and their relationships.  The
difficulties of these 10 problems range from simple to very difficult.
The students had a small sample database instance to try their queries
on.  Their submissions were tested by an auto-grader against a large,
hidden database instance designed to exercise more corner cases and
catch more errors; if these answers differed from those returned by
the correct queries (also hidden), the students would see the failed
tests with some addition information about the error (but not the
hidden database instance or the correct queries).  The final
submissions were then graded manually informed
by the auto-grader results; partial credits were given.  For the
purpose of this user study, we normalize the student score for each
question to $[0, 100]$.

We did not wish to create unfair advantages for or undue burdens on
students with our user study.  This consideration constrained our user
study design.  For example, we ruled out the option of dividing
students into groups where only some of them benefit from \OurSys{};
we also ruled out creating additional homework problems without
counting them towards the course grades.  Therefore, we made the use
of \OurSys{} completely optional (and with no extra incentives other
than the help \OurSys{} offers itself).  \OurSys{} was given the
correct queries and the same database instance used by the auto-grader
for testing.  If a student query returned an incorrect result,
\OurSys{} would show a small database instance (a subset of the hidden
one), together with the results of the incorrect query and the hidden
correct query on this small instance.  We made \OurSys{} available for
only 5 out of the 10 problems.  Leaving some problems out allowed us
to study the same student's performance on different problems might be
influence by the use of \OurSys{}.  The 5 problems were chosen to
cover the entire range of difficulties:
\cut{
(b) Find the names of drinkers who frequent any bar serving
  Corona.
(d) Find the names of drinkers who frequent both James Joyce
  Pub and Satisfaction.
(e) Find the names of bars frequented by either Ben or Dan, but
  not both.
(g) For each bar, find the drinker who frequents it the
  greatest number of times.
(h) Find names of all drinkers who frequent only those bars
  that serve some beers they like.
  }

\begin{itemize}\itshape
 \vspace{-1mm}
\item[(b)] Find drinkers who frequent any bar serving
  Corona.
\item[(d)] Find drinkers who frequent both JJ Pub and Satisfaction.
\item[(e)] Find bars frequented by either Ben or Dan, but
  not both.
\item[(g)] For each bar, find the drinker who frequents it the
  greatest number of times.
\item[(h)] Find all drinkers who frequent only those bars
  that serve some beers they like.
 \vspace{-1.5mm}
\end{itemize}
Students must use basic relational algebra; in particular, they were
not allowed to use aggregation.  Problems (g) and (i) are more
challenging than others: (g) involves non-trivial uses of self-join
and difference; (i) involves two uses of difference.

We released \OurSys{} a week in advance of the homework due date.  We
collected usage patterns on \OurSys{}, as well as how students
eventually scored on the homework problems.  Ideally, we wanted to
answer the following questions: i)~Did students who used \OurSys{} do
better than those who didn't? ii)~For students who used \OurSys{}, how
did they do on problems with and without \OurSys{}'s help?  We should
note upfront that we expected no simple answers to these questions, as
scores could be impacted by a variety of factors, including the inherent
difficulty of a question itself, individual students' abilities and
motivation, as well as the learning effect (where one gets better at
writing queries in general after more exercises).  Therefore, to
supplement quantitative analysis of usage patterns and scores, we also
gave an optional, anonymous questionnaire to all
students after the homework due date.

\begin{figure*}[!h]
\scriptsize
\begin{minipage}{0.98\linewidth}\centering
\begin{minipage}{0.6\linewidth}\centering
\begin{minipage}{1.0\linewidth}\centering
	\begin{tabular}{|>{\centering\arraybackslash}p{4em}|>{\centering\arraybackslash}p{3em}|>{\centering\arraybackslash}p{7em}|>{\centering\arraybackslash}p{4em}|>{\centering\arraybackslash}p{7em}|}
	\hline	
	\multirow{2}{*}{Problem} & \multicolumn{2}{c|}{\# of users} & \multicolumn{2}{c|}{average \# of attempts} \\ \cline{2-5}
	& total & who got a correct answer eventually & over all users & before a correct answer\\
	\hline
	(b) & 102 & 93 & 4.08 & 1.79 \\
	(d) & 93 & 93 & 3.12 & 1.57 \\
	(e) & 100 & 95 & 5.24 & 3.45 \\
	(g) & 99 & 91 & 5.90 & 3.76 \\
	(i) & 120 & 94 & 11.10 & 7.46\\
	\hline
	\end{tabular}
	\caption{Statistics on \OurSys{} usage.}
	\label{table:user-study-stats}
\end{minipage}
	\vfil
\begin{minipage}{1.0\linewidth}
	\begin{tabular}{|>{\centering\arraybackslash}p{8em}|>{\centering\arraybackslash}p{4em}|>{\centering\arraybackslash}p{4em}|>{\centering\arraybackslash}p{4em}|>{\centering\arraybackslash}p{4em}|>{\centering\arraybackslash}p{4em}|>{\centering\arraybackslash}p{4em}|}
	\hline	
          Did the student use
          & \multirow{2}{*}{No} & \multirow{2}{*}{Yes} 
          & \multicolumn{4}{c|}{Time of the first use (before due date)}  \\ \cline{4-7}
          \OurSys{} for (i)? & & & 5-7 days  & 3-4 days & 2 days & 1 day \\ \hline
        \# of students & 49 & 120 &45 & 30 & 16 & 29 \\
        \hline
        Mean score on (i) & 89.80 & 94.40 & 97.14 & 99.05 & 91.96 & 86.70 \\
	Std.\ dev.\       & 30.58 & 19.00 & 15.41 & 5.22  & 25.54 & 26.16 \\
        \hline
        Mean score on (h) & 88.34 & 93.57 & 96.83 & 95.24 & 95.54 & 85.71 \\
	Std.\ dev.\       & 31.77 & 20.86 &  14.89 & 18.12 & 17.86 & 30.06 \\
	Mean score on (j) & 85.46 & 85.42 & 96.67 & 90.00 & 82.81 & 64.66 \\
	Std.\ dev.\       & 34.17 & 34.39 & 16.51 & 30.51 & 37.33 & 47.02 \\
	\hline
	\end{tabular}
	\caption{Comparison of performance on (h) and (j) between students whether they used \OurSys{} for (i) or not.}
	\label{table:user-study-stats-grade-i}
\vspace{-2mm}
\end{minipage}
\end{minipage}
\hfil
\begin{minipage}{0.38\linewidth}\centering
\begin{subfigure}
\centering
\includegraphics[width=\linewidth]{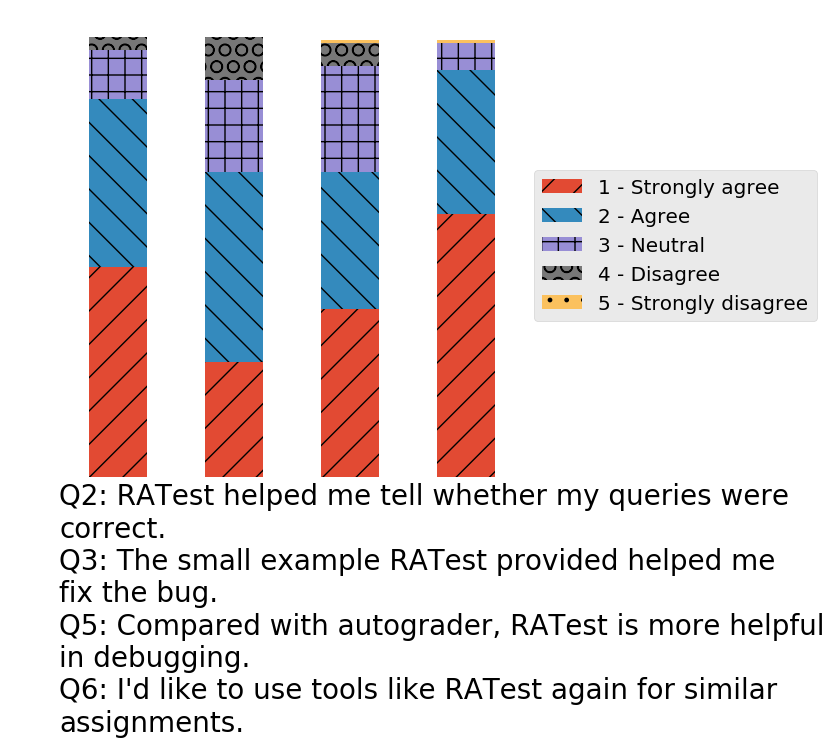}
\caption{Results of student feedback.}
\label{fig:user-study-feedback}
\end{subfigure} 
\end{minipage}
\end{minipage}
\vspace{-3mm}
\end{figure*}

\cut{
\begin{table*}[htb]\scriptsize{
	\centering
	\begin{tabular}{|>{\centering\arraybackslash}p{4em}|>{\centering\arraybackslash}p{3em}|>{\centering\arraybackslash}p{10em}|>{\centering\arraybackslash}p{4em}|>{\centering\arraybackslash}p{10em}|}
	\hline	
	\multirow{2}{*}{Problem} & \multicolumn{2}{c|}{\# of students} & \multicolumn{2}{c|}{Mean \# of attempts} \\ \cline{2-5}
	& in total & who made a ``correct'' submission on \OurSys{} & of all students & before the first ``correct'' submission on \OurSys{}\\
	\hline
	(b) & 102 & 93 & 4.08 & 1.79 \\
	(d) & 93 & 93 & 3.12 & 1.57 \\
	(e) & 100 & 95 & 5.24 & 3.45 \\
	(g) & 99 & 91 & 5.90 & 3.76 \\
	(i) & 120 & 94 & 11.10 & 7.46\\
	\hline
	\end{tabular}
	\caption{Cumulative Statistics of \OurSys{} Usage}
	\label{table:user-study-stats}}
\end{table*}}

\begin{table}[htb]
\scriptsize
	\centering
	\begin{tabular}{|>{\centering\arraybackslash}p{5em}|>{\centering\arraybackslash}p{8em}|>{\centering\arraybackslash}p{3em}|>{\centering\arraybackslash}p{3em}|}
	\hline	
	\multicolumn{2}{|p{13em}|}{Did the student use \OurSys{}?} &  No & Yes \\ \hline
	\multirow{2}{*}{\shortstack{Problem (b)}} & \# of students & 67 & 102 \\
	& Mean score & 100.00 & 100.00 \\
	& Std.\ dev.\ & 0.00  & 0.00 \\
	\hline
	\multirow{2}{*}{\shortstack{Problem (d)}} & \# of students & 76 & 93 \\
	& Mean score & 100.00 & 100.00 \\
	& Std.\ dev.\ & 0.00  & 0.00 \\
	\hline
	\multirow{2}{*}{\shortstack{Problem (e)}} & \# of students & 69 & 100 \\
	& Mean score & 99.03 & 99.67 \\
	& Std.\ dev.\ & 5.63 & 3.33 \\
	\hline
	\multirow{2}{*}{\shortstack{Problem (g)}} & \# of students & 70 & 99 \\
	& Mean score & 92.38 & 97.98 \\
	& Std.\ dev.\ & 26.11 & 14.14 \\
	\hline
	\multirow{2}{*}{\shortstack{Problem (i)}} & \# of students & 49 & 120 \\
	& Mean score & 89.80 & 94.40 \\
	& Std.\ dev.\ & 30.58 & 19.00 \\
	\hline
	\end{tabular}
	\caption{Comparison of performance between students who did
          not use \OurSys{} and those who did, on problems for which
          \OurSys{} was available.}
	\label{table:user-study-stats-grade}
 \vspace{-2mm}
\end{table}

\vspace*{-1ex}
\textbf{Quantitative Analysis of Usage Patterns and Scores.~}
Before exploring the impact of \OurSys{} on student scores, let us
examine some basic usage statistics, summarized in Figure~\ref{table:user-study-stats}.  Overall, $137$ students (more
than $80\%$ of the class) attempted a total of $3{,}146$ submissions
to \OurSys{}.  The sheer volume of the usage speaks to the demand for
tools like \OurSys{}, and the sustained usage (across problems)
suggests that the students found \OurSys{} useful.  It is also worth
noting that number of attempts reflects problem difficulty; for
example, (i), the most difficult problem, took far more attempts than
other problems.  We also note that while \OurSys{} helped the vast of
majority of its users get the correct queries in the end; some users
never did.  We observed in the usage log some unintended uses of
\OurSys{}: e.g., one student made more than a hundred incorrect
attempts on a problem, most of which contained basic errors (such as
syntax); apparently, \OurSys{} was used to just try queries out as
opposed to debugging queries after they failed the auto-grader.  Such
outliers explain the phenomenon shown in Figure~\ref{table:user-study-stats} where the overall average \# of
attempts were much higher than the average \# before a correct
attempt.

Next, we examine how the use of \OurSys{} helps improve student
scores.  Table~\ref{table:user-study-stats-grade} compares the scores
achieved by students who did not use \OurSys{} versus those who did,
on problem for which we made \OurSys{} available.  For simple problems
such as (b), (d), and (e), there is no little or no difference at all,
because nearly everybody got perfect scores with or without help from
\OurSys{}.  However, for more difficult problems, (g) and (i),
students who used \OurSys{} had a clear advantage, with average scores
improved from 92.38 to 97.98 and from 89.80 to 94.40, respectively.
Of course, within the constraints of our user study, it is still
difficult to conclude how much of this improvement comes from
\OurSys{} itself; it is conceivable that students who opted to use
\OurSys{} were simply more diligent and therefore would generally
perform better than others.  While we cannot definitively attribute
all improvement in student performance to \OurSys{}, we next provide
some evidence that it did help in a significant way.

\cut{
\begin{table*}[htb]\scriptsize
	\centering
	\begin{tabular}{|>{\centering\arraybackslash}p{8em}|>{\centering\arraybackslash}p{4em}|>{\centering\arraybackslash}p{4em}|>{\centering\arraybackslash}p{4em}|>{\centering\arraybackslash}p{4em}|>{\centering\arraybackslash}p{4em}|>{\centering\arraybackslash}p{4em}|}
	\hline	
          Did the student use
          & \multirow{2}{*}{No} & \multirow{2}{*}{Yes} 
          & \multicolumn{4}{c|}{Time of the first use (before due date)}  \\ \cline{4-7}
          \OurSys{} for (i)? & & & 5-7 days  & 3-4 days & 2 days & 1 day \\ \hline
        \# of students & 49 & 120 &45 & 30 & 16 & 29 \\
        \hline
        Mean score on (i) & 89.80 & 94.40 & 97.14 & 99.05 & 91.96 & 86.70 \\
	Std.\ dev.\       & 30.58 & 19.00 & 15.41 & 5.22  & 25.54 & 26.16 \\
        \hline
        Mean score on (h) & 88.34 & 93.57 & 96.83 & 95.24 & 95.54 & 85.71 \\
	Std.\ dev.\       & 31.77 & 20.86 &  14.89 & 18.12 & 17.86 & 30.06 \\
	Mean score on (j) & 85.46 & 85.42 & 96.67 & 90.00 & 82.81 & 64.66 \\
	Std.\ dev.\       & 34.17 & 34.39 & 16.51 & 30.51 & 37.33 & 47.02 \\
	\hline
	\end{tabular}
	\caption{Comparison of performance on (h) and (j) between students whether they used \OurSys{} for (i) or not.}
	\label{table:user-study-stats-grade-i}
\vspace{-2mm}
\end{table*}
}
\cut{
\begin{figure}
\begin{minipage}{1.0\linewidth}
  \begin{minipage}{0.49\linewidth}
	\centering
	\includegraphics[width=\linewidth]{../user_study/grade_g.png}
  \end{minipage}
  \hfill
  \begin{minipage}{0.49\linewidth}
	\centering
	\includegraphics[width=\linewidth]{../user_study/grade_i.png}
  \end{minipage}
  \caption{Students Grade Distribution by \OurSys{} Usage, bar indicates mean grade, and numbers in the x-axis indicates the days before the due when student started to use \OurSys{} on the problem}
  \label{fig:grade-ratest-usage}
  \end{minipage}
 \end{figure}
}

\cut{
As we expect, the statistical evaluation (see Table~\ref{table:user-study-stats-grade}) shows that here is a significant correlation between students' grades of this assignment and their usage of \OurSys{} \cut{(Figure~\ref{fig:grade-ratest-usage})}. We focus on problems (g) and (i) because they are the most challenging ones (almost all students get full score in the (b), (d), and (e)). We divide students into groups according to how early did they start to use \OurSys{} for the problem or they did not use \OurSys{} for the problem at all. 
Students who started to use \OurSys{} for question (g) at least one day earlier than the due (5-7 days:$n = 36, \mu = 100.00, \sigma = 0.00$; 3-4 days: $n=25, \mu = 96.00, \sigma = 20.00$; 2 days:$n = 16, \mu = 100.00, \sigma = 0.00$ )received higher grades compared to those who did not ($n = 70, \mu = 92.38, \sigma = 26.11$) --- two groups are of significant difference (5-7 days: $\nu=69, t=2.442, p < 0.025$; 2 days: $\nu=69, t=2.442, p < 0.025$). Moreover, students also received higher grade on (f) (5-7 days:$n = 36, \mu = 100.00, \sigma = 0.00$; 3-4 days: $n=25, \mu = 100.00, \sigma = 0.00$; 2 days:$n = 16, \mu = 100.00, \sigma = 0.00$), which is the similar problem with (g) but not available in \OurSys{}, if they used \OurSys{} for (g) at least one day earlier than the due, compared to those without trying (g) ($n = 70, \mu = 91.43, \sigma = 26.43$) --- also a significant difference ($\nu=69, t=2.714, p < 0.025$ for all three groups). 
We conjecture that the counterexample from \OurSys{} can also benefit students' understanding for the similar problems. We can draw similar conclusions on problem (i): (3-4 days:$n = 30, \mu = 99.05, \sigma = 5.22$) received higher grades compared to those who did not ($n = 49, \mu = 89.80, \sigma = 30.58$), which indicates a significant difference ($\nu=52, t=2.06, p < 0.05$). We notice that the improvement in grades are less significant compared with problem(g), and students even received lower grades on (i) and (h) if they started to try on \OurSys{} in the last day. It may be because (i) and (h) are harder than (g) and (f).}
\cut{
Below we describe one typical use case from the submission log of students, which is representative of the students' overall experience. A student submits an RA query for problem (i), the hardest one whose standard solution involves two difference operator, one join operator, and three projection operators over four input relations. The first time, s/he forgets the foreign key constraint and only involves three relations in the query, and then \OurSys{} returns a counterexample with only one tuple in the relation that the student forgets. Twenty seconds later, the student notices this issue and then add the relation to the query. However, the query only involves one difference operator and two joins, and \OurSys{} returns a counterexample with eight tuples showing that there is one result tuple in the standard output but not in the student's query output. For the third time, the student submits a query with two differences and two joins. Eight minutes later, the student finally submits a correct query. The use case provided evidence that \OurSys{} provides counterexamples as effective explanations, and quickly helps students debug their queries.
}

Here, we zoom in on the three most difficult problems, (h), (i), and
(j); \OurSys{} was only made available for (i).  Problem (h)
(\emph{find all drinkers who frequent only those bars that
  serve \textbf{some} beers they like}) is quite similar to (i) (the
difference being ``some beers'' vs.\ ``only beers'').  Problem (j)
(\emph{find all (bar1, bar2) pairs where the set of beer served at
  bar1 is a proper subset of those served at bar2}) on other hand
requires very different solution strategy.  Between those who did not
use \OurSys{} for (i) and those who did,
Figure~\ref{table:user-study-stats-grade-i} (focus on the first three
columns and ignore the rest for now) compares their scores on  (h)
and (j).  We see that for the similar problem (h), those who used
\OurSys{} on (i) significantly improved their scores on (h), with a
degree comparable to the improvement on (i).  For
the dissimilar problem (j), those who used \OurSys{} no (i) showed no
improvement in their scores on (j)---the two score distributions are
practically the same.  We make two observations here.
First, it is clear that not all improvement in student performance can
be explained by student ``diligence'' alone; otherwise we would have
seen higher performance on (j) for students who used \OurSys{} on (i).
Second, there is clearly a learning effect: using \OurSys{} for one
problem can help with a similar problem: (i) helps (h).

Figure~\ref{table:user-study-stats-grade-i}, in its last four columns,
also breaks down the statistics by when a student started to work on
Problem (i).  Not surprisingly, we see that ``procrastinators'' (those
who started very close to the due date) performed clearly worse than
others.  If somebody started to work on (i) using \OurSys{} only the
day before the homework was due, this individual would be expected to
perform even worse than an ``average'' student who opted not to use
\OurSys{} at all, especially for the last problem.  It would have been
nice if we can similarly break down the statistics for students who
opted not to use \OurSys{} at all, but it was not possible in that
case to know when they started to work on the problems.  We could only
conjecture that a similar trend might exist for procrastinators, so
using \OurSys{} did not hurt any individual's performance.

\textbf{Results of Anonymous Questionnaire.~}
\cut{
\begin{figure}[H]

\centering
\includegraphics[width=\linewidth]{../user_study/student_feedback.png}
\caption{Results of the student feedback.}
\label{fig:user-study-feedback}
\end{figure} 
}
We collected 134 valid responses to our anonymous questionnaire;
Figure~\ref{fig:user-study-feedback} summarizes these responses.  The
feedback was largely positive.  For instance, 69.4\% of the
respondents agree or strongly agree that the explanation by
counterexamples helped them understand or fix the bug in their
queries, and 93.2\% would like to use similar tools in the future for
assignments on querying databases.  We also asked students which
problems they found \OurSys{} to be most helpful (multiple choices
were allowed): 58\% voted for (g) and 94\% voted for (i),
which were indeed the most challenging ones.  We also solicited
open-ended comments on \OurSys{}.  These comments were overwhelming
positive and reinforces our conclusions from the quantitative
analysis, e.g.:
 \cut{``It was incredibly useful debugging edge cases in the larger
  dataset not provided in our sample dataset with behavior not
  explicitly described in the problem set.''
 ``Overall, very helpful and would like to see similar testers
  for future assignments.'' ``I liked how it gave us a concise example showing what we did
  wrong.''}

\begin{itemize}\itshape
\vspace{-2mm}
\item ``It was incredibly useful debugging edge cases in the larger
  dataset not provided in our sample dataset with behavior not
  explicitly described in the problem set.''
\item ``Overall, very helpful and would like to see similar testers
  for future assignments.''
\item ``I liked how it gave us a concise example showing what we did
  wrong.''
 \vspace{-3mm}
\end{itemize}

\textbf{Summary.~}
Overall, the conclusion of our user study is positive.  Students who
used \OurSys{} did better, and their improvement cannot be attributed
all to merely the fact that they opted to use an additional
tool---\OurSys{} did add real value.  Also, using \OurSys{} on
one problem could also help with another problem,
provided that the problems are similar.  Finally, most students
found \OurSys{} very useful and would like to use similar systems in
the future.


\vspace{-2mm}
\section{Related Work}
\label{sec:related-work}

\textbf{Test data generation.~} Cosette\cite{chu2017cosette}, which targets at deciding SQL equivalence without any test instances, encodes SQL queries to constraints using symbolic execution, and uses a constraint solver to find counterexamples over which the two queries return different results. Cosette uses incremental solving to dynamically increase the size of each symbolic relation, thus it will return counterexamples with least number of distinct tuples, but the total number of tuples is not minimized. ALso, it deals with only integer domain and returns counterexamples of arbitary values, which may be hard for people to read. 
XData\cite{chandra2015data} generates test data by covering different types of query mutants of the standard query, without looking into wrong queries. Qex\cite{veanes2010qex} is a tool for generating input relations and parameter values for a parameterized SQL query that also uses the SMT solver Z3, which aims at unit testing of SQL queries. It does not support nested queries and set operations and hence it cannot work for our problem because of our use of difference. 
\par
\textbf{Provenance and witness.~} 
Data provenance has primarily been studied for non-aggregate queries: Buneman et al.\cite{buneman2001and} defined why-provenance of an output tuple in the result set, which they call the witness basis. Green et al.\cite{green2007provenance} introduced how-provenance with the general framework of \emph{provenance semiring}. Sarma et al.\cite{sarma2008exploiting} gave algorithms for computing how-provenance over various RA operators in the Trio system. Amsterdamer et al. \cite{amsterdamer2011provenance} extended the \emph{provenance semiring} framework\cite{green2007provenance} to support aggregate queries. Besides these theoretical works, there are systems that capture different forms of provenance \cite{karvounarakis2010querying, glavic2009perm, arab2014generic, green2007update, psallidas2018smoke, senellart2018provsql}. However, to the best of our knowledge,
no prior work considered \SWP/\SCP, and there are no systems available that support provenance for general SPJUD and aggregate queries.
\par
\textbf{Teaching or grading tool for programming.~}
Due to popularity of students taking programming-related courses, teaching and grading tools for programming assignments that automatically generate feedback for submissions are receiving a lot of attention\cite{parihar2017automatic, kaleeswaran2016semi, gupta2017deepfix}. In the database community, Chandra et al. built XData\cite{chandra2015data} that can be used for grading by generating multiple test cases for different query mutants, as well as giving immediate feedback to student. The latter is similar to our \OurSys{} tool. Jiang and Nandi\cite{jiang2015designing, nandi2015breathing} designed and prototyped interactive electronic textbook to help students get rapid feedbacks from querying the database with novel interaction techniques.
\par
\textbf{Explanations for query answers.~}
Explanations based on tuples in the provenance has been recently studied by Wu-Madden~\cite{WM13} and Roy-Suciu~\cite{RS14}. 
These works take an aggregate query and a user question as input, find tuples whose removal will change the answer in the opposite direction, and returns a compact summary as explanations. 





\balance


\bibliographystyle{ACM-Reference-Format}
\bibliography{grading_main}  
\clearpage
\begin{appendix}

\section{Proofs of Theorems in Section~\ref{sec:spjud}}\label{sec:spjud-proof}

We will give the proofs of theorems in Table~\ref{table:spjud-complexity} in this section.

\subsection{SJ and SPU Queries}\label{sec:SJ-SPU}

Given $t \in Q_1(D) \setminus Q_2(D)$, the poly-time algorithm for SJ and SPU queries involve finding a smallest witness of $t$ in $D$ for $Q_1$, and using the fact that $Q_2$ is monotone and $t \notin Q_2(D)$, $\forall D' \subseteq D$, $t \notin Q_2(D')$.
\begin{Theorem}
\label{thm:sj-poly}
The SWP for two SJ queries is poly-time solvable in combined complexity.
\end{Theorem}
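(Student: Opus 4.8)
The plan is to reduce the $\SWP$ instance to the problem of finding a smallest witness of $t$ for $Q_1$ alone, and then to observe that for an SJ query such a witness is essentially unique and can be read off directly from the tuple $t$.

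First I would exploit monotonicity. Both $Q_1$ and $Q_2$ are SJ queries and hence monotone: $D'' \subseteq D'$ implies $Q_i(D'') \subseteq Q_i(D')$. Since we are given $t \in Q_1(D) \setminus Q_2(D)$, monotonicity of $Q_2$ gives $t \notin Q_2(D')$ for \emph{every} $D' \subseteq D$. Therefore, for any $D' \subseteq D$ we have $t \in (Q_1 - Q_2)(D')$ iff $t \in Q_1(D')$, so $\witset(Q_1 - Q_2, D, t) = \witset(Q_1, D, t)$, and it suffices to find a smallest element of the latter set.

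Next comes the structural observation about SJ queries. Write $Q_1 = \sigma_\theta(R_{i_1} \Join \cdots \Join R_{i_m})$, a join of $m$ relation occurrences (possibly with repetitions, for self-joins), followed by a selection $\sigma_\theta$. Because $Q_1$ involves no projection and no union, its output schema is the union (up to renaming) of the schemas of the $m$ occurrences, so any output tuple $t$ determines, for each occurrence $j$, a unique sub-tuple $t_j := t[\mathrm{attr}(j)]$ that must lie in the corresponding input relation. Conversely, $t \in Q_1(D')$ iff $t_j \in D'$ for all $j$ and $t$ satisfies $\theta$ (the latter already holds, since $t \in Q_1(D)$). Hence $w^\star := \{t_1, \dots, t_m\}$, viewed as a set so that repeated tuples collapse, is contained in every witness of $t$ for $Q_1$ and is itself a witness; it is therefore the unique minimal witness, and in particular the smallest. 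Equivalently, $\Prv_{Q_1(D)}(t)$ is a single monomial and $w^\star$ is the set of its variables.

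Finally, the algorithm is immediate: parse $Q_1$ to identify the $m$ occurrences and their attribute blocks, slice the given tuple $t$ into $t_1, \dots, t_m$, (optionally) check each $t_j$ against its input relation, and return $\{t_1, \dots, t_m\}$. This runs in time polynomial in $|D|$ and in the size of $Q_1$, giving polynomial combined complexity. The only point that needs care is the bookkeeping for self-joins and renaming --- making precise which attribute block of $t$ corresponds to which relation occurrence, so that the slicing $t \mapsto (t_1,\dots,t_m)$ is well defined --- but this is purely syntactic and poses no real obstacle; everything else follows from monotonicity of $Q_2$ and the absence of projection in $Q_1$.
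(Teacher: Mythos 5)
Your proposal is correct and follows essentially the same route as the paper's proof: use monotonicity of $Q_2$ to reduce to finding a smallest witness of $t$ for $Q_1$ alone, then observe that the absence of projection means $t$ determines exactly one component tuple per relation occurrence, whose collection is the unique minimal (hence smallest) witness, computable in time polynomial in both $|D|$ and the query size. Your explicit handling of self-joins and the remark that this witness is contained in every witness are slightly more careful than the paper's version, but the underlying argument is the same.
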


\begin{proof} Let $R_1, ..., R_k$ be all the relations that participate in the SJ query $Q_1$. 
For each relation $R_i$, $i \in [1, k]$, there must exist exactly one tuple $t_i = t.R_i$ (the $R_i$ component of $t$), which is part of the witness of $t$ (under set semantic). 
Since each $t_i$ must satisfy all selection conditions for $t$ to appear in $Q_1(D)$, the set $D_t = \{t_i | i \in [1, k]\}$ ensures that $t \in Q_1(D_t)$, and must be minimal.
Since $Q_2$ is monotone and $t \notin Q_2(D)$, we have $t \notin Q_2(D_t)$; hence $t \in (Q_2 - Q_1)(D_t)$. The running time to find $D_t$ is polynomial in $k$, giving polynomial combined complexity. 
\end{proof}


When projection is allowed, an output tuple may have multiple minimal witnesses, and we pick any one of them.

\begin{Theorem}
\label{thm:spu-poly}
The SWP for two SPU queries is polynomial-time solvable in combined complexity.
\end{Theorem}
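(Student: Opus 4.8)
The plan is to exploit the fact that an SPU query contains no join, so the Boolean how-provenance of any output tuple is a \emph{pure disjunction} of single input-tuple variables. First I would establish this structurally by induction on the expression $Q_1$: a selection does not change the annotation of a tuple that survives it; a projection of $t$ collects, disjunctively, the annotations of all input tuples that project onto $t$; and a union combines annotations disjunctively as well. Since conjunction ($\cdot$) is introduced only by join, no product of variables ever appears, so $\Prv_{Q_1(D)}(t) = \bigvee_{j} x_{t'_j}$, where $t'_1, t'_2, \ldots$ enumerate the input tuples from which $t$ can be derived under $Q_1$.

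Next I would observe that this immediately bounds the smallest witness size by $1$. Any single tuple $t'_j$ whose variable occurs in $\Prv_{Q_1(D)}(t)$ satisfies $t \in Q_1(\{t'_j\})$ (selections pass it, projections project it, unions include it), so $\{t'_j\}$ is a witness of $t$ for $Q_1$. Because $t \in Q_1(D) \setminus Q_2(D)$ and $Q_2$ is monotone, $t \notin Q_2(D')$ for every $D' \subseteq D$, in particular $t \notin Q_2(\{t'_j\})$; hence $t \in (Q_1 - Q_2)(\{t'_j\})$, so $\{t'_j\}$ witnesses $t$ in the difference. Since any witness is nonempty whenever one exists (and one does, as $t \in Q_1(D)\setminus Q_2(D)$), $|\{t'_j\}| = 1$ is optimal.

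For the running time I would avoid materializing the (possibly large) provenance formula: instead, scan the input relations and, for each input tuple $t'$, test in time polynomial in the query size whether $t \in Q_1(\{t'\})$ by tracing $t'$ through the selection predicates and projection lists of $Q_1$. Returning the first such $t'$ yields an algorithm polynomial in both data and query size, i.e.\ polynomial combined complexity. If referential constraints are present, retaining $\{t'\}$ may force us to add the tuples it transitively references; chasing these foreign-key dependencies still terminates in polynomial time, so the complexity bound is unaffected (though the optimal witness may then exceed size $1$).

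The only real subtlety is making the provenance structure precise: one must be careful that the induction genuinely never produces a conjunction — e.g., that no Cartesian product is sneaking in under the guise of a ``join-free'' operator — and that duplicate elimination after projection or union is handled so that a single input tuple really does regenerate $t$. Once this structural lemma is pinned down, the rest of the argument is routine, paralleling the proof of Theorem~\ref{thm:sj-poly}.
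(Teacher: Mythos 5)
Your proposal is correct and takes essentially the same approach as the paper: because an SPU query is join-free, a single input tuple suffices to regenerate $t$ under $Q_1$, it can be found by scanning the input relations, and monotonicity of $Q_2$ guarantees $t \notin Q_2(\{t'\})$, so the singleton is an optimal witness. Your provenance-as-pure-disjunction lemma is just a more formal justification of the single-tuple claim that the paper asserts directly, and your remark on foreign-key chasing is a (correct) addition the paper handles elsewhere rather than in this proof.
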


\begin{proof}
We first consider SP queries. Given an output tuple $t$ in $Q_1(D)$, we scan the input relation to find a tuple $t'$ that satisfies the selection condition and whose projected attributes equal to $t$. The smallest witness $D_t$ only consists of only $t'$. 
For SPU queries, we do the same procedure as SP queries. At least one relation will return $t'$. Since $Q_2$ is monotone and $t \notin Q_2(D)$, we have $t \notin Q_2(D_t)$. The running time to find $D_t = \{t'\}$ is polynomial in $k$. 
\end{proof}



\vspace{-2mm}
\subsection{PJ Queries}\label{sec:PJ}
For queries involving both projection and join, we show that it is \NPhard{} in query complexity to find the smallest witness, even when the query can be evaluated in poly-time.
\begin{Theorem}
\label{thm:PJ-hard}
The SWP for two PJ queries is \NPhard{} in query complexity.
\end{Theorem}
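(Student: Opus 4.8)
The plan is to reduce from \textsc{Vertex Cover}: given a graph $G=(V,E)$ and an integer $k$, decide whether $G$ has a vertex cover of size at most $k$. I would build, in polynomial time, two PJ queries $Q_1,Q_2$ over a fixed schema, a database $D$, and a tuple $t\in Q_1(D)\setminus Q_2(D)$ such that the smallest witness of $t$ w.r.t.\ $Q_1-Q_2$ has size exactly $|E|+\tau(G)$, where $\tau(G)$ is the vertex-cover number of $G$. Since $|E|$ and $k$ are computable from the input, asking whether the smallest witness has at most $|E|+k$ tuples then decides the \textsc{Vertex Cover} instance, giving \NPhard{}ness in query complexity.

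\textbf{Gadget.} Let $D$ consist of a binary relation $\mathrm{Inc}$ holding all incidences $\{(e,v): v\in e\}$, a unary relation $\mathrm{Cov}$ holding $\{(v):v\in V\}$, and one empty unary relation $\mathrm{Nil}$. Let $Q_1$ be the Boolean conjunctive (hence PJ) query $\mathrm{Ans}()\leftarrow\bigwedge_{e\in E}\big(\mathrm{Inc}(e,y_e)\wedge \mathrm{Cov}(y_e)\big)$; in relational-algebra form this is a projection onto the empty attribute list of a cross product, over all edges $e$, of the gadget $\mathrm{Cov}\join(\text{$\mathrm{Inc}$ pinned to the constant }e)$, where the constant is realized either directly inside the atom or, if the PJ class is read to forbid constants, by a join with a one-tuple relation. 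Let $Q_2=\pi_\emptyset(\mathrm{Nil})$, which is union-compatible with $Q_1$ and returns $\emptyset$ on every subinstance. Then $Q_1(D)=\{()\}$, so $t:=()$ lies in $Q_1(D)\setminus Q_2(D)$, and since $Q_2$ is monotone with $t\notin Q_2(D)$, the witnesses of $t$ w.r.t.\ $Q_1-Q_2$ are precisely the subinstances $w\subseteq D$ with $()\in Q_1(w)$. (Adding a dummy output column makes everything non-Boolean if one prefers.)

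\textbf{Correctness.} A successful evaluation of $Q_1$ on $w\subseteq D$ is exactly a valuation $y$ that picks, for each edge $e$, an endpoint $y_e\in e$; the tuples it consumes are $\{(e,y_e):e\in E\}\subseteq\mathrm{Inc}$ --- pairwise distinct, hence $|E|$ of them --- together with $\{(y_e):e\in E\}\subseteq\mathrm{Cov}$, of which there are $|\mathrm{image}(y)|$, and $\mathrm{image}(y)$ is a vertex cover because every edge $e$ is covered by $y_e$. Hence every witness has at least $|E|+\tau(G)$ tuples. Conversely, any vertex cover $S$ with $|S|\le k$ yields the valuation ``route each edge to an endpoint in $S$,'' whose corresponding witness has $|E|+|S|\le|E|+k$ tuples. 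So the smallest witness has size at most $|E|+k$ iff $\tau(G)\le k$, completing the reduction.

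\textbf{Main obstacle.} Two points need care. First, one must keep $Q_1$ honestly inside the PJ class; the only subtlety is how the per-edge constant is supplied (a constant in an atom, i.e.\ a constant-selection, or a singleton-relation join), and I would settle this explicitly so that the construction uses pure projection and join over a fixed schema. Second, the size accounting must be airtight: a witness is forced to contain all $|E|$ distinct $\mathrm{Inc}$-tuples of some valuation and one $\mathrm{Cov}$-tuple per chosen endpoint, with no overlap between the two relations, so the witness size splits cleanly as $|E|+|\mathrm{image}(y)|$ and is minimized exactly at $|E|+\tau(G)$. I would also remark that each $Q_1$ built this way is acyclic (a disjoint union of two-atom gadgets) and hence poly-time evaluable, so the hardness genuinely attaches to finding the smallest witness rather than to evaluating the query.
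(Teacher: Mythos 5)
Your proposal is correct and follows essentially the same route as the paper: a reduction from Vertex Cover in which $Q_1$ is a top-level join of per-edge gadgets, each gadget forcing one ``incidence'' tuple per edge plus one shared vertex tuple per chosen endpoint, so that the smallest witness size equals a fixed per-edge count plus the vertex-cover number. The only differences are cosmetic --- the paper encodes incidences in a single degree-$\le 3$ vertex relation with a disjunctive join predicate and uses singleton relations $S_i$ for the per-edge constants, whereas you use a general binary incidence relation (so you need no degree bound) and leave the constant-versus-singleton-join choice to be fixed.
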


\cut{
\begin{figure*}[!h]
\begin{minipage}{0.92\linewidth}\centering
\begin{minipage}{0.52\linewidth}\centering
\begin{minipage}{0.48\linewidth}\centering
\begin{subfigure}
    \centering
      \resizebox{\linewidth}{!}{
      \begin{tikzpicture}
        [scale=.8,auto=left,every node/.style={circle,fill=blue!20,scale=1.5}]
        \node (n6) at (10,10) {v6};
        \node (n4) at (3,10)  {v4};
        \node (n5) at (7,10)  {v5};
        \node (n1) at (1,8) {v1};
        \node (n2) at (3,6)  {v2};
        \node (n3) at (7,6)  {v3};

        \foreach \from/\to/\label in {n2/n1/e1,n2/n3/e2,n5/n3/e3,n4/n5/e4,n5/n6/e5,n1/n4/e6,n4/n2/e7}
          \draw (\from) -- (\to) node [midway, fill=white] {\large \label};

    \end{tikzpicture}
 }
    \end{subfigure}
\end{minipage}
\begin{minipage}{0.48\linewidth}\centering
    \begin{subfigure}
    \centering
        \resizebox{\linewidth}{!}{
\begin{tikzpicture}
  [scale=.8,auto=left,every node/.style={circle,fill=blue!20,scale=1.5}]
  \node (n6) at (10,10) {v6};
  \node (n4) at (3,10)  {v4};
  \node[fill=red] (n5) at (7,10)  {v5};
  \node[fill=red] (n1) at (1,8) {v1};
  \node[fill=red] (n2) at (3,6)  {v2};
  \node (n3) at (7,6)  {v3};

  \foreach \from/\to/\label in {n2/n1/e1,n2/n3/e2,n5/n3/e3,n4/n5/e4,n5/n6/e5,n1/n4/e6,n4/n2/e7}
    \draw (\from) -- (\to) node [midway, fill=white] {\large \label};

\end{tikzpicture}
        }
    \end{subfigure}
  \end{minipage}
 \caption{\small{The instance of vertex cover problem used in our reduction; red vertices are the vertex cover}
 \label{fig:3-vertex-cover-example}}
\end{minipage}
\hfil
\begin{minipage}{0.38\linewidth}
\begin{minipage}[t]{0.22\linewidth}\centering
{\small
  \textbf{R}\\[1mm]
\begin{tabular}{|ccccc|} \hline
\rowcolor{grey} $A$ & $Z$ & $E_1$ & $E_2$ & $E_3$ \\ \hline
$v_1$ & $z$ & $e_1$ & $e_6$ &  $*$ \\
$v_2$ & $z$ & $e_1$ & $e_2$ & $e_7$ \\
$v_3$ & $z$ & $e_2$ & $e_3$ & $*$ \\
$v_4$ & $z$ & $e_4$ & $e_6$ & $e_7$ \\
$v_5$ & $z$ & $e_3$ & $e_4$ & $e_5$ \\
$v_6$ & $z$ & $e_5$ & $*$ & $*$\\
\hline
\end{tabular}
}
\label{tab:table-3-vertex-cover-table-r-pj}

\end{minipage}
\hfill
\begin{minipage}[t]{0.15\linewidth} \centering
{\small
\textbf{$S_1$}\\[1mm]
\begin{tabular}{|cc|} \hline
\rowcolor{grey} $E$ & $Z$ \\ \hline
$e_1$ & $z$ \\
\hline
\end{tabular}
}
...
{\small
\textbf{$S_7$}\\[1mm]
\begin{tabular}{|cc|} \hline
\rowcolor{grey} $E$ & $Z$ \\ \hline
$e_7$ & $z$ \\
\hline
\end{tabular}
}

 \label{tab:table-3-vertex-cover-table-s-pj}
\end{minipage}
\end{minipage}
\end{minipage}

\cut{
\centering
\begin{minipage}{0.88\linewidth} \centering
{\small\upshape
\begin{tabular}{|c|c|c|} \hline
Queries &  & Result over $R, S$ \\ \hline
$q_1$ & $\projection_{Z}(R \join_{R.E_1 = S_1.E \lor R.E_2 = S_1.E \lor R.E_3 = S_1.E} S_1)$ & $(z)$\\ \hline
$q_2$ & $\projection_{Z}(R \join_{R.E_1 = S_2.E \lor R.E_2 = S_2.E \lor R.E_3 = S_2.E} S_2)$ & $(z)$\\ \hline
... & & \\ \hline
$q_7$ & $\projection_{Z}(R \join_{R.E_1 = S_7.E \lor R.E_2 = S_7.E \lor R.E_3 = S_7.E} S_7)$ & $(z)$\\ \hline
$Q_1$ & $q_1(R,S_1) \join q_2(R, S_2) ... \join Q_7(R, S_7)$ & (z)\\
\hline
\end{tabular}
}
\label{tab:table-3-vertex-cover-reduction-queries-pj}
\end{minipage}
\caption{Example Reduction in Theorem~\ref{theo:swp-is-hard-pj}}
\label{example:encode-3-vertex-cover-pj}
}
\end{figure*}

}

\begin{proof}
We prove the theorem by a reduction from the vertex cover problem with vertex degree at most 3, which is known to be \NPcomplete{} \cite{garey1976some} and is defined as follows:
  Given an undirected graph $G(V, E)$ with vertex set $V$ and edge set $E$, where the degree of every vertex is at most 3, decide whether there exists a vertex cover $C$ of at most $p$ vertices such that each edge in $E$ is adjacent to at least one vertex in the set. 
\par
\textbf{Construction.~} Given $G(V, E)$, suppose $V = \{v_1, ..., v_n\}$, and $E = \{e_1, \cdots, e_m\}$.
We encode each vertex as a tuple in the relation $R(A, Z, E_1, E_2, E_3)$. For each vertex $v_i \in V$, $R$ contains a tuple $t_i = (v_i, z, e_{i1}, e_{i2}, e_{i3})$, where $e_{i_1}, e_{i_2}, e_{i_3}$ are identifiers of edges adjacent to $v_i$, $i_1 < i_2 < i_3$. If the degree of $v_i$ is less than 3, the identifiers are replaced by a null symbol ``$*$''. The attribute $Z = z$ is a constant for all tuples. In addition to $R$, we have $m$ relations $S_1,...,S_m$. Each $S_i$, $i \in [1, m]$, has schema $S_i(E, Z)$. For the edge $e_i \in E$,  $S_i$ contains a single tuple $(e_i, z)$. Let $D = (R, S_1, ..., S_m)$ be the database instance.

  Next, we construct $Q_1$ involving $PJ$ that consist of $m$ subqueries as follows:
  For all $i \in [1, m]$, let $q_i = $ \\ $\projection_{Z}(R $ $\join_{R.E_1 = S_i.E \lor R.E_2 = S_i.E \lor R.E_3 = S_i.E} S_i)$, which operates on $S_i$ and $R$, checks for match of $R.E_1, R.E_2$, or $R.E_3$ with  $S_i.E$, and then projects on to $Z$.  Then we construct $Q_1$ = $q_1 \join q_2 \join ... \join q_m]$ using natural join on $Z$. All queries $q_i$ and $Q_1$ have a single attribute $Z$. Note that, initially, $q_i(D) = \{(z)\}$ for all $i \in [1, m]$, and therefore $Q_1(D) = \{(z)\}$ as well. The query $Q_2$ also outputs the attribute $Z$, but not the tuple $\{(z)\}$. We set $Q_2 = \projection_{Z}(R \join_{R.Z \neq S_1.Z} S_1)$ (the choice of $S_1$ is arbitrary), and therefore $Q_2(D) = \{\}$ is empty.
  The tuple $t$ for which we want to find the smallest witness in $(Q_1 - Q_2)(D)$ is $(z)$. In other words, the goal is to find a subinstance $D' = (R', S'_1,...,S'_m),$ $R' \subseteq R, S'_1 \subseteq S_1, ..., S'_m \subseteq S_m$, such that 
  $(z) \in Q_1(D) \setminus Q_2(D)$. 
\par  
  Below we argue that \emph{$G$ has a vertex cover of size $\leq p$, if and only if the SWP instance above has a witness $D'$ of size $\leq p+m$ where $m$ is the number of edges  in $G$}.

\par
\textbf{The ``Only If'' direction.~} Suppose we are given a vertex cover $C$ with at most $p$ vertices in $G$. 
We construct $R_i' = \{t_j ~|~ v_j \in C\}$, and $S_i' = S_i$ for all $i \in [1, m]$.
Since $|C| \leq p$, $|D'| \leq p+m$ since each $S_i$ contains a single tuple.
Since $C$ is a vertex cover, for all edge $e_i = (v_j, v_\ell) \in E$, either $v_j \in C$ or $v_\ell \in C$. Suppose without loss of generality $v_j \in C$. Then  (wlog.) assume $t_j = (v_j, z, e_i, e', e'')$ where $e', e''$ are other two adjacent edges on $v_j$ (they could be $*$ as well if the degree of $v_j$ is $<3$). The tuple $t_j$ and the tuple $S_i(e_i, z)$ will satisfy the join condition of $q_i$ (irrespective of the position of $e_i$ in $t_i$), and the projection will output $(z)$. Since $C$ is a vertex cover, for all $i \in [1, m]$, $q_i(D') = \{(z)\}$. Therefore, $Q_1(D') = \{(z)\}$. $Q_2(D')$ remains empty. Hence $(z) \in Q_1(D') \setminus Q_2(D')$ Therefore, $D'$ is a witness of $(z)$ of size at most $p+m$. 

\par
\textbf{The ``If'' direction.~} For the opposite direction, consider a witness $D' = (R', S'_1, ..., S'_m)$ where $R' \subseteq R, S'_1 \subseteq S_1, ..., S'_m \subseteq S_m, |R'|+|S'_1|+...+|S'_m| \leq p+m$, such that $(z) \in Q_1(D') \setminus Q_2(D')$, \ie, $(z) \in Q_1(D')$.  We construct $C = \{v_i ~|~ t_i \in R'\}$. 
Note that if $(z) \in Q_1(D')$, $(z)$ must be in the result of all subqueries $q_i(D')$, $i \in [1, m]$. And $q_i(D')$ returns $(z)$ if and only if (a) $S'_i$ is not empty (\ie, $S_i' = S_i$ since $S_i$ had only one tuple), and (b) if $e_i = (v_j, v_\ell)$, at least one of $t_j$ or $t_{\ell}$ must appear in $R'$ to satisfy the join condition in $q_i$; otherwise $q_i$ returns an empty result and thus $Q_1$ returns an empty result. Therefore, all $S'_i$ must be equal to $S_i$, $|S'_i| = 1$. Then we have $|S'_1|+...+|S'_m| = m$. Since $|D'| \leq p+m$, $|R'| \leq p$,  and thus we get a vertex cover $C$ of size at most $p$.

  An example reduction is shown in Figure~\ref{fig:Thm:PJ}.
\end{proof}

\begin{figure}[t]
\scriptsize
\begin{minipage}{0.92\linewidth}\centering
\begin{minipage}{0.43\linewidth}\centering
\begin{minipage}{1.0\linewidth}\centering
\subfigure[$G(V, E)$\label{fig:ThmPJ-VC}]{
 \centering
      \resizebox{\linewidth}{!}{
\begin{tikzpicture}
  [scale=.8,auto=left,every node/.style={circle,fill=blue!20,scale=1.5}]
  \node (n6) at (10,10) {v6};
  \node (n4) at (3,10)  {v4};
  \node[fill=red] (n5) at (7,10)  {v5};
  \node[fill=red] (n1) at (1,8) {v1};
  \node[fill=red] (n2) at (3,6)  {v2};
  \node (n3) at (7,6)  {v3};

  \foreach \from/\to/\label in {n2/n1/e1,n2/n3/e2,n5/n3/e3,n4/n5/e4,n5/n6/e5,n1/n4/e6,n4/n2/e7}
    \draw (\from) -- (\to) node [midway, fill=white] {\large \label};

\end{tikzpicture}
}}
\end{minipage}
\vfil
\begin{minipage}[t]{1.0\linewidth}\centering
\subfigure[$S_1, \cdots, S_7$\label{fig:ThmPJ-S}]{
{\small
\begin{tabular}{|cc|} \hline
\rowcolor{grey} $E$ & $Z$ \\ \hline
$e_1$ & $z$ \\
\hline
\end{tabular}
}
...
{\small
\begin{tabular}{|cc|} \hline
\rowcolor{grey} $E$ & $Z$ \\ \hline
$e_7$ & $z$ \\
\hline
\end{tabular}
}
}
\end{minipage}
\end{minipage}
\hfill
\begin{minipage}{0.43\linewidth}
\begin{minipage}[t]{0.22\linewidth}\centering
\subfigure[R\label{fig:ThmPJ-R}]{
{\small
\begin{tabular}{|ccccc|} \hline
\rowcolor{grey} $A$ & $Z$ & $E_1$ & $E_2$ & $E_3$ \\ \hline
$v_1$ & $z$ & $e_1$ & $e_6$ &  $*$ \\
$v_2$ & $z$ & $e_1$ & $e_2$ & $e_7$ \\
$v_3$ & $z$ & $e_2$ & $e_3$ & $*$ \\
$v_4$ & $z$ & $e_4$ & $e_6$ & $e_7$ \\
$v_5$ & $z$ & $e_3$ & $e_4$ & $e_5$ \\
$v_6$ & $z$ & $e_5$ & $*$ & $*$\\
\hline
\end{tabular}
}
}
\end{minipage}
\end{minipage}
\end{minipage}
\caption{\label{fig:Thm:PJ} Example reduction in Theorem~\ref{thm:PJ-hard}}
\end{figure}

\vspace{-3mm}
\subsection{JU Queries}\label{sec:JU}


\begin{Theorem}
\label{thm:JU-hard}
The SWP for two JU queries is \NPhard{} in query complexity.
\end{Theorem}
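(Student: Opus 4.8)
The plan is to reduce from \textsc{Vertex Cover} (NP-complete), using union to play the role that projection played in the PJ reduction (Theorem~\ref{thm:PJ-hard}). Given a graph $G(V,E)$ with $V=\{v_1,\dots,v_n\}$, $E=\{e_1,\dots,e_m\}$, and a bound $p$, I would build a database $D$ (with no integrity constraints) over relations that each have the single attribute $Z$ and hold a single tuple: for every vertex $v_j$, a relation $A_j$ whose unique tuple is $(z)$ (with identifier $a_j$), and one extra relation $B$ whose unique tuple is $(z')$ for a constant $z'\neq z$. The first query is built in two layers: for each edge $e_i=(v_j,v_\ell)$ set $T_i := A_j \cup A_\ell$, and then $Q_1 := T_1 \Join T_2 \Join \cdots \Join T_m$ (natural join on $Z$, so all intermediate and final schemas are just $(Z)$). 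For the second query take $Q_2 := A_1 \Join B$, which is union-compatible with $Q_1$ and evaluates to $\emptyset$ on every subinstance. Then $(z)\in Q_1(D)\setminus Q_2(D)$, and $t=(z)$ is the target tuple of the SWP instance.

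The crux is the how-provenance of $t$: in $T_i$ the tuple $(z)$ has provenance $a_j + a_\ell$ (union of two single-tuple relations), and the top join conjoins these, so $\Prv_{(Q_1-Q_2)(D)}(z) = \Prv_{Q_1(D)}(z) = \prod_{e_i=(v_j,v_\ell)\in E} (a_j + a_\ell)$ --- a monotone $2$-CNF over $\{a_1,\dots,a_n\}$ (the negated part drops out because $(z)$ never appears in $Q_2(D')$). Since the tuple of $B$ cannot help produce $(z)$, a witness for $t$ corresponds exactly to a subset $H\subseteq\{a_1,\dots,a_n\}$ satisfying this formula, i.e.\ a set of vertices meeting every edge, and its size is the number of retained tuples. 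I would then prove: $G$ has a vertex cover of size $\leq p$ iff the SWP instance has a witness of size $\leq p$. For ``only if'', keep exactly $\{A_j : v_j\in C\}$ and verify each $T_i$ still outputs $(z)$; for ``if'', read off $C=\{v_j : A_j\text{'s tuple is kept}\}$ from a witness $D'$, using that $(z)\in Q_1(D')$ forces every $T_i(D')$ to be nonempty, hence every edge covered. The reduction is clearly polynomial ($|D|=n+1$, $Q_1$ has $m$ unions and $m-1$ joins), so the hardness is genuinely on the query side, matching the table entry that JU is in \textbf{P} for data complexity. I would also note $Q_1$ is acyclic (all joins on the single attribute $Z$), hence poly-time evaluable, strengthening the statement just as in the PJ case, and that the construction necessarily nests unions \emph{below} joins, which is precisely why it escapes the tractable class JU$^*$.

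The step needing the most care is the exact correspondence between witness size and vertex-cover size: one must argue that an optimal witness is supported entirely on the $A_j$'s and that the top-level join forces \emph{every} clause/edge to be hit (not merely some). Both rest on the equivalence ``$(z)\in Q_1(D')$ iff $T_i(D')\neq\emptyset$ for all $i$'' together with the fact that each base relation carries a single tuple; this is what makes the two optima coincide with no additive offset (unlike the $p+m$ bound in the PJ reduction, where the $m$ singleton relations $S_i$ had to be retained). A secondary point to handle cleanly is the choice of $Q_2$: it must be a genuine JU query, union-compatible with $Q_1$, that never produces $(z)$ on any subinstance --- $A_1\Join B$ works because $z\neq z'$.
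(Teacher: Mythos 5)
Your reduction is correct and essentially identical to the paper's: the paper also reduces from vertex cover using one single-tuple relation $R_i(Z)$ per vertex, per-edge subqueries $q_i = R_j \cup R_\ell$, a top-level natural join $Q_1 = q_1 \Join \cdots \Join q_m$, and an always-empty $Q_2$, with the witness size equal to the cover size with no additive offset. The only cosmetic difference is how $Q_2$ is made empty (the paper uses a theta-join $R_1 \Join_{R_1.Z \neq R_2.Z} R_2$ rather than your extra relation $B$ with a distinct constant), which does not affect the argument.
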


\begin{proof}
We reduce from the vertex cover problem.
\par
\textbf{Construction.~} 
Suppose $V = \{v_1, ..., v_n\}$ and $E = \{e_1,\cdots,$ \\$e_m\}$.
For each vertex $v_i$ in $G$, there is a relation $R_i(Z)$ which consists of a single tuple $(z)$. 
For each edge $e_i = (v_j, v_\ell) \in E$, we construct a query $q_i = R_j \union R_\ell$. Then we construct a query $Q_1 = q_1 \join \cdots \join q_m$, where the join is a natural join on $Z$. We construct $Q_2 = R_1 \join_{R_1.Z \neq R_2.Z} R_2$ (the choice of $R_1, R_2$ is arbitrary). Hence $D = (R_1, \cdots, R_n)$, $Q_1(D) = \{(z)\}$, and $Q_2(D) = \{\}$. The output tuple $(z) \in Q_1(D) \setminus Q_2(D)$, and the goal is to find a witness $D' =$ $(R_1', \cdots,$\\ $R_n')$ for $(z)$ where $R_i' \subseteq R_i$ for all $i \in [1, n]$. 

We show that \emph{there exists a vertex cover $C$ in $G$ of size $\leq p$ if and only if there is a witness $D'$ for $(z)$ of size $\leq p$}. 
\par
\textbf{The ``Only If'' direction.~}   Consider a vertex cover $C$ of $G$ such that $|C| \leq p$. 
If $v_i \in C$, then $R_i' = \{(z)\}$, otherwise $R'_i = \{\}$. Since $C$ is a vertex cover, all edges must be covered.  For an edge $e_i = (v_j, v_\ell)$, suppose wlog. $v_j \in C$. Hence the subquery $q_i= R_{j} \union R_{\ell}$ returns $(z)$ on $D'$. Therefore, $Q_1(D') = (z), Q_2(D') = \{\}$, $(z) \in Q_1(D') \setminus Q_2(D')$, \ie, $D'$ is a witness for $(z)$, and $|D'| = |C| \leq p$.
\par
\textbf{The ``If'' direction.~} Consider any witness $D' = (R'_1, ..., R'_n)$ where $R'_1 \subseteq R_1, ..., R'_n \subseteq R_n$ and $|R'_1|+...+|R'_n| \leq p$, such that $(z) \in Q_1(D') \setminus Q_2(D')$, \ie, $(z) \in Q_1(D')$. Since $R_i$ had only one tuple $(z)$, either $R'_i$ has $(z)$ or it is empty. If tuple $(z) \in R'_i$, then we add vertex $v_i$ to a set $C$. If $(z)$ is in the result of $Q_1(D')$, $(z)$ must be in the result of all subqueries $q_i(D')$ for all $i \in [1, m]$. For $e_i = (v_j, v_{\ell})$,  $q_i(D')$ returns $(z)$ if and only if at least one of $R'_{j}$ and $R'_{\ell}$ is not empty; otherwise $q_i$ returns an empty result and thus $Q_1$ returns an empty result. Therefore, for each edge $e_i \in E$, at least one of its adjacent vertices $v_j$ or $v_\ell$ must exist in $C$.
Hence $C$ is a vertex cover, and $|C| = |D'| \leq p$.
\end{proof}

On the other hand, the following theorem shows that if all unions appear after all joins (which we call JU$^*$ queries), then the SWP can be solved in poly-time in combined complexity.

\begin{Theorem}\label{thm:JU-restricted-poly}
The SWP for two JU$^*$ queries is polynomial time solvable in combined complexity.
\end{Theorem}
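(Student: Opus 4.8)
The plan is to exploit the syntactic shape of a JU$^*$ query. Since all unions appear above all joins, by pushing unions to the top and flattening we may write $Q_1 = J_1 \cup J_2 \cup \cdots \cup J_p$, where each $J_i$ is a union-free join of base relations (with repeated occurrences allowed, to cover self-joins). Both $p$ and the number of relation occurrences in each $J_i$ are polynomial in the query size. The whole argument will then reduce to inspecting the disjuncts $J_i$ one at a time.

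First I would observe that, because $J_i$ contains no projection, testing membership of the \emph{given} tuple $t$ in $J_i(D)$ is cheap and, when it succeeds, pins down a unique candidate witness. Indeed, a tuple $t$ over the output schema of $J_i$ lies in $J_i(D)$ if and only if for every relation occurrence $R$ in $J_i$ the restriction $\pi_{\mathrm{attr}(R)}(t)$ is a tuple of $R$ in $D$; this can be decided in time polynomial in $|D|$ and the size of $J_i$. Whenever it holds, $W_i := \{\, \pi_{\mathrm{attr}(R)}(t) : R \text{ occurs in } J_i \,\}$ is forced (every tuple in it must be present for $t$ to appear), so $W_i$ is the unique minimal witness of $t$ with respect to $J_i$, and $|W_i|$ is at most the number of relation occurrences in $J_i$.

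Next I would argue optimality across disjuncts together with correctness with respect to $Q_1 - Q_2$. For any witness $W \subseteq D$ of $t$ w.r.t.\ $Q_1 - Q_2$ we have $t \in Q_1(W) = \bigcup_i J_i(W)$, hence $t \in J_i(W)$ for some $i$, and then $W \supseteq W_i$ by the forcing argument; so $|W| \ge \min\{|W_i| : t \in J_i(D)\}$ (and such an $i$ exists because $t \in Q_1(D)$). Conversely each such $W_i$ is itself a legal witness: $t \in J_i(W_i) \subseteq Q_1(W_i)$, and since $Q_2$ is monotone and $t \notin Q_2(D)$ we get $Q_2(W_i) \subseteq Q_2(D)$, so $t \notin Q_2(W_i)$; thus $t \in (Q_1 - Q_2)(W_i)$. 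As already noted in the excerpt, only the monotonicity of $Q_2$ and the fact $t \notin Q_2(D)$ (both guaranteed by the \SWP{} input) are used, so it is irrelevant that $Q_2$ is itself JU$^*$. The algorithm therefore iterates over $J_1,\dots,J_p$, tests $t \in J_i(D)$, forms $W_i$, and returns the smallest such $W_i$; this runs in time polynomial in both the query size and $|D|$, i.e.\ in polynomial combined complexity.

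The main obstacle to get right is the contrast with Theorem~\ref{thm:JU-hard}: general JU evaluation is \NPhard{} in combined complexity (and even a single projection-free natural join can have output exponential in the query size). I must be careful that the normal form really does keep every union strictly above the joins, so that each $J_i$ is union-free and the top-level union only ever requires independently checking each disjunct; and that the algorithm never \emph{evaluates} a join but only tests membership of the one fully specified tuple $t$, which for projection-free joins involves no search. I should also double-check the self-join case, confirming that a relation contributing several occurrences to $J_i$ is handled by taking the union of the forced projections, and that mutual consistency of those projections is exactly what the membership test verifies.
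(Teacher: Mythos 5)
Your proposal is correct and takes essentially the same route as the paper: decompose the JU$^*$ query into a top-level union of join-only disjuncts, use the forcing argument (as in the SJ case of Theorem~\ref{thm:sj-poly}) to get the unique minimal witness $W_i$ for each disjunct containing $t$, invoke monotonicity of $Q_2$ to keep $t$ out of $Q_2(W_i)$, and return the smallest $W_i$. Your write-up merely spells out details the paper leaves implicit (the membership test for a fully specified tuple, the cross-disjunct lower bound $|W|\ge\min_i|W_i|$, and why this avoids the combined-complexity hardness of general JU evaluation), so no substantive difference.
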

\begin{proof}
Let $t \in Q_1(D) \setminus Q_2(D)$. 
According to Theorem~\ref{thm:sj-poly}, the SWP for SJ queries is polynomial time solvable in combined complexity. 
Hence, we look for the smallest witness of $t$ in join-only part of $Q_1$, and choose the one with smallest number of tuples. The running time is polynomial in both $n = |D|$ and $k$.
\end{proof}

\vspace{-3mm}
\subsection{Size-Bounded SPJU Queries}\label{sec:SPJU}
Theorem~\ref{thm:SPJU-poly} shows that if the SPJU queries are of bounded size (i.e. considering more standard data complexity), there is a polynomial time algorithm for SWP. We prove this theorem using Proposition~\ref{prop:prov-dnf}, which is intuitive and known (e.g., \cite{DBLP:conf/icdt/RoyPT11}). We use \emph{$m$-DNF} to refer to a DNF where each minterm has at most $m$ literals. 

\begin{proposition}
\label{prop:prov-dnf}
Given an SPJU query $Q$, a database instance $D$, and an output tuple $t \in Q(D)$, the how-provenance of $t$ in $Q(D)$ can be transformed into a $k+1$-DNF in polynomial time when $Q$ is of bounded size, where $k$ is the  number of join operations in $Q$.
\end{proposition}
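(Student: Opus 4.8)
The plan is to prove this by structural induction on the SPJU query $Q$, carrying two invariants at every node of the query tree: writing $Q_v$ for the subquery rooted at node $v$ and $k_v$ for the number of join operators inside $Q_v$, (i) for each $t' \in Q_v(D)$ the how-provenance $\Prv_{Q_v(D)}(t')$ can be written as a $(k_v{+}1)$-DNF, and (ii) the number of minterms in this DNF is at most a polynomial in $|D|$ whose degree is bounded by the (constant) size of $Q_v$. Because SPJU queries are monotone, every provenance expression along the way is a \emph{positive} Boolean formula, so the usual source of DNF blow-up (negations) never arises.

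The base case is a leaf $Q_v = R$: each $t' \in R$ has $\Prv_R(t') = x_{t'}$, a single-literal minterm, and here $k_v = 0$, so this is a $1$-DNF with one minterm. For the inductive step I would handle the operators in turn. \textbf{Selection} $\selection_\theta(Q')$ leaves the provenance of a surviving tuple unchanged and does not change the join count, so both invariants are inherited verbatim. \textbf{Projection} $\projection_{\vec A}(Q')$ gives $\Prv(t) = \bigvee \Prv_{Q'}(t')$ over the (polynomially many, by polynomial data complexity of SPJU evaluation) tuples $t' \in Q'(D)$ with $\projection_{\vec A}(t') = t$; concatenating their minterm lists keeps the minterm width and the join count unchanged and multiplies the minterm count by at most $|Q'(D)|$. \textbf{Union} $Q_1' \union Q_2'$ gives $\Prv(t) = \Prv_{Q_1'}(t) \vee \Prv_{Q_2'}(t)$ (dropping a missing disjunct when $t$ lies in only one side); concatenating minterm lists gives width $\max(k_1{+}1, k_2{+}1) \le (k_1 + k_2) + 1 = k_v + 1$ and adds the counts. \textbf{Join} $Q_1' \join_\theta Q_2'$ gives $\Prv(t) = \bigvee_{(t_1, t_2)} \Prv_{Q_1'}(t_1) \wedge \Prv_{Q_2'}(t_2)$ over the at most $|Q_1'(D)| \cdot |Q_2'(D)|$ matching pairs; distributing a conjunction of a $(k_1{+}1)$-DNF with a $(k_2{+}1)$-DNF over $\vee$ and collapsing repeated variables by idempotence yields minterms of width at most $(k_1{+}1) + (k_2{+}1) = (k_1 + k_2 + 1) + 1 = k_v + 1$, while the minterm count is at most $|Q_1'(D)|\,|Q_2'(D)|\,M_1\,M_2$, where $M_1, M_2$ are the inductive minterm bounds.

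For the running-time and polynomial-size conclusion I would then note that the query tree has a constant number of nodes (bounded query size), that at each node we do only polynomial bookkeeping on polynomial-size DNFs, and that the minterm count is multiplied only at projection and join nodes --- of which there are a constant number --- so the final DNF has $\mathrm{poly}(|D|)$ minterms and is computed in $\mathrm{poly}(|D|)$ time. The step I expect to be the main obstacle is pinning down invariant (ii) rigorously: one must argue that taking a product of polynomially bounded quantities, once per projection and once per join, still yields a genuine polynomial, and this is exactly where the ``bounded size'' hypothesis is load-bearing --- the number of such multiplications is a fixed constant determined by $|Q|$ and independent of $|D|$. I would state this degree bookkeeping explicitly rather than gesture at it, and I would also remark that we need not simplify the DNF (no absorption or deduplication of minterms is required) for the polynomial bounds to hold, which keeps the construction self-evidently polynomial-time.
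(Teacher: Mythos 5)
Your proof is correct and follows essentially the same route as the paper's: a structural induction over the query tree that tracks, per operator, both the minterm width (giving the $k_v{+}1$ bound, with the join case contributing $(k_1{+}1)+(k_2{+}1)=k_v{+}1$) and a polynomial bound on the number of minterms, with bounded query size ensuring only constantly many multiplicative blow-ups. Your explicit remarks on positivity of the provenance (no negations, hence no DNF explosion from difference) and on the degree bookkeeping make the argument slightly more careful than the paper's sketch, but the approach is the same.
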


\cut{
\begin{proof}
\red{(this proof has not been edited and can be omitted if we run out of space.)}
We prove this lemma by induction on $h$, where $h$ is the maximum number of join operations in any path from the root to leaves in the query tree (logical query plan) of $Q$. When $h=0$, there is no relational operator in $Q$, $Q$ returns the input relation. Then the how-provenance of an output tuple $t$ is the identifier of $t$ itself, which is a $1$-DNF of a single clause. Therefore the induction hypothesis holds when $h=0$.

Suppose the induction hypothesis holds for all SPJU queries with query tree of height $ \leq h$, then consider an SPJU query $Q$ whose query tree of height $h+1$.
\cut{We prove the hypothesis by the cases of the relational operator $op$ at the root of $Q$'s query tree.}

\begin{itemize}
\item $op$ is selection, $Q = \sigma_{\theta}(Q')$. The how-provenance of a tuple $t \in Q(D)$ $=\lambda(t)$ remains the same as $\lambda'(t')$ for $t' = t$ in $Q'(D)$ if $t'$ satisfies the selection condition; otherwise $t$ is not in the output of $Q$.
\item $op$ is projection, $Q = \projection_{U}(Q')$. $\lambda(t)$ is the disjunction of how-provenance of tuples in the result of $Q'(D)$ for all tuples $t'$ where $t'[U] = t$. The result of $Q'(D)$ contains at most $m^q$ tuples, where $m$ is the maximum size of input relations in $D$, $q$ is the size of $Q'$. The size of clauses in $\lambda(t)$ in DNF form is the same as the size of clauses in $\lambda'(t')$ for $t' \in Q'(D)$.
\item $op$ is union, $Q = Q'_1 \union Q'_2$. $\lambda(t)$ remains the same as $\lambda_1(t_1)$ for $t_1 \in Q'_1(D)$ or $\lambda_2(t_2)$ for $t_2 \in Q'_2(D)$.
\item $op$ is join, $Q = Q'_1 \join Q'_2$. $\lambda(t)$ is the conjunction of  $\lambda_1(t_1)$ and $\lambda_2(t_2)$ for $t_1 \in Q'_1(D)$, $t_2 \in Q'_2(D)$, and $t_1 = t[U_1], t_2 = t[U_2]$, where $U_1$ and $U_2$ is the schema of $Q'_1(D)$ and $Q'_2(D)$ respectively. Assume that $Q'_1$ has $r$ join operators in its query tree, and $Q'_2$ has $s$ join operators in its query tree, then $lambda_1(t_1)$ is a $(r+1)$-DNF, $lambda_2(t_2)$ is a $(s+1)$-DNF. Expanding the conjunction to obtain a DNF is $O(m^q_1 \times m^q_2) = O(m^q)$, where $m$ is the maximum size of input relations in $D$, $q_1, q_2, q$ are the sizes of $Q'_1, Q'_2, Q'$. Then $lambda(t)$ is a $(r+s+2)$-DNF, and the query tree of $Q$ has $r+s+1$ join operators, therefore the inductive hypothesis holds.
\end{itemize}
\end{proof}
}
\begin{Theorem}
\label{thm:SPJU-poly}
The SWP for two SPJU queries is polynomial-time solvable in data complexity.
\end{Theorem}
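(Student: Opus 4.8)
The plan is to reduce the problem to finding a smallest witness of $t$ with respect to the single query $Q_1$, and then to read such a witness directly off a polynomial-size DNF form of the how-provenance. Without loss of generality assume $t \in Q_1(D) \setminus Q_2(D)$. Since $Q_2$ is an SPJU query it is monotone, so $t \notin Q_2(D')$ for every $D' \subseteq D$; hence $\witset(Q_2 - Q_1, D, t) = \emptyset$, and moreover $t \in (Q_1 - Q_2)(D')$ iff $t \in Q_1(D')$. Thus $\witset(Q_1 - Q_2, D, t) = \witset(Q_1, D, t)$, and it suffices to find a smallest witness of $t$ for $Q_1$ alone --- the same reduction used for SJ and SPU queries (Theorems~\ref{thm:sj-poly} and~\ref{thm:spu-poly}).

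Next I would invoke Proposition~\ref{prop:prov-dnf}: since $Q_1$ has bounded size, the how-provenance $\phi = \Prv_{Q_1(D)}(t)$ can be rewritten, in time polynomial in $|D|$, as a $(k+1)$-DNF $\phi = m_1 \vee \cdots \vee m_r$ with $r$ polynomial in $|D|$, where $k$ is the (constant) number of joins in $Q_1$ and each minterm $m_j$ has at most $k+1$ literals. Because $Q_1$ is monotone, $\phi$ is a monotone Boolean function, so each $m_j$ may be taken to be simply a set $V_j$ of at most $k+1$ tuple variables. The key observation linking minterms to witnesses is: for any $D' \subseteq D$, $t \in Q_1(D')$ iff the assignment that is true exactly on the tuples of $D'$ satisfies $\phi$, iff $V_j \subseteq D'$ for some $j$. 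Consequently (i) letting $D_{m_j}$ be the set of tuples named by the variables in $V_j$, each $D_{m_j}$ is a witness of $t$ of size $|V_j|$; and (ii) every witness $D'$ of $t$ contains some $D_{m_j}$, so $|D'| \ge \min_j |V_j|$.

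From (i) and (ii), a smallest witness of $t$ has size exactly $\min_{j \in [1, r]} |V_j|$ and is realized by $D_{m_{j^*}}$ for any $j^*$ attaining the minimum. This gives the algorithm: compute $\phi$ in DNF form (polynomial time by Proposition~\ref{prop:prov-dnf}), scan its polynomially many minterms, and return the tuple set of a shortest one; all steps run in time polynomial in $|D|$ for fixed query size, which establishes polynomial data complexity. The only non-trivial ingredient is the polynomial-size DNF-ization of the provenance provided by Proposition~\ref{prop:prov-dnf}; once that is in hand, the optimization over minterms is immediate, so no real obstacle remains. This is exactly the point at which the PJ and JU cases differ: there a bounded query can force the provenance DNF to have super-polynomial size, and the problem becomes \NPhard{} in query complexity (Theorems~\ref{thm:PJ-hard} and~\ref{thm:JU-hard}).
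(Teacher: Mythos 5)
Your proof is correct and follows essentially the same route as the paper's: use monotonicity of $Q_2$ to reduce to finding a smallest witness of $t$ for $Q_1$ alone, apply Proposition~\ref{prop:prov-dnf} to obtain a polynomial-size DNF of the provenance, and return the tuple set of a shortest minterm. You spell out the minterm-to-witness correspondence that the paper leaves implicit, but the underlying argument is the same.
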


\begin{proof}
Let $t$ be an output tuple in $Q_1(D) \setminus Q_2(D)$. Since $Q_2$ is monotone, $t \notin Q_2(D')$ for any $D' \subseteq D$. According to Proposition~\ref{prop:prov-dnf}, we can compute the how-provenance $\Prv_{(Q_1-Q_2)(D)}(t)$ in DNF in poly-time in data complexity. Then we scan the DNF to find the minterm with least number of literals, and this minterm represents the smallest witness for $t$ in $Q_1(D) - Q_2(D)$. The literals in this clause are the identifiers of tuples in the smallest witness.
\end{proof}
For instance, if $\Prv(t) = a + bc$, then $a$ forms the smallest witness.

\vspace{-3mm}
\subsection{Queries Involving Difference}


Before discussing general SPJUD queries, let's focus on one special class of SPJUD queries where all differences appear after all SPJU operators (which we call SPJUD$^*$ queries). More formally, we define this class using formal grammar: $Q \rightarrow q^+ | Q - Q$, where $q^+$ is a terminal that represents SPJU queries. For instance, queries $Q_1$ and $Q_2$ in Example~\ref{example:intro1} are SPJUD$^*$ queries. The following theorem shows that the SWP can be solved in poly-time for SPJUD$^*$ queries.
\begin{Theorem}\label{thm:SPJUD-restricted-poly}
The SWP for two SPJUD$^*$ queries is polynomial-time solvable in data complexity.
\end{Theorem}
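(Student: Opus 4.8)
The plan is to reduce the smallest witness problem for SPJUD$^*$ queries to a min-ones problem over a Boolean function on a \emph{constant} number of variables, which can then be handled by exhaustive search; this generalizes the argument behind Theorem~\ref{thm:SPJU-poly}. First I would fold the pair into a single query $Q = Q_1 - Q_2$, which is again SPJUD$^*$; since $t \in Q_1(D)\setminus Q_2(D)$, the task is to find a smallest $D'\subseteq D$ with $t\in Q(D')$. By the grammar $Q\to q^+\mid Q-Q$, $Q$ is a \emph{difference tree} whose internal nodes are set differences and whose leaves are SPJU subqueries $q_1,\dots,q_\ell$, and in data complexity $\ell$ is a constant. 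For $D'\subseteq D$ write $b_i = [\,t\in q_i(D')\,]$; propagating up the tree ($q_i-q_j$ contributes $b_i\wedge\neg b_j$) yields a fixed Boolean function $f:\{0,1\}^\ell\to\{0,1\}$ with $t\in Q(D')\iff f(b_1,\dots,b_\ell)=1$. Since every $q_i$ is monotone, each $b_i$ is a monotone function of the set of tuples retained from $D$ in $D'$; this monotonicity is what the whole argument hinges on, and it is exactly what the restriction ``all differences above all SPJU operators'' secures.

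Second, I would materialize each $b_i$ as a provenance formula on tuple identifiers. If $t\notin q_i(D)$, monotonicity gives $t\notin q_i(D')$ for every $D'\subseteq D$, so $b_i\equiv\mathrm{false}$. Otherwise, by Proposition~\ref{prop:prov-dnf} the how-provenance $\phi_i:=\Prv_{q_i(D)}(t)$ can be computed, in time polynomial in $|D|$, as a DNF with polynomially many minterms, each a conjunction of at most $k_i+1$ tuple identifiers, where $k_i$ is the (constant) number of joins in $q_i$; and $b_i(D')=1$ iff $D'$ contains every tuple of some minterm of $\phi_i$.

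Third comes the algorithm: enumerate all $a\in\{0,1\}^\ell$ with $f(a)=1$ (at most $2^\ell$ of them), and for each put $P=\{i:a_i=1\}$, $N=\{i:a_i=0\}$, discarding $a$ if some $i\in P$ has $t\notin q_i(D)$. For every way of choosing one minterm $M_i$ of $\phi_i$ for each $i\in P$ — polynomially many choices, since $|P|\le\ell$ is a constant — form $T=\bigcup_{i\in P}M_i$ (a set of $O(1)$ tuples), let $\bar T\subseteq D$ be the closure of $T$ under foreign-key references within $D$, and verify that $\phi_j(\bar T)=0$ for all $j\in N$ (the requirements $\phi_i(\bar T)=1$ for $i\in P$ hold automatically because $M_i\subseteq T\subseteq\bar T$). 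Any $\bar T$ passing this test is a legal counterexample: keys, functional dependencies and \emph{not null} are inherited from $D$ (Section~\ref{sec:counterexample}), foreign keys hold by construction of $\bar T$, and the leaf-pattern of $\bar T$ is exactly $a$, so $t\in Q(\bar T)$. Output the smallest $\bar T$ found. Because $\ell$ and the $k_i$ are constants, every enumeration has polynomial size and each candidate is processed in polynomial time, so the procedure runs in time polynomial in $|D|$.

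Finally, for correctness I would show that no smaller witness is overlooked. Let $D'$ be any witness and let $a^\ast=(\phi_1(D'),\dots,\phi_\ell(D'))$ be the leaf-pattern it induces, so $f(a^\ast)=1$. For each $i$ with $a^\ast_i=1$, $D'$ contains all tuples of some minterm $M_i$ of $\phi_i$; for that choice $T^\ast=\bigcup M_i\subseteq D'$, and since $D'$ already satisfies the foreign-key constraints, $\bar{T^\ast}\subseteq D'$ as well. For each $j$ with $a^\ast_j=0$, monotonicity of $\phi_j$ gives $\phi_j(\bar{T^\ast})\le\phi_j(D')=0$; hence $\bar{T^\ast}$ is one of the candidates the algorithm tests, it passes every negative check, and $|\bar{T^\ast}|\le|D'|$. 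Therefore the minimum candidate size equals the size of the smallest witness. The one genuinely delicate step is exactly this last one: one must guarantee that collapsing a witness to the union of its ``used'' minterms and then re-adding only the forced foreign-key tuples never flips a leaf from false to true — which is precisely what monotonicity of the SPJU leaves buys, and which fails once a difference is allowed to sit below an SPJU operator.
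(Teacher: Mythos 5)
Your proposal is correct and follows essentially the same route as the paper's proof: decompose $Q_1-Q_2$ into its SPJU leaves, enumerate for each leaf the polynomially many minimal witnesses of $t$ (your DNF minterms via Proposition~\ref{prop:prov-dnf}), form all unions across leaves, verify each candidate, and use monotonicity of the leaves to argue that any witness can be shrunk to one of the enumerated candidates. Your explicit leaf-pattern function $f$ and the foreign-key closure step are cleaner packaging of the paper's ``pick one element from every $w_i \cup \{\emptyset\}$'' enumeration, but not a different argument.
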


\begin{proof}
Let $t$ be an output tuple in $Q_1(D) \setminus Q_2(D)$. Since $Q_1$ and $Q_2$ are SPJUD$^*$ queries that can be written as nested differences of queries like $q_1 - q_2 - (q_3 - (q_4 - q_5)) - ...$, where all $q_i$-s are SPJU queries, $Q_1 - Q_2$ is also an SPJUD$^*$ query. The output tuple $t$ must be either in or not in the result of each $q_i$. We find the smallest witness by enumerating the minimal witnesses of $t$ w.r.t. every $q_i$ and $D$. If $t$ is in the result of $q_i(D)$, let $w_i$ be the set of minimal witnesses of $t$ w.r.t. $q_i$ and $D$. Then we pick one element from every $w_i \union \{\emptyset\}$, and construct $w$ as the union of all witnesses or the empty set we picked. We evaluate $Q_1$ and $Q_2$ on $w$ to see whether it is a witness for $t$, and record the $w$ of the smallest size. We finish this procedure until we enumerate all combinations.



This procedure will return the smallest witness because: (i) if $t \notin q_i(D)$, $t$ will also not be in $q_i(w)$ for any $w \subseteq D$ due to monotonicity, so we don't need to consider such $q_i$-s; (ii) Assume that $w'$ is a smallest witness of $t$ w.r.t. $Q_1-Q_2$ and $D$, for all $q_i$ where $t \in q_i(w')$, $w'$ must be a superset of a minimal witness of $t$ w.r.t. $q_i$ and $D$. Hence $w'$ must be the union of minimal witnesses of $t$ w.r.t. these $q_i$-s and $D$; otherwise, if $w'$ is a strict superset of the union of minimal witnesses of $t$, we can always remove tuples not belong to any minimal witness of $t$ w.r.t. $q_i$-s and $D$ from $w'$, without affecting $t$ to be in or not in any $q_i$, which contradicts the assumption that $w'$ is a smallest witness. Therefore a smallest witness of $t$ w.r.t. $Q_1 - Q_2$ and $D$ must be union of minimal witness of $t$ w.r.t. $q_i$ and $D$, and thus it must be enumerated during the enumeration procedure.

The time complexity of entire enumeration process is $O(\Pi_{i} m^{k_i}) = O(m^{kd})$, where d is the number of difference operators, m is the max size of relations, k is the max complexity of each SPJU query $q_i$. When queries are of bounded sizes, i.e., fix d and k, the SWP for two SPJUD queries that can be written as nested differences of SPJU queries is polynomial-time solvable. 
\end{proof}



SWP is NP-hard in general even for bounded-size queries.
\begin{Theorem}\label{thm:SPJUD-hard}
The SWP for two SPJUD queries $Q_1$ and $Q_2$ is \NPhard{} in data complexity.
\end{Theorem}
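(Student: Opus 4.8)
The plan is to prove this by a reduction from \NPcomplete{} \textsc{Vertex Cover}, but -- since we need hardness in \emph{data} complexity -- the graph must be pushed into the database instance while $Q_1,Q_2$ stay of bounded size, the extra power coming from difference. Given $G=(V,E)$ with edges $e_1,\dots,e_m$ ($m\ge 1$) and a bound $p$, I build a two-relation instance $D$: $\mathsf{Vertex}(\mathit{vid})$ with one tuple per vertex, and $\mathsf{Edge}(\mathit{eid},\mathit{next},v_1,v_2)$ with one tuple per edge, the tuple for $e_i=(a,b)$ being $(i,\;(i\bmod m)+1,\;a,\;b)$. The crucial ingredient is the referential constraint $\mathsf{Edge}.\mathit{next}\to\mathsf{Edge}.\mathit{eid}$: because foreign keys are \emph{not} closed under subinstances, in any valid $D'$ the retained $\mathsf{Edge}$-tuples are closed under following $\mathit{next}$, and since the $\mathit{next}$-values form a single $m$-cycle on $\{1,\dots,m\}$, $\mathsf{Edge}'$ is forced to be either empty or all of $\mathsf{Edge}$.

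Next I fix the queries (independent of $G$). With $t=()$ the nullary tuple, set $Q_1=\pi_{()}(\mathsf{Edge})$, which returns $\{()\}$ iff $\mathsf{Edge}'\ne\emptyset$; let $\mathsf{Covered}=\pi_{\mathit{eid}}\big(\mathsf{Edge}\Join_{\,v_1=\mathit{vid}\,\vee\,v_2=\mathit{vid}}\mathsf{Vertex}\big)$ be the ids of retained edges having a retained endpoint, $\mathsf{Uncov}=\pi_{\mathit{eid}}(\mathsf{Edge})-\mathsf{Covered}$, and $Q_2=\pi_{()}(\mathsf{Uncov})$, which returns $\{()\}$ iff some retained edge has no retained endpoint. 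Then $Q_1-Q_2$ is a bounded-size query built only from projection, join, and difference (the class PJD; it contains two differences, the inner one sitting below a projection, so it is \emph{not} an SPJUD$^*$ query and Theorem~\ref{thm:SPJUD-restricted-poly} does not apply), and $()\in(Q_1-Q_2)(D')$ holds exactly when $\mathsf{Edge}'\ne\emptyset$ and every retained edge is covered by some retained vertex. On the full instance $D$ every edge is covered by $V$, so $Q_1(D)=\{()\}$, $Q_2(D)=\emptyset$, hence $()\in Q_1(D)\setminus Q_2(D)$ as the \SWP\ setup requires; one also checks $\witset(Q_2-Q_1,D,())=\emptyset$, so the ``$Q_2-Q_1$'' side contributes no spurious witnesses.

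Putting the pieces together: any valid witness $D'$ for $()$ has $\mathsf{Edge}'\ne\emptyset$, hence $\mathsf{Edge}'=\mathsf{Edge}$ by the foreign key, hence $\{\mathit{vid}:\mathit{vid}\in\mathsf{Vertex}'\}$ is a vertex cover of $G$; conversely any vertex cover $C$ yields the valid witness $(\mathsf{Edge},\,\{(\mathit{vid}):\mathit{vid}\in C\})$. So the minimum witness size is exactly $m+\tau(G)$, where $\tau(G)$ is the vertex-cover number, and $G$ has a cover of size $\le p$ iff \SWP\ admits a witness of size $\le m+p$; since a witness is a polynomial-size certificate verifiable in polynomial time by evaluating $Q_1-Q_2$ and checking $\constraints$, this shows \SWP\ is \NPhard{} in data complexity. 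I expect the main obstacle to be justifying the \emph{lower} bound on witness size, i.e.\ ruling out cheaper witnesses: without the referential constraint the provenance of $()$ is $\big(\bigvee_e y_e\big)\wedge\bigwedge_{e=(a,b)}\big(y_e\to(x_a\vee x_b)\big)$, whose minimum satisfying assignment has only two true variables regardless of $G$, so the argument must lean precisely on the fact that the (non-subinstance-closed) foreign key destroys those degenerate witnesses and pins the optimum at $m+\tau(G)$.
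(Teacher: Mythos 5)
There is a genuine gap: your reduction's correctness rests entirely on enforcing the referential constraint $\mathsf{Edge}.\mathit{next}\to\mathsf{Edge}.\mathit{eid}$ on the witness, but the smallest witness problem as defined in the paper places no integrity constraints on witnesses at all. A witness for $t$ w.r.t.\ $Q$ and $D$ is \emph{any} subinstance $D'\subseteq D$ with $t\in Q(D')$, and Definition~\ref{def:SWP} minimizes over $\witset(Q_1-Q_2,D,t)\cup\witset(Q_2-Q_1,D,t)$ with no mention of $\constraints$ (constraints enter only in the counterexample problem, and even there the point of Theorem~\ref{thm:SPJUD-hard} is to attribute the hardness to the PJD query class, not to the constraints). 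You yourself observe that once the foreign key is dropped, the provenance of $()$ is $(\bigvee_e y_e)\wedge\bigwedge_{e=(a,b)}(y_e\to x_a\vee x_b)$ and the optimum collapses to two tuples independently of $G$; that observation is precisely the refutation of the reduction for the problem as stated. Even granting a variant of \SWP{} that carries $\constraints$, a foreign key whose target is a key induces a \emph{unique} closure of any retained tuple set, computable in polynomial time, so the hardness you exhibit is really "hardness given a constraint that forces all-or-nothing retention of $\mathsf{Edge}$" rather than hardness of SPJUD queries.

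The repair is to make the queries themselves, via the difference operator, do the forcing that you delegated to the foreign key --- and your cyclic $\mathit{next}$ pointer is already the right gadget for this. Concretely, the paper keeps $S$ with tuples $(e_i, e_{(i\bmod m)+1}, z)$ and sets $Q_2=\projection_{Z}\bigl(\projection_{B,Z}(S)-\projection_{C,Z}(S\join_{\theta}R)\bigr)$, where $\theta$ matches $S.C$ against the adjacency columns of the vertex relation $R$. Then $(e_i,z)$ appears on the left of the inner difference whenever the tuple with $B=e_i$ is retained, and it can only be cancelled by retaining the \emph{predecessor} tuple (the one with $C=e_i$) together with a vertex tuple covering $e_i$. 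Retaining any single $S$-tuple therefore cascades backwards around the cycle, forcing $S'=S$ and forcing a covering vertex for every edge --- exactly the $m+\tau(G)$ optimum you wanted, but now enforced purely by the SPJUD semantics. Aside from this, your bookkeeping (the choice of the distinguished tuple, the check that $\witset(Q_2-Q_1,D,())=\emptyset$, and the $p\mapsto p+m$ translation) matches the paper's and is fine.
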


\begin{proof}
  We again give a reduction from the vertex cover problem with vertex degree at most 3 (see Theorem~\ref{thm:PJ-hard}).
\par
\textbf{Construction.~} 
  Suppose in $G = (V, E)$, $V = \{v_1, ..., v_n\}$, $E = \{e_1, \cdots, e_m\}$. We construct two relations $R(A, Z, E_1, E_2, E_3)$ and $S(B, C, Z)$. For each vertex $v_i \in V$, $R$ contains a tuple $t_i  = (v_i, e_{i1}, e_{i2}, e_{i3}, z)$, where $e_{i_1}, e_{i_2}, e_{i_3}$ are the identifiers of edges adjacent to $v_i$, $i_1 < i_2 < i_3$. If the degree of $v_i$ is less than 3, the identifiers are replaced by a null symbol ``$*$''. Here $z$ is a constant. For each edge $e_i \in E$, $S$ contains a tuple $(e_i, e_{(i \% m) + 1}, z, z)$, where $e_{(i \% m) + 1}$ is the identifier of the next edge in the edge list (the next edge of $e_m$ is $e_1$). Let $D = (R, S)$ be the database instance.

  Next, we construct an SPJUD query that consists of several subqueries as follows:
  Let $q_1$ (on $S$) = $\projection_{Z}(S)$; 
  $q_2$ (on $S$) $=\projection_{B, Z}(S)$; 
  $q_3$ (on $R, S$)=$\projection_{S.C, S.Z}(S \join_{S.C = E_1 \lor S.C = E_2 \lor S.C = E_3} R)$.
  Then we construct $Q_1 = q_1$, hence $Q_1(D) = \{(z)\}$. We also construct $Q_2 = \projection_{Z}(q_2 \setminus q_3)$ (assume $C$ in $q_3$ is renamed to $B$). For edge $e_i = (v_j, v_{\ell})$, the edge $e_i$ appears for both  tuples $t_j, t_\ell$ (in one of $E_1, E_2, E_3$ attributes), and therefore, $(e_i, z)$ appears in the result of $q_3(D)$ for every $i \in [1, m]$. Hence $q_3(D) = \projection_{B, Z}(S)$. So $q_2(D) \setminus q_3(D) = \emptyset$. Then $(Q_1-Q_2)(D) = \{(z)\}$, and the goal is to find the smallest witness for $(z)$.
For the vertex cover instance in Figure~\ref{fig:ThmPJ-VC}, $R$ will be as given in Figures~\ref{fig:ThmPJ-R}, and $S$ will contain tuples $\{(e_1, e_2, z), (e_2, e_3, z), \cdots (e_7, e_1, z)\}$. 

\cut{
\begin{figure}[t]
{\scriptsize
\begin{tabular}{|c|c|c|} \hline
Queries &  & Result over $R, S$ \\ \hline
$q_1$ & $\projection_{Z}(S)$ & $(z)$ \\ \hline
$q_2$ & $\projection_{B, Z}(S)$ & $(e_1, z), (e_2, z)$, \\
& & $\cdots, (e_7, z)$ \\ \hline
$q_3$ & $\projection_{S.C, S.Z}(S \join_{\phi} R)$ & $(e_2, z), ..., $\\ 
& $\phi: (S.C = E_1 \lor S.C = E_2 \lor S.C = E_3)$ & $(e_7, z), (e_1, z)$\\\hline
$Q_1$ & $q_1$& $(z)$\\ \hline
$Q_2$ & $\projection_{Z}(q_2 \setminus q_3)$ & $\emptyset$\\
\hline
\end{tabular}
}

\caption{Queries in the reduction in Theorem~\ref{thm:SPJUD-hard}}
\label{fig:SPJUD}
\end{figure}
}

  We now show that \emph{there exists  a vertex cover $C$ of size at most $p$ in the graph $G$ if and only if there is a witness $D' = (R', S')$ where $|R'| + |S'| \leq p+m$}.
  \par
\textbf{The ``Only If'' direction.~}   
  Suppose we are given a vertex cover $C$ of $G$ with at most $k$ vertices. Construct $R' \{t_i ~|~ v_i \in C\}$, and $S' = S$. $Q_1(D) = Q_1(D') = \{(z)\}$ since $S$ is unchanged. Similarly, $q_2(D') = \projection_{B, Z}(S))$ is unchanged. Since $C$ is a vertex cover, for every edge $e_i = (v_j, v_{\ell})$ either $t_i$ or $t_{\ell}$ is in $R'$,  and hence $q_3(D') = q_3(D)$, \ie, each $(e_i, z)$, $i \in [1, m]$ appears in $q_3(D')$. Hence $Q_1 - Q_2$ will output tuple $(z)$ on $D'$, $|R'|  = |C| \leq p, |S'| = |S| = m$, and we get a  witness of at most $p + m$ tuples.
\par
  \textbf{The ``If'' direction.~} Consider any witness $D' = (R', S')$ where $R' \subseteq R, S' \subseteq S, |R'|+|S'| \leq p+m$, such that $(z) \in Q_1(D') \setminus Q_2(D')$. We construct $C = \{v_i ~|~ t_i \in R'\}$. Since $(z)$ is in $Q_1(D') \setminus Q_2(D')$, $(z)$ must be in the result of $q_1(S')$, and not in the result of $q_2(S') - q_3(R', S')$, hence $S'$ must contain at least one tuple. Therefore, $q_2(S')$ outputs at least one tuple $(e_i, z)$ since $S'$ is not empty. In turn, $q_3(R', S')$ must output all tuples in $q_2(S')$ to make $q_2(S') - q_3(R',S')$ empty. 
(a) We argue that $S'  = S$. 
Suppose $S'$ contains at least one tuple, say wlog, $(e_1, e_2, z)$. Then to remove $(e_1, z)$ from $q_2(S') \setminus q_3(R', S')$, $q_3(R', S')$ must contain $(e_1, z)$, which can generate only from $S(e_m, e_1, z)$. Hence $(e_m, e_1, z) \in S'$. In turn, $(e_m, z) \in q_2(S')$. To remove it, we need $S(e_{m-1}, e_m, z)$ in $S'$. Continuing this argument (by induction), we get $S = S'$. 
(b) Consider any tuple, say wlog.,  $(e_1, e_2, z)$ in $S'$. Then to remove $(e_1, z)$ from $q_2(S') \setminus q_3(R', S')$, 
not only the tuple $(e_m, e_1, z) \in S'$, it also has to satisfy the join condition with $R$. This will hold only if for one of the end points $v_j, v_\ell$ of $e_1 = (v_j, v_\ell)$, $t_j \in R'$ or $t_{\ell} \in R'$. This should hold for all edges, and therefore the set $C$ we constructed is a vertex cover. Since $|S'| = |S| = m$, $|R'| = |C| \leq p$, therefore, we get a vertex cover in $G$ of size at most $p$.

 The queries we constructed during the reduction are all of bounded size, therefore the SWP for two  SPJUD queries is NP-hard in data complexity even for queries of bounded size.
\end{proof}

\cut{

\begin{figure*}[!h]
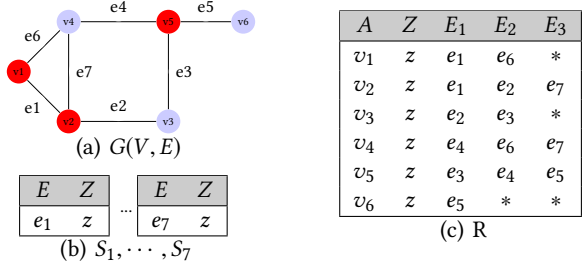



\begin{minipage}{0.98\linewidth}\centering
\begin{minipage}{0.30\linewidth}\centering
\begin{minipage}[t]{0.05\linewidth}\centering
{\small
  \textbf{$R_1$}\\[1mm]
\begin{tabular}{|c|} \hline
\rowcolor{grey} $Z$ \\ \hline
$z$\\
\hline
\end{tabular}
}
\end{minipage}
\hfil
\begin{minipage}[t]{0.05\linewidth}\centering
{\small
  \textbf{$R_2$}\\[1mm]
\begin{tabular}{|c|} \hline
\rowcolor{grey} $Z$ \\ \hline
$z$\\
\hline
\end{tabular}
}
\end{minipage}
\hfil
\begin{minipage}[t]{0.05\linewidth}\centering
{\small
  \textbf{$R_3$}\\[1mm]
\begin{tabular}{|c|} \hline
\rowcolor{grey} $Z$ \\ \hline
$z$\\
\hline
\end{tabular}
}
\end{minipage}
\hfil
\begin{minipage}[t]{0.05\linewidth}\centering
{\small
  \textbf{$R_4$}\\[1mm]
\begin{tabular}{|c|} \hline
\rowcolor{grey} $Z$ \\ \hline
$z$\\
\hline
\end{tabular}
}
\end{minipage}
\hfil
\begin{minipage}[t]{0.05\linewidth}\centering
{\small
  \textbf{$R_5$}\\[1mm]
\begin{tabular}{|c|} \hline
\rowcolor{grey} $Z$ \\ \hline
$z$\\
\hline
\end{tabular}
}
\end{minipage}
\hfil
\begin{minipage}[t]{0.05\linewidth} \centering
{\small
\textbf{$R_6$}\\[1mm]
\begin{tabular}{|c|} \hline
\rowcolor{grey} $Z$ \\ \hline
$z$ \\
\hline
\end{tabular}
}
\end{minipage}
\end{minipage}
\hfill
\begin{minipage}{0.65\linewidth} \centering
{\small\upshape
\begin{tabular}{|c|c|c|} \hline
Queries &  & Result over $R_1, ..., R_n$ \\ \hline
$q_1$ & $R_{1} \union R_{2} $ & $(z)$\\ \hline
$q_2$ & $R_{2} \union R_{3}$  & $(z)$\\ \hline
... & & \\ \hline
$q_7$ & $R_{2} \union R_{4}$ & $(z)$\\ \hline
$Q_1$ & $q_1(R_{1}, R_{2}) \join q_2(R_{2}, R_{3}) ... \join q_7(R_{2}, R_{4})$ & $(z)$\\
\hline
\end{tabular}
}
\label{tab:table-3-vertex-cover-reduction-queries-ju}
\end{minipage}
\caption{Example Reduction in Theorem~\ref{theo:swp-is-hard-ju}}
\label{example:encode-3-vertex-cover-ju}
\end{minipage}
\end{figure*}
}

\cut{
\begin{figure*}[!h]


\centering
\begin{minipage}{0.98\linewidth}
\begin{minipage}{0.3\linewidth}\centering
\begin{minipage}[t]{0.6\linewidth}
{\small
  \textbf{R}\\[1mm]
\begin{tabular}{|ccccc|} \hline
\rowcolor{grey} $A$ & $Z$ & $E_1$ & $E_2$ & $E_3$ \\ \hline
$v_1$ & $z$ & $e_1$ & $e_6$ &  $*$ \\
$v_2$ & $z$ & $e_1$ & $e_2$ & $e_7$ \\
$v_3$ & $z$ & $e_2$ & $e_3$ & $*$ \\
$v_4$ & $z$ & $e_4$ & $e_6$ & $e_7$ \\
$v_5$ & $z$ & $e_3$ & $e_4$ & $e_5$ \\
$v_6$ & $z$ & $e_5$ & $*$ & $*$\\
\hline
\end{tabular}
}
\label{tab:table-3-vertex-cover-table-r-spjud}

\end{minipage}
\hfil
\begin{minipage}[t]{0.3\linewidth}
{\small
\textbf{S}\\[1mm]
\begin{tabular}{|ccc|} \hline
\rowcolor{grey} $B$ & $C$ & $Z$ \\ \hline
$e_1$ & $e_2$ & $z$ \\
$e_2$ & $e_3$ & $z$ \\
$e_3$ & $e_4$ & $z$ \\
$e_4$ & $e_5$ & $z$ \\
$e_5$ & $e_6$ & $z$ \\
$e_6$ & $e_7$ & $z$ \\
$e_7$ & $e_1$ & $z$ \\
\hline
\end{tabular}
}
 \label{tab:table-3-vertex-cover-table-s-spjud}
\end{minipage}
\end{minipage}
\hfill
\begin{minipage}{0.65\linewidth}\centering
{\small\upshape
\begin{tabular}{|c|c|c|} \hline
Queries &  & Result over $R, S$ \\ \hline
$q_1$ & $\projection_{Z}(S)$ & $(z)$ \\ \hline
$q_2$ & $\projection_{B, Z}(S)$ & $(e_1, z), (e_2, z), ..., (e_7, z)$ \\ \hline
$q_3$ & $\projection_{S.C, S.Z}(S \join_{S.C = E_1 \lor S.C = E_2 \lor S.C = E_3} R)$ & $(e_2, z), ..., (e_7, z), (e_1, z)$\\ \hline
$Q_1$ & $q_1(S)$& $(z)$\\ \hline
$Q_2$ & $\projection_{Z}(q_2(S) \setminus q_3(R, S))$ & $\emptyset$\\
\hline
\end{tabular}
}
\label{tab:table-3-vertex-cover-reduction-queries-spjud}
\end{minipage}
\end{minipage}

\caption{Example Reduction in Theorem~\ref{theo:swp-is-hard}}
\label{example:encode-3-vertex-cover-spjud}
\end{figure*}
}


\end{appendix}

\end{document}